\newcommand{\mm}{m}
\def \Rbrack {[\![}
\def \Lbrack {]\!]}
\def \Rbrack {[\![}
\def \Lbrack {]\!]}
\newcommand{\is}{\centerdot}
\title{Non-Arbitrage under a Class of Honest Times}
\author{  Anna Aksamit\and Tahir Choulli\and Jun Deng\and Monique Jeanblanc
}
\institute{
 Anna  Aksamit\at
 Mathematical Institute, University of Oxford,\\
Oxford, United Kingdom\\
Monique Jeanblanc\at
Laboratoire de Math\'ematiques et Modelisation d'\'Evry (LaMME), Universit\'e d'Evry  Val
d'Essonne, UMR CNRS 8071, France\\
Tahir Choulli (corresponding author)\at
Mathematical and Statistical Sciences Depart.,
University of Alberta, Edmonton, Canada \\
              \email{tchoulli@ualberta.ca}           \\
Jun Deng \at
 School of Banking and Finance,\\
  University of International Business and Economics, Beijing, China}
\date{Version of April 1st, 2016}
\begin{document}
\titlerunning{Non-Arbitrage and Honest Times}
\maketitle

\begin{abstract}
This paper quantifies the interplay between the non-arbitrage
notion of No-Unbounded-Profit-with-Bounded-Risk (NUPBR hereafter)
and additional information generated by a random time. This study
complements the one of Aksamit/Choulli/Deng/Jeanblanc
\cite{aksamit/choulli/deng/jeanblanc} in which the authors studied
similar topics for the case of stopping at the random time
instead, while herein we are
concerned with the part after the occurrence of the random time.
Given that all the literature ---up to our knowledge--- proves
that the NUPBR notion is always violated after honest times
that avoid stopping times in a continuous filtration,  herein we
propose a {\it new class of honest times} for which the NUPBR
notion can be preserved for some models. For this family of
honest times, we elaborate two principal results. The first main
result characterizes the pairs of initial market and honest time
for which the resulting model preserves the NUPBR property, while
the second main result characterizes the honest times that
preserve the NUPBR property for any quasi-left continuous model. Furthermore, we construct
explicitly ``the-after-$\tau$" local martingale deflators for a
large class of initial models (i.e. models in the small
filtration) that are already risk-neutralized.
\end{abstract}


\newpage
\section{Introduction}
This paper complements the study undertaken in
\cite{aksamit/choulli/deng/jeanblanc} about quantifying the exact interplay between an extra information/uncertainty
and arbitrage for quasi-left-continuous models\footnote{A quasi-left-continuous model/process
is a process that does not jump on predictable stopping times}. Similarly as in \cite{aksamit/choulli/deng/jeanblanc}, we focus on the
non-arbitrage concept of No-Unbounded-Profit-with-Bounded-Risk
(NUPBR hereafter), and the extra information is the time of the occurrence of a random time, when it occurs. It is clearly stated
 in \cite{Choulli/Deng/Ma} (see \cite{aksamit/choulli/deng/jeanblanc}) that when the NUPBR is violated, none of the existing method for pricing
 and optimisation problems works. {\bf Throughout the paper, arbitrages means Unbounded-Profit-with-Bounded-Risk strategies}.
\subsection{What are the Main Goals and the Related Literature?} Throughout the paper, we consider given a stochastic
 basis $(\Omega, {\cal G},  {\mathbb  F },  P)$,  where $ {\cal F}_\infty:=\vee_{t\geq 0}{\cal F}_t
 \subseteq {\cal {G}}$, and the filtration ${\mathbb  F }:=({\cal F}_t)_{t\geq 0}$ satisfies the usual hypotheses
  (i.e. right continuity and completeness) and models the
flow of ``public" information that all agents receive through time. The initial
financial market is defined on this basis and is represented by a {\bf $d$-dimensional
semimartingale} $S$ and a riskless asset, with null interest rate.
In addition to this initial model,
 we consider a fixed random time (a non-negative random variable)
  denoted by $\tau$. This random time can represent the death time of an insurer, the default time of a firm,
   or any occurrence time of an influential event that can impact the market somehow. In this setting,
    our aim lies in answering the following.
 $$\mbox{If}\ (\Omega,{\mathbb  F },
S)\ \mbox{ is arbitrage-free, then what can be said about}\ (\Omega, {\mathbb  F },  S,\tau)?$$
 After modeling the {\it new informational system}, this question translates
into whether $(\Omega, {\mathbb  G },  S)$ has arbitrages or not. Here $\mathbb G$, that will be specified
mathematically in the next section, is the new flow of information
that incorporates the flow $\mathbb F$ and $\tau$, as soon as it
occurs, and  makes $\tau$ a $\mathbb G$-stopping time.
Thanks to \cite{Takaoka} (see also \cite{ChoulliStricker96} for
the continuous case and \cite{kardaras12} for the one dimensional
case), one can easily prove that $(\Omega, {\mathbb  G },  S)$
satisfies the NUPBR  condition if and only if both models
$(\Omega, {\mathbb G },  S^{\tau})$ and $(\Omega, {\mathbb  G },
S-S^{\tau})$ fulfill the NUPBR condition. In \cite{aksamit/choulli/deng/jeanblanc}, the authors focused on
$(\Omega, {\mathbb  G}, S^{\tau})$, while the second part
$(\Omega, {\mathbb  G},  S-S^{\tau})$ constitutes the main
objective of this paper. As it will be mathematically specified later,
the NUPBR notion consists, roughly speaking, of ``controlling" in
some sense the gain processes that are bounded uniformly in time and randomness from below by one.
 Mathematically speaking, these processes are stochastic
integrals with respect to the asset's price process. Thus, due to
the Dellacherie-Mokobodski criterion, the first obstacle in
investigating the NUPBR for $(\Omega, {\mathbb  G},
S-S^{\tau})$, lies in whether this model is an integrator for ``admissible" but complex (not only
buy-and-hold) financial strategies or not. This boils down to the model fulfilling the semimartingale property (see Theorem 80 in
\cite{dm2} page 401). Thus, the ``honest" assumption on $\tau$ guarantees the preservation of the semimartingale property after $\tau$.
It is known that (see \cite[Th\'eor\`eme 4.14]{Jeu}), in contrast to $({\mathbb  G },  S^{\tau})$,
 the semimartingale structures might fail for $({\mathbb  G },  S-S^{\tau})$
  when $\tau$ is arbitrary general. Therefore, for the rest of the paper,
   $\tau$ is assumed to be honest, a fact that will be mathematically defined in the next section. \\

\noindent Recently, in \cite{Imk} and \cite{fjs}, it is proved when honest times avoid stopping times and the filtration is Brownian that
 the NUPBR  property fails for $(S-S^{\tau},\mathbb G)$. Thus, our first goal is to answer the following
\begin{eqnarray}\label{quesion1}\mbox{Is there any $\tau$ for which  NUPBR  is preserved for some markets?}\end{eqnarray}
We answer this question positively, and we focus afterwards on quantifying the interplay between
$\tau$ and the initial market model that is responsible for arbitrages after $\tau$.
 This can be achieved, in our view, by finding a functional ${\cal K}$, that can be observed using the public information only, such that
 \begin{equation}\label{question2}
 \left({\cal K}(S),\mathbb F\right)\ \mbox{is arbitrage-free if and only if}\ (S-S^{\tau},\mathbb G)\  \mbox{does}.
 \end{equation}
\subsection{Our Financial and Mathematical Achievements}
Our first original contribution proposes a {\bf new class of honest times}
for which there are markets that preserve the NUPBR condition after $\tau$, and hence our first aim described in (\ref{quesion1}) is reached. Our
family of honest times includes all the $\mathbb F$-stopping times
as well as many examples of non $\mathbb F$-stopping times. By
considering this subclass of honest times throughout the paper,
our principal novelty resides in achieving our second aim of (\ref{question2}), and describe as explicit as possible
the functional ${\cal K}$. As a result, honest times belonging to our class might
 induce ``the-after-$\tau$'' arbitrages only if the initial market jumps.\\
 
\noindent This paper is organized as follows. In the following
section (Section \ref{SectionAfterRandomTimeQLC}), we present our
main results, their  immediate  consequences, and/or their
economic and financial interpretations. In this section, we also develop many examples and show how
 the main ideas came into play.  Section \ref{explicitdeflators}
deals with the derivation of
explicit local martingale deflators for a class of processes. The last section (Section
\ref{Sectionproofs}) focuses on proving the main
theorem and other related results announced. The paper contains also an appendix where some of
the existing and/or new technical results are summarized.


\section{The Main Results and their Financial Interpretations}\label{SectionAfterRandomTimeQLC}
This section contains three subsections. The first subsection defines notations and the NUPBR concept,
while the second subsection develops simple examples of informational markets and explains how some ingredients
 of the main results play natural and important r\^ oles. The last subsection announces the principal results,
  their applications, and gives their financial meanings as well.

\subsection{Notations and Preliminaries}
\noindent In what follows, $\mathbb H$ denotes a filtration satisfying the usual hypotheses.
 The set of $\mathbb H$-martingales  is denoted by ${\cal M}(\mathbb H)$. As usual, ${\cal A}^+(\mathbb H)$ denotes the set of increasing,
right-continuous, $\mathbb H$-adapted and integrable processes.\\
\noindent If ${\cal C}(\mathbb H)$ is a class of $\mathbb H$-adapted processes,
 we denote by ${\cal C}_0(\mathbb H)$ the set of processes $X\in {\cal C}(\mathbb H)$ with $X_0=0$, and
 by ${\cal C}_{loc}(\mathbb H)$
 the set  of processes $X$
 such that there exists a sequence $(T_n)_{n\geq 1}$ of $\mathbb H$-stopping times
  that increases to $+\infty$ and the stopped processes $X^{T_n}$
  belong
to ${\cal C}(\mathbb H)$. We put $ {\cal C}_{0,loc}={\cal
C}_0\cap{{\cal
C}}_{loc}$.\\
\noindent For a process $K$ with $\mathbb H$-locally integrable variation,
we denote by $K^{o,\mathbb H}$  its dual optional projection. The dual predictable projection of $K$   is denoted
$K^{p,\mathbb H}$. For a process $X$, we denote $^{o, \mathbb H
\!} X$ (resp.$\,^{p, \mathbb H\!} X$ ) its optional (resp.
predictable) projection with respect to $\mathbb H$.
\\
\noindent For a finite-dimensional $\mathbb H$-semimartingale
$X$, the set ${\cal{L}}(X, \mathbb H)$ is the set of $\mathbb
H$-predictable processes having the same dimension as $X$ and
being integrable w.r.t. $X$ and  for $H\in {\cal{L}}(X,
\mathbb H )$, the resulting integral is the one-dimensional
process denoted by $H\is X_t:= \int_0^t H_sdX_s$. {\bf Throughout
the paper, stochastic processes have arbitrary finite dimension}
(in case it is not specified). We recall the notion of
non-arbitrage that is addressed in this paper.

\begin{definition}\label{DefinitionofNUPBR}  An $\mathbb H$-semimartingale
$X$ satisfies the {\it No-Unbounded-Profit-with-Bounded-Risk} condition under $(\mathbb H,Q)$ if for any $T'\in (0,+\infty)$, the set
$$\label{boundedset}
{\cal K}_{T'} (X):=\displaystyle \Bigl\{(H\is X)_{T'} \ \big|\ \ \ H \in
{\cal{L}}(X,\mathbb H),\ \mbox{and}\ H\is X\geq -1\  \Bigr\}$$ is
bounded in probability under $Q$. Often, we abbreviate by saying that $X$ satisfies the NUPBR($\mathbb H, Q)$,
or the model $(X,\mathbb H,Q)$ satisfies the NUPBR. When $Q\sim P$, we simply drop the probability for short and simplifying the notations.
\end{definition}

\noindent For more details about this non-arbitrage
condition and its relationship to the literature, we refer the
reader to Aksamit et al. \cite{aksamit/choulli/deng/jeanblanc}.
The NUPBR property is intimately related to the existence of a
$\sigma$-martingale density. Below, we recall
the definition of $\sigma$-martingale and $\sigma$-martingale
density for a process.
 \begin{definition}\label{sigmsamartingaledensities} An $\mathbb H$-adapted process {$X$} is called
   an  $(\mathbb H,\sigma)$-martingale  if  there exists a real-valued $\mathbb H$-predictable process $\phi$ such that
 $$
 0<\phi\leq 1,\ \ \ \mbox{and $\phi\is X$ is an $\mathbb H$-martingale}.$$
If $X$ is $\mathbb H$-adapted, we call $(\mathbb H,\sigma)$-martingale density for $X$ ({\bf also called $\mathbb H$-deflator for $X$}),
 any real-valued positive
 $\mathbb H$-local martingale  $L$ such that $XL$ is an $(\mathbb H,\sigma)$-martingale.
  The set of all $\mathbb H$-deflators for $X$ is denoted by
\begin{equation}\label{Set4SigmaMgdensities}
{\cal L}_{\sigma}(X,{\mathbb H}):=\left\{ L\in {\cal
M}_{loc}(\mathbb H)\big|\ L>0,\ \ LX\ \mbox{ is an   $ (\mathbb
H,\sigma)$-martingale } \right\}.\end{equation}
\end{definition}
The equivalence between NUPBR($\mathbb H$) for a process $X$ and
${\cal L}_{\sigma}(X,{\mathbb H})\not=\emptyset$ is established
in \cite{aksamit/choulli/deng/jeanblanc}
(see Proposition 2.3) when the horizon may be infinite, and in \cite{Takaoka} for the case of finite horizon.\\

\noindent Beside the initial  model $\left(\Omega, \mathbb F, P, S\right)$ in which $S$
 is assumed to be a {\bf quasi-left-continuous semimartingale}, we  consider a random time $\tau$, to which we associate  the process
$D$ and the filtration $\mathbb G$ given by
$$\label{AandfiltrationG}
D:= I_{\Rbrack\tau,+\infty\Rbrack},\ \ \ \mathbb G=\left({\cal
G}_t\right)_{t\geq 0},\ \ \ {\cal G}_t :=
\displaystyle\bigcap_{s>t}\Bigl({\cal F}_s\vee \sigma(D_u, u\leq
s)\Bigr).$$ The filtration $\mathbb G$ is the smallest
right-continuous filtration which contains ${\mathbb  F }$ and
makes $\tau$ a stopping time. In the probabilistic literature,
$\mathbb G$ is called the progressive enlargement of $\mathbb F$
with $\tau$. In addition to $\mathbb G$ and $D$, we associate to
$\tau$ two important $\mathbb F$-supermartingales: the
$\mathbb F$-optional projection of $I_{\Rbrack 0, \tau\Rbrack}$
denoted $Z$, and the $\mathbb F$-optional projection of
$I_{\Rbrack 0, \tau\Lbrack}$, denoted $\widetilde Z$, which
satisfy
\begin{equation}\label{ZandZtilde}
Z_t := P\Bigl(\tau >t\ \big|\ {\cal F}_t\Bigr)\   \mbox{and}\ \ \ \widetilde Z_t:=P\left(\tau\geq t\ \Big|\ {\cal F}_t\right).
\end{equation}
$Z$ is right-continuous with left limits, while $\widetilde Z$ admits right limits and left
limits.  An important $\mathbb F$-martingale,  denoted by $m$, is
given by
 \begin{equation}\label{processmZ}
m := Z+D^{o,\mathbb F},\end{equation}
 where $D^{o,\mathbb F}$ is the $\mathbb F$-dual optional projection of
 $D=I_{\Rbrack\tau, \infty\Rbrack}$ (Note that   $Z$  is bounded
and $D^{o,\mathbb F}$ is nondecreasing and integrable).\\

\noindent To distinguish
the effect of filtration, we will denote $\langle .,
.\rangle^{\mathbb F},$ or $\langle ., . \rangle^{\mathbb G}$
 to specify the sharp bracket (predictable covariation process) calculated in the filtration ${\mathbb F}$ or
  ${\mathbb G},$ if confusion may rise. We recall that, for general semimartingales $X$ and $Y$, the
sharp bracket is (if it exists) the dual predictable projection of
the covariation process $[X,Y]$.   For the reader's
convenience, we recall the definition of honest time.

\begin{definition}\label{honesttime}  A random time $\sigma$ is honest, if for any $t$,
there exists an ${\cal {F}}_t$ measurable r.v. $\sigma _t$ such
that $\sigma I_{\{\sigma<t\}}=\sigma_t I_{\{\sigma<t\}}$.

\end{definition}
We refer to Jeulin \cite[Chapter 5]{Jeu} and Barlow \cite{Barlow}
for  more information about honest times. In this paper, we
restrict our study to the following subclass $\mathcal H$ of
random times:
\begin{equation}\label{assumptionOntau}
{\mathcal H}:= \{\tau\ \mbox{is an honest time satisfying}\ \ \
Z_{\tau}I_{\{\tau<+\infty\}}<1,\ \ \ \ P-a.s.\}\end{equation}
\begin{remark}
1) It is clear that any $\mathbb F$-stopping time belongs to $\mathcal H$ (we even have
 $Z_{\tau}I_{\{\tau<+\infty\}}=0$), and hence our subclass of honest times is not empty.\\
2) In the case where $\mathbb F$ is the completed Brownian
filtration, we consider the following $\mathbb F$-stopping times
$$
U^{\epsilon}_0=V^{\epsilon}_0=0,\ \ U^{\epsilon}_n:=\inf\{t\geq V^{\epsilon}_{n-1}:\ B_t
=\epsilon\},\ V^{\epsilon}_n:=\inf\{t\geq U^{\epsilon}_n:\ B_t=0\},$$ where $\epsilon\in (0,1)$
 and $B$ is a one dimensional standard Brownian motion. Then,
$$
\tau:=\sup\{V^{\epsilon}_n:\ V^{\epsilon}_n\leq T_1\},$$ where
$T_1:=\inf\{t\geq 0:\ \ B_t=1\}$, is an honest time which is not a
stopping time, and belongs to $\mathcal H$   (see
\cite{aksamit} for detailed proof). Other examples of elements of $\mathcal H$ that are not stopping times are given in the
next subsection.
\end{remark}

\noindent We conclude this subsection with the following lemma, obtained in
\cite{aksamit/choulli/deng/jeanblanc}.
\begin{lemma}\label{NUPBRforPredictableProcesses}
Let $X$ be an $\mathbb H$-predictable process with finite
variation. Then $X$ satisfies NUPBR$(\mathbb H)$ if and only if
$X\equiv X_0$ (i.e. the process $X$ is constant).\end{lemma}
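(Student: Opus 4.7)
The ``if'' direction is immediate: if $X \equiv X_0$, then $H \is X \equiv 0$ for every $H \in \mathcal L(X,\mathbb H)$, so $\mathcal K_{T'}(X) = \{0\}$ is trivially bounded in probability.

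For the converse, the plan is to assume $X$ is not $P$-a.s.\ constant and to exhibit an explicit one-parameter family of admissible integrands whose terminal gains are unbounded in probability. Because $X$ is $\mathbb H$-predictable with finite variation, the total variation process $A_t := \mathrm{Var}(X - X_0)_t$ is again $\mathbb H$-predictable and nondecreasing, and the Radon-Nikodym theorem applied to the $\mathbb H$-predictable measures $d(X - X_0)$ and $dA$ yields an $\mathbb H$-predictable process $\psi$ with $|\psi| = 1$ such that $X - X_0 = \psi \is A$. In particular $\psi \in \mathcal L(X,\mathbb H)$ and $\psi \is X = A \geq 0$.

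For each constant $c > 0$, the integrand $H := c\,\psi$ lies in $\mathcal L(X,\mathbb H)$ and satisfies $H \is X = cA \geq 0 \geq -1$, so $cA_{T'} \in \mathcal K_{T'}(X)$ for every $T' > 0$ and every $c > 0$. If $X$ is not constant, then there exists $T' > 0$ with $\delta := P(A_{T'} > 0) > 0$, and for any fixed $M$ we have
\[
P\bigl(cA_{T'} > M\bigr) \ \geq \ P\bigl(A_{T'} > M/c\bigr) \ \xrightarrow[c \to \infty]{} \ \delta > 0,
\]
which shows that $\mathcal K_{T'}(X)$ is not bounded in probability. This contradicts NUPBR$(\mathbb H)$ and forces $X \equiv X_0$.

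The only delicate point is the predictable sign decomposition $X - X_0 = \psi \is A$; this is however classical for $\mathbb H$-predictable finite-variation processes, since both the signed Doléans measure of $X - X_0$ and the positive Doléans measure of $A$ are carried by the predictable $\sigma$-algebra, so their Radon-Nikodym density can be chosen predictable. Apart from this, the argument is purely a scaling one: predictable finite-variation processes are fully ``forecastable,'' so any nontrivial movement can be traded in a risk-free, arbitrarily leveraged way, which is structurally incompatible with the boundedness-in-probability requirement defining NUPBR.
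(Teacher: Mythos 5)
Your proof is correct. Note that the paper itself does not supply a proof of this lemma: it is quoted directly from Aksamit--Choulli--Deng--Jeanblanc, so there is no in-text argument to compare against; I evaluate your argument on its own merits.

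The ``if'' direction is trivially right, and your ``only if'' argument is the natural one: because $X$ is $\mathbb H$-predictable with finite variation, one can extract a predictable sign process $\psi$ with $|\psi|=1$ and $X-X_0=\psi\is A$ where $A=\mathrm{Var}(X-X_0)$; then $\psi\is X=A\geq 0$, so the scaled integrands $c\psi$ are all admissible (they never produce a loss), and their terminal payoffs $cA_{T'}$ blow up in probability on the event $\{A_{T'}>0\}$, which has positive measure once $X$ is nonconstant. The one point that genuinely needs a citation is the claim that $A$ can be chosen $\mathbb H$-predictable and that the Radon--Nikodym density $\psi$ of the Dol\'eans measure of $X-X_0$ against that of $A$ can be chosen $\mathbb H$-predictable with $|\psi|=1$; this is indeed classical (Jordan decomposition of a predictable finite-variation process, e.g.\ He--Wang--Yan, Thm.~5.16, or Jacod's book), and you correctly flag it as the only delicate step. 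A conceptually equivalent route, which the authors often use in related work, is via the deflator characterization: if $L$ were a positive local martingale with $L(X-X_0)$ a $\sigma$-martingale, then $L_-\is X + [L,X]$ would have to be a local martingale, but its $\mathbb H$-compensator is $L_-\is X$ (since $\,^{p,\mathbb H}(\Delta L)=0$), forcing $L_-\is X=0$ and hence $X\equiv X_0$. Your definition-level scaling argument is more elementary and requires no deflator machinery, which is a perfectly reasonable trade-off here.
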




\subsection{Particular Cases and Examples}
In this subsection, by analysing particular cases and examples, we
obtain some results vital for understanding the exact
interplay between the features of the initial markets and the
honest time under consideration. The following simple lemma
plays a key role in this analysis.

\begin{lemma}\label{localboundZ-} The following assertions hold.\\
{\rm{(a)}} Let $M$  be an ${\mathbb  F}$-local martingale,
and $\tau$ be an honest time. Then the process $\widehat M$,
defined as
\begin{eqnarray}\label{Mhat}\widehat{M}  &:=& M  -M^{\tau}+(1-Z_{-})^{-1}I_{\Lbrack\tau,+\infty\Rbrack}\is \langle M, \mm\rangle^{\mathbb F}\,,
 \end{eqnarray}
is a ${\mathbb  G }$-local martingale.\\
{\rm{(b)}} If $\tau \in \mathcal H$, then the $\mathbb
G$-predictable process
$\left(1-Z_{-}\right)^{-1}I_{\Lbrack\tau,+\infty\Rbrack}$ is
$\mathbb G$-locally bounded.
\end{lemma}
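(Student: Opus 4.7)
\medskip\noindent\emph{Proof plan.} Part (a) is the classical Jeulin--Yor decomposition of $\mathbb F$-local martingales under progressive enlargement by an honest time; one would simply cite \cite[Ch.~5]{Jeu} (or the version rederived in \cite{aksamit/choulli/deng/jeanblanc}).

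For part (b), my plan is to exhibit an explicit $\mathbb G$-localizing sequence for the process $X:=(1-Z_{-})^{-1}I_{\Lbrack\tau,+\infty\Rbrack}$. Set
\[
\rho_n\ :=\ \inf\bigl\{t>\tau\,:\, Z_t\geq 1-1/n\bigr\},\qquad n\geq 1,
\]
which are $\mathbb G$-stopping times, being the debuts (after $\tau$) of the $\mathbb F$-optional sets $\{Z\geq 1-1/n\}$. The hypothesis $Z_\tau I_{\{\tau<\infty\}}<1$ and the right-continuity of $Z$ yield, on $\{\tau<\infty\}$, a right-neighbourhood of $\tau$ on which $Z<1-\varepsilon(\omega)$ for some random $\varepsilon>0$; hence $\rho_n>\tau$ for all $n$ large. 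Left-continuity of $Z_-$ then gives $Z_{t-}\leq 1-1/n$ on $\Lbrack\tau,\rho_n\Lbrack$, and since $X$ vanishes on $\Rbrack 0,\tau\Lbrack$, the stopped process $X^{\rho_n}$ is bounded by $n$.

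The essential point is to check $\rho_n\uparrow+\infty$ almost surely. Let $\rho_\infty:=\sup_n\rho_n$; on $\{\rho_\infty<\infty\}$ one has either $Z_{\rho_\infty-}=1$ (when $\rho_n$ strictly increases to $\rho_\infty$, by left-continuity of $Z_-$) or $Z_{\rho_\infty}=1$ (when the sequence stabilizes, using $Z_{\rho_n}\geq 1-1/n$ and right-continuity of $Z$). Thus $\rho_\infty$ sections the $\mathbb F$-optional set
\[
C\ :=\ \bigl(\{Z_-=1\}\cup\{Z=1\}\bigr)\cap\Lbrack\tau,+\infty\Rbrack.
\]
The crucial step is to prove that $C$ is $P$-evanescent. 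By the optional and predictable section theorems, it suffices to show that every $\mathbb G$-stopping time (resp.\ every $\mathbb G$-predictable stopping time) $T$ with graph in $\{Z=1\}\cap\Lbrack\tau,+\infty\Rbrack$ (resp.\ $\{Z_-=1\}\cap\Lbrack\tau,+\infty\Rbrack$) satisfies $P(T<\infty)=0$. Honesty of $\tau$ permits, via Jeulin's reduction, to replace such a $T$ on $\{T>\tau\}$ by an $\mathbb F$-stopping time (resp.\ $\mathbb F$-predictable) $T^*$; the projection identities $P(\tau>T^*\mid{\cal F}_{T^*})=Z_{T^*}$ and $P(\tau\geq T^*\mid{\cal F}_{T^*-})=Z_{T^*-}$ then force $\tau>T^*$, respectively $\tau\geq T^*$, almost surely, contradicting $T^*>\tau$. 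The main technical obstacle is precisely this Jeulin reduction of post-$\tau$ $\mathbb G$-(predictable) stopping times to $\mathbb F$-(predictable) ones --- classical for honest times but requiring care.
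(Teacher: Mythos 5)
Your part (a) is handled exactly as in the paper (a citation of the classical honest-time decomposition), and your localizing sequence for part (b) is in substance the paper's own: the paper stops when $X:=(1-Z)^{-1}I_{\Rbrack\tau,+\infty\Rbrack}$ reaches level $n$, you stop when $Z$ reaches $1-1/n$ after $\tau$, and these are the same times. The genuine divergence is in how the divergence $\rho_n\uparrow+\infty$ is secured. The paper simply quotes the classical structure of honest times from \cite[Ch.~XX]{DMM} --- on $\Lbrack\tau,+\infty\Rbrack$ one has $Z=\widetilde Z$, $\widetilde Z<1$ and $Z_{-}<1$ --- which, combined with the defining property $Z_{\tau}I_{\{\tau<+\infty\}}<1$ of ${\cal H}$, makes your set $C=(\{Z_{-}=1\}\cup\{Z=1\})\cap\Lbrack\tau,+\infty\Rbrack$ evanescent outright; the lemma then reduces to the elementary fact that the left-limit process of a finite-valued c\`adl\`ag adapted process is locally bounded. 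You instead re-derive the evanescence of $C$ from scratch via the optional and predictable section theorems in $\mathbb G$ together with a reduction of post-$\tau$ $\mathbb G$-(predictable) stopping times to $\mathbb F$-(predictable) ones. Your projection identities are correct ($P(\tau>T^*\,|\,{\cal F}_{T^*})=Z_{T^*}$, and $E(\widetilde Z_{T^*}\,|\,{\cal F}_{T^*-})=Z_{T^*-}$ at predictable $T^*$, the very identity the paper uses in the proof of Theorem \ref{F-quasi-left-continuous}), and the contradiction does bite on $\Rbrack\tau,+\infty\Rbrack$, which suffices since $\sup_n\rho_n>\tau$. But the reduction you invoke is the load-bearing step: for ordinary stopping times it is precisely Proposition \ref{lemma:predsetFG1}-(a), whose proof via Barlow's results is the longest argument in the paper's appendix, and the predictable-time analogue you also need is stated nowhere in the paper and would have to be extracted from Jeulin's description of $\mathbb G$-predictable sets for honest times. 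So your route buys independence from the \cite{DMM} facts at the cost of re-proving part of them with section-theorem machinery whose key ingredient is left as a black box; it is workable, but far heavier than the paper's three-line argument and not self-contained as written.

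One small inaccuracy: left-continuity does not give $Z_{t-}\leq 1-1/n$ at $t=\tau$, since nothing controls $Z$ strictly before $\tau$ at that level (only $Z_{\tau-}<1$ is known, and that is again one of the classical facts). Hence $X^{\rho_n}\leq n$ holds on $\Rbrack\tau,+\infty\Rbrack$, while the time-$\tau$ value $(1-Z_{\tau-})^{-1}$ is merely a.s.\ finite and requires its own additional localization. The paper's proof, which works with the open interval $\Rbrack\tau,+\infty\Rbrack$ and identifies its left-limit process with the closed-bracket process only up to the single point $\tau$, glosses the same boundary point, so this is a shared blemish rather than a defect specific to your approach.
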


\begin{proof} {\bf 1)} Assertion (a)  is a standard  result on progressive enlargement of
filtration with honest times (see \cite{Barlow,DMM,Jeu}).\\
{\bf 2)} Herein we prove assertion (b). It is known \cite[Chapter XX]{DMM} that $ Z=\widetilde Z$  on
$\Lbrack\tau,+\infty\Rbrack$, and
$$\Lbrack\tau,+\infty\Rbrack\subset \{Z_{-}<1\}\cap\{\widetilde
Z<1\}\subset\{Z_{-}<1\}\cap\{ Z<1\}  \  .$$ Then, since   $\tau \in
\mathcal H$, we deduce that $\Rbrack\tau,+\infty\Rbrack\subset\{
Z<1\}$, and hence the process
$$X:=(1-Z)^{-1}I_{\Rbrack\tau,+\infty\Rbrack},$$ is c\`adl\`ag  $\mathbb G$-adapted with values in $[0,+\infty)$ (finite values).
 Combining these with $\Lbrack\tau,+\infty\Rbrack\subset \{Z_{-}<1\}$, we can prove easily that
$$T_n:=\inf\{t\geq 0:\ \ \ X_t\geq n\}\uparrow\ +\infty\ \mbox{and}\ \ \max(X^{T_n-}, X_{-}^{T_n})\leq n,\ \ \ P-a.s..$$
 Thus, $X_{-}=(1-Z_{-})^{-1}I_{\Lbrack\tau,+\infty\Rbrack}$ is locally bounded, and the proof of the lemma is completed.\qed
\end{proof}

\begin{theorem}\label{NUPR4Scontinuous} Suppose that $\tau \in \mathcal H$.
 If $S$ is continuous and satisfies  NUPBR$(\mathbb F)$, then $S-S^{\tau}$ satisfies  NUPBR$(\mathbb G)$.
\end{theorem}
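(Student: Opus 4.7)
My plan is to construct an explicit $\mathbb G$-local martingale deflator for $S-S^{\tau}$ by transferring an $\mathbb F$-deflator of $S$ via Lemma \ref{localboundZ-}, and then to invoke the characterisation NUPBR $\Leftrightarrow\mathcal L_{\sigma}\neq\emptyset$ recalled after Definition \ref{sigmsamartingaledensities}. The continuity of $S$ is crucial: it forces all the relevant brackets to be absolutely continuous with respect to $\langle N\rangle^{\mathbb F}$ via Kunita--Watanabe, and keeps all the martingales in sight continuous so that stochastic exponentials behave cleanly.

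First I would invoke the classical structure condition for continuous semimartingales: NUPBR$(\mathbb F)$ together with continuity of $S$ yields a canonical decomposition $S = S_{0} + N + \lambda\is\langle N\rangle^{\mathbb F}$, where $N$ is a continuous $\mathbb F$-local martingale and $\lambda$ is an $\mathbb F$-predictable process in $L^{2}_{loc}(N,\mathbb F)$; equivalently, $\mathcal E(-\lambda\is N)$ is a continuous positive $\mathbb F$-deflator of $S$. (The $d$-dimensional statement involves the matrix of angle brackets but the argument is identical.)

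Next, applying Lemma \ref{localboundZ-}(a) to the continuous local martingale $N$ yields the continuous $\mathbb G$-local martingale
$$\widehat N := N - N^{\tau} + (1-Z_{-})^{-1}I_{\Lbrack\tau,+\infty\Rbrack}\is \langle N,m\rangle^{\mathbb F}.$$
Since $\widehat N - (N-N^{\tau})$ has finite variation and $N$ is continuous, $[\widehat N]=[N]-[N]^{\tau}=\langle N\rangle^{\mathbb F}-(\langle N\rangle^{\mathbb F})^{\tau}$ is already $\mathbb F$-predictable (hence $\mathbb G$-predictable), so $\langle\widehat N\rangle^{\mathbb G} = \langle N\rangle^{\mathbb F}-(\langle N\rangle^{\mathbb F})^{\tau}$. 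Combining with $S-S^{\tau} = (N-N^{\tau}) + I_{\Lbrack\tau,+\infty\Rbrack}\is\langle N\rangle^{\mathbb F}\lambda$, and writing the Kunita--Watanabe density $d\langle N,m\rangle^{\mathbb F} = \psi\,d\langle N\rangle^{\mathbb F}$ (available because $N$ is continuous), I obtain the canonical $\mathbb G$-decomposition
$$S - S^{\tau} = \widehat N + \widetilde\lambda\is \langle\widehat N\rangle^{\mathbb G},\qquad \widetilde\lambda := I_{\Lbrack\tau,+\infty\Rbrack}\bigl(\lambda - (1-Z_{-})^{-1}\psi\bigr).$$

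Finally I would set $\widetilde L := \mathcal E(-\widetilde\lambda\is\widehat N)$. Once $\widetilde\lambda\in L^{2}_{loc}(\widehat N,\mathbb G)$ is known, $\widetilde L$ is a continuous positive $\mathbb G$-local martingale, and Ito's formula (with $[\widehat N]=\langle\widehat N\rangle^{\mathbb G}$ by continuity) gives that $\widetilde L\,(S-S^{\tau})\in \mathcal M_{loc}(\mathbb G)$, so that $\widetilde L\in\mathcal L_{\sigma}(S-S^{\tau},\mathbb G)$ and NUPBR$(\mathbb G)$ follows. The main obstacle is precisely the verification that $\widetilde\lambda\in L^{2}_{loc}(\widehat N,\mathbb G)$: this is where Lemma \ref{localboundZ-}(b) is essential, as it provides $\mathbb G$-local boundedness of $(1-Z_{-})^{-1}I_{\Lbrack\tau,+\infty\Rbrack}$, which combined with the Kunita--Watanabe estimate $\int\psi^{2}d\langle N\rangle^{\mathbb F}\leq \int d\langle m\rangle^{\mathbb F}$, local integrability of $\langle m\rangle^{\mathbb F}$ (since $m$ is a bounded $\mathbb F$-martingale), and $\lambda\in L^{2}_{loc}(N,\mathbb F)$, allows one to exhibit a common localising sequence of $\mathbb G$-stopping times.
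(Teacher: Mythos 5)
Your proof is correct, but it takes a genuinely different route from the paper's, and the comparison is instructive. The paper does \emph{not} invoke the structure condition. Instead, starting from a $\sigma$-martingale density $L$ for $S$, it uses Ansel--Stricker (Proposition 3.3 in \cite{anselstricker1994}) together with the continuity of $S$ to upgrade $LS$ to a genuine $\mathbb F$-local martingale, localizes, and passes to the equivalent measures $Q_n=(L_{T_n}/L_0)\is P$ under which $S^{T_n}$ is an $\mathbb F$-martingale. Having reduced to $S\in\mathcal M_{loc}(\mathbb F)$, the paper then produces the deflator $\mathcal E\bigl((1-Z_-)^{-1}I_{\Lbrack\tau,+\infty\Rbrack}\is\widehat{m^c}\bigr)$, which depends only on the continuous martingale part $m^c$ of $m$ and not on the drift of $S$. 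You instead keep the drift $\lambda$ in play, apply Lemma \ref{localboundZ-}(a) directly to the martingale part $N$, and fold both $\lambda$ and the Kunita--Watanabe density $\psi$ (from $d\langle N,m\rangle^{\mathbb F}=\langle N\rangle^{\mathbb F}\psi$) into the Girsanov kernel $\widetilde\lambda = I_{\Lbrack\tau,+\infty\Rbrack}\bigl(\lambda-(1-Z_-)^{-1}\psi\bigr)$, taking $\mathcal E(-\widetilde\lambda\is\widehat N)$ as the deflator. What the paper buys by its measure-change reduction is that it never needs to verify the structure condition in its multi-dimensional form, never introduces $\psi$, and the final $\mathbb G$-deflator is universal (independent of $S$ beyond the fact that it is a martingale). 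What your approach buys is directness: no change of measure, no Ansel--Stricker step, a single explicit formula for the deflator. The cost is the multi-dimensional structure condition (a nontrivial input, though available under NUPBR for continuous $S$, e.g.\ via \cite{ChoulliStricker96}) and a careful localization argument showing $\widetilde\lambda\in L^2_{loc}(\widehat N,\mathbb G)$; your sketch of that step — $\mathbb G$-local boundedness of $(1-Z_-)^{-1}I_{\Lbrack\tau,+\infty\Rbrack}$ via Lemma \ref{localboundZ-}(b), the Kunita--Watanabe bound $\psi^{tr}\is\langle N\rangle^{\mathbb F}\psi\leq\langle m\rangle^{\mathbb F}\in\mathcal A^+_{loc}(\mathbb F)$, and $\lambda\in L^2_{loc}(N,\mathbb F)$ — is the right one and closes the gap. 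Both proofs are sound; they differ in which technical apparatus carries the load.
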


\begin{remark}
This theorem follows from one of our principal results stated in the next subsection.
 However, due to the simplicity of its proof that does not require any further technicalities, we opted for detailing this proof below.
\end{remark}

\begin{proof}{\it of Theorem \ref{NUPR4Scontinuous}}:
Let $S=(S^1,...,S^d)$ be a $d$-dimensional continuous process
satisfying NUPBR$(\mathbb F)$. Then, there exists a positive
$\mathbb F$-local martingale $L$ such that $LS$ is an $(\mathbb
F,\sigma)$-martingale. Since $S$ is continuous and $L$ is a local
martingale, we deduce that $\sup_{u\leq .}\vert
S_u\vert\sup_{u\leq .}\vert \Delta L_u\vert$ is locally
integrable. Thus, thanks to Proposition 3.3 in
\cite{anselstricker1994} and $ \sum_{i=1}^d \Delta
(LS^i)=\sum_{i=1}^d S^i\Delta L\geq -d\sup_{u\leq .}\vert
S_u\vert\sup_{u\leq .}\vert \Delta L_u\vert$, we conclude that
$LS$ is an $\mathbb F$-local martingale. Consider a
sequence of $\mathbb F$-stopping times $(T_n)_{n\geq 1}$ that
increases to infinity such that both $L^{T_n}$ and
$L^{T_n}S^{T_n}$ are martingales, and put $Q_n:=(L_{T_n}/L_0)\is
P\sim P$. Then, $S^{(n)}:=S^{T_n}$ is an $(\mathbb F,
Q_n)$-martingale on the one hand. On the other hand, in virtue of
Proposition \ref{NUPBRLocalization}, $S-S^{\tau}$ satisfies
NUPBR$(\mathbb G)$ if and only if $S^{(n)}-(S^{(n)})^{\tau}$
satisfies NUPBR$(\mathbb G)$ under $Q_n$, for all $n\geq 1$. This
shows that, without loss of generality, one need to prove the
theorem only when $S$ is an $\mathbb F$-martingale. Thus,
for the rest of the proof, we assume that $S$ is an $\mathbb
F$-martingale. Thanks to Lemma \ref{localboundZ-}, the process
$Y^{\mathbb G}:={\cal
E}((1-Z_{-})^{-1}I_{\Lbrack\tau,+\infty\Rbrack}\is \widehat
{m^c})$ is a well defined continuous real-valued and positive
$\mathbb G$-local martingale, where $m^c$ is the continuous
$\mathbb F$-local martingale part of $m$, and $\widehat {m^c}$ is
defined as in (\ref{Mhat}). Thanks to the continuity of $S$ and
(\ref{Mhat}), we get
\begin{eqnarray*}
S-S^{\tau}+\Bigl[S-S^{\tau}, {{I_{\Lbrack\tau,+\infty\Rbrack}}\over{1-Z_{-}}}\is \widehat {m^c}\Bigr]
&=&S-S^{\tau}+(1-Z_{-})^{-1}I_{\Lbrack\tau,+\infty\Rbrack}\is\langle S, m\rangle^{\mathbb F}\\
\\
&=&\widehat S\in {\cal M}_{loc}(\mathbb G).\end{eqnarray*}
Therefore, a combination of this and It\^o's formula applied to
$(S-S^{\tau})Y^{\mathbb G}$, we conclude that this latter process
is a $\mathbb G$-local martingale. This proves the NUPBR$(\mathbb G)$
for $S-S^{\tau}$, and the proof of the theorem is achieved.\qed
\end{proof}

\begin{remark}\label{remarkContinuity}
Theorem \ref{NUPR4Scontinuous} asserts clearly that, if $\tau \in \mathcal H$,
the jumps of $S$ have significant impact on $\mathbb G$-arbitrages
for $S-S^{\tau}$. Thus, the following natural question arises:
\begin{equation}\label{questioncontinuity}
\mbox{Does the condition}\ \{\Delta S\not=0\}\cap\Rbrack\tau\Lbrack=\emptyset\  \mbox{impact}\ \mathbb G\mbox{-arbitrages}?
\end{equation}
\end{remark}

\begin{example}\label{example1} Suppose that $\mathbb F$ is generated by a Poisson process $N$ with intensity one.
 Consider two real numbers $a>0$ and $\mu>1$, and set
\begin{equation}\label{tau}
\tau:=\sup\{t\geq 0:\ Y_t:=\mu t-N_t\leq a\},\ \ \ \ \ \
M_t:=N_t-t.\end{equation} It can be proved easily, see
\cite{aksamit}, that $\tau\in \mathcal H$ is finite almost surely,
and the associated processes $Z$ and $\widetilde Z$ are given by
$$
Z=\Psi(Y-a)I_{\{ Y\geq a\}}+I_{\{Y<a\}}\ \ \ \ \mbox{and}\ \ \ \ \widetilde Z=\Psi(Y-a)I_{\{Y>a\}}+I_{\{Y\leq a\}}.$$
Here $\Psi(u):=P\left(\sup_{t\geq 0}Y_t>u\right)$ is the ruin probability associated to the process $Y$ (see \cite{assmussen}).
 As a result we have
\begin{equation}\label{Z-}
1-Z_{-}=\left[1-\Psi(Y_{-}-a)\right]I_{\{ Y_{-}>a\}},
\end{equation}
and we can prove that
 \begin{eqnarray}\label{mrepresentation}
m&=&m_0+\phi\is M,\ \ \ \mbox{where}\\
\phi:&=&\left[\Psi(Y_{-}-a-1)-\Psi(Y_{-}-a)\right]I_{\{ Y_{-}>1+
a\}}+\left[1-\Psi(Y_{-}-a)\right]I_{\{ a<Y_{-}\leq
1+a\}}.\nonumber\end{eqnarray} Suppose that $S= I_{\{ a\leq
Y_{-}<a+  1\}}\is M$. Then, in virtue of Lemma
\ref{NUPBRforPredictableProcesses}, the process $S-S^{\tau}$
(which is not null) violates  NUPBR$(\mathbb G)$ if it is $\mathbb
G$-predictable with finite variation. This latter fact is
equivalent to $\widehat S$ ($\mathbb G$-local martingale part of
$S-S^{\tau}$) being null, or equivalently $\langle \widehat
S,\widehat S\rangle^{\mathbb G}\equiv 0$. By using Lemma
\ref{localboundZ-} and It\^o's lemma and putting $V_t=t$, we
derive
\begin{eqnarray}\label{ShatS}
[\widehat S,\widehat S]&=&I_{\Lbrack\tau,+\infty\Rbrack}\is [S]=I_{\Lbrack\tau,+\infty\Rbrack}\is S+I_{\{a< Y_{-}\leq a+1\}}I_{\Lbrack\tau,+\infty\Rbrack}\is V\nonumber\\
&=&I_{\Lbrack\tau,+\infty\Rbrack}\is {\widehat S}+I_{\{a< Y_{-}\leq a+1\}}I_{\Lbrack\tau,+\infty\Rbrack}\left(1-{{\phi}\over{1-Z_{-}}}\right)\is V,\\
&=&I_{\Lbrack\tau,+\infty\Rbrack}\is {\widehat S}\ \ \ \mbox{is a $\mathbb G$-local martingale}.\nonumber
\end{eqnarray} The last equality is due to $\phi\equiv 1-Z_{-}$ on $\{a\leq Y_{-}<a+1\}\cap\Lbrack\tau,+\infty\Rbrack$. This proves that $\widehat S\equiv 0$, and hence $S-S^{\tau}$ violates  NUPBR$(\mathbb G)$.
\end{example}

\begin{example}\label{example2} Consider the same setting and notations as Example \ref{example1},
 except for the initial market model that we suppose having the form of $S=I_{\{ Y_{-}>a+1\}}\is M$ instead.
  Then, by combining Lemma \ref{localboundZ-}, It\^o's lemma and similar calculation as in (\ref{ShatS}),
   we deduce that both $Y^{\mathbb G}:={\cal E}(\xi\is \widehat S)$ and $Y^{\mathbb G}(S-S^{\tau})$
    are $\mathbb G$-local martingales and $Y^{\mathbb G}>0$. Here $\xi$ is given by
$$
\xi:={{\Psi(Y_{-}-a-1)-1}\over{2-\Psi(Y_{-}-a)-\Psi(Y_{-}-a-1)}}I_{\{ Y_{-}>a+1\}}I_{\Lbrack\tau,+\infty\Rbrack}.$$
 This proves that $S-S^{\tau}$ satisfies  NUPBR$(\mathbb G)$.
\end{example}

\begin{remark}\label{norolefortau}
1) The economics/financial meaning of Examples \ref{example1} and \ref{example2} resides in the following:
 The random time defined in (\ref{tau}) represents the last time the cash reserve of a firm does not exceed the level $a$.
  Then, in  Example \ref{example1} (respectively in Example \ref{example2}) one can consider any security whose price process
   lives on $\{a\leq Y_{-}<1+a\}$ (respectively on  $\{Y_{-}> 1+a\}$).\\
2) Remark that, in both Examples \ref{example1}
and \ref{example2}, the graph of the random time $\tau$ is
included in a union of countable graphs of predictable stopping
times. Hence, due to the quasi-left-continuity of $S$, we
immediately conclude that $\{\Delta
S\not=0\}\cap\Rbrack\tau\Lbrack$ is empty for both examples. This answers negatively (\ref{questioncontinuity}).
\end{remark}

\subsection{Main Results and Their Applications}

Our first main result requires the following easy and interesting lemma.

\begin{lemma}\label{VFprocess}
Suppose that $\tau\in {\cal H}$ and is finite almost surely. Then,
\begin{equation}\label{VFdefinition}
V^{\mathbb F}:=\sum I_{\{\widetilde Z=1>Z_{-}\}},
\end{equation}
is c\`adl\`ag with finite values, and hence is $\mathbb F$-locally integrable.
\end{lemma}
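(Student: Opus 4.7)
The plan is to prove the lemma in four stages.

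First, one uses the classical identity $\widetilde Z_t = Z_{t-} + \Delta \mm_t$ --- an immediate consequence of $\mm = Z + D^{o,\mathbb F}$ together with the jump relation $\widetilde Z = Z + \Delta D^{o,\mathbb F}$ --- to rewrite the defining set as $B := \{\widetilde Z = 1 > Z_-\} = \{\Delta \mm > 0,\ \Delta \mm = 1 - Z_-\}$. This exhibits $B$ as a subset of the thin optional set $\{\Delta \mm \neq 0\}$, so that $V^{\mathbb F}$ is an $\mathbb F$-adapted, nondecreasing counting process with unit jumps.

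Second, one establishes the inclusion $\{\widetilde Z = 1\} \subseteq \Rbrack 0, \tau \Lbrack$ (as optional sets, up to indistinguishability) via the one-line computation $\E[I_{\{\tau < t\}} I_{\{\widetilde Z_t = 1\}}] = \E[(1 - \widetilde Z_t) I_{\{\widetilde Z_t = 1\}}] = 0$, valid for every $t \geq 0$. Consequently $B \subseteq \Rbrack 0, \tau \Lbrack$, the process $V^{\mathbb F}$ is constant after $\tau$, and it suffices to prove $V^{\mathbb F}_\tau < +\infty$ almost surely on $\{\tau < +\infty\}$.

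Third --- and this is the main substantive step --- one shows that the thin optional set $B$ cannot accumulate in any bounded interval. Arguing by contradiction, suppose $s_n \uparrow s^{\ast} \leq \tau$ strictly, with each $s_n \in B$. The summability of jumps of the $\mathbb F$-martingale $\mm$, namely $\sum_{s \leq s^{\ast}} (\Delta \mm_s)^2 \leq [\mm, \mm]_{s^{\ast}} < +\infty$, combined with $\Delta \mm_{s_n} = 1 - Z_{s_n-} > 0$, forces $Z_{s_n-} \to 1$; the left-continuity of $Z_-$ then yields $Z_{s^{\ast}-} = 1$. On the event $\{s^{\ast} = \tau\}$ this, together with the $\mathcal H$-assumption $Z_\tau < 1$, forces $\tau$ to coincide with a downward jump time of $Z$, which is contained in a countable union of graphs of $\mathbb F$-stopping times; on each such graph the accumulation can be ruled out separately using the bound on $[\mm, \mm]$. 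On $\{s^{\ast} < \tau\}$, one invokes the honest-time representation of $\tau$ as the end of the optional set $\{\widetilde Z = 1\}$, together with $Z_{s^{\ast}-} = 1$, to iterate the argument and eventually contradict the finiteness of $\tau$.

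Fourth, once $V^{\mathbb F}$ is finite-valued, it is automatically càdlàg (being nondecreasing, $\mathbb F$-adapted, and having only finitely many unit jumps in any bounded interval), and the stopping times $T_k := \inf\{t : V^{\mathbb F}_t \geq k\}$ form an $\mathbb F$-localizing sequence with $V^{\mathbb F}_{T_k} \leq k + 1$ bounded, which delivers $\mathbb F$-local integrability. The main obstacle throughout is Step 3, and especially the subcase $\{s^{\ast} < \tau\}$, where one must deploy refined honest-time structure beyond the mere $\mathcal H$ hypothesis.
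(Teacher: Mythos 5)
Your Steps 1, 2, and 4 are sound, and you have correctly isolated the two driving facts: on the set $B:=\{\widetilde Z=1>Z_-\}$ one has $\Delta m=1-Z_->0$, and the jumps of $m$ are square-summable via $[m,m]\in{\cal A}^+_{loc}(\mathbb F)$. But Step 3 — the core of the argument — contains a genuine gap. Your contradiction argument only yields $Z_{s^*-}=1$ at the putative accumulation point, which contradicts nothing on its own (it merely shows $s^*\notin B$; accumulation points of $B$ need not lie in $B$). The subcase $\{s^*=\tau\}$ is handled by appealing to $\tau$ sitting on a graph of an $\mathbb F$-stopping time where $Z$ jumps down, and the subcase $\{s^*<\tau\}$ by "iterating the argument", but neither sketch actually closes to a contradiction; you yourself concede that "refined honest-time structure beyond the mere ${\cal H}$ hypothesis" is required. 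That structure is precisely what you never invoke.

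What fills the gap is Proposition \ref{lemma:predsetFG1}-(b): if $\tau\in{\cal H}$ is finite a.s., then $(1-Z_-)^{-1}I_{\{Z_-<1\}}$ is $\mathbb F$-locally bounded, i.e.\ there exist $\mathbb F$-stopping times $\sigma_n\uparrow\infty$ with $1-Z_{t-}\geq 1/n$ on $\{Z_{t-}<1\}\cap\{t\leq\sigma_n\}$. This is exactly the quantitative lower bound that forbids the $s_n$ in your Step 3 from accumulating: on $B\cap\llbracket 0,\sigma_n\rrbracket$ one has $\Delta m\geq 1/n$, so square-summability of jumps gives only finitely many. The paper avoids the contradiction argument entirely by turning this into a domination:
\begin{equation*}
\bigl(V^{\mathbb F}\bigr)^{\sigma_n}
\;\leq\; n^{2}\Bigl(\sum(1-Z_-)^{2}I_{\{\widetilde Z=1>Z_-\}}\Bigr)^{\sigma_n}
\;=\; n^{2}\Bigl(\sum(\Delta m)^{2}I_{\{\widetilde Z=1>Z_-\}}\Bigr)^{\sigma_n}
\;\leq\; n^{2}\,[m,m]^{\sigma_n},
\end{equation*}
which delivers finiteness of $V^{\mathbb F}$ and $\mathbb F$-local integrability in a single stroke, since $[m,m]\in{\cal A}^+_{loc}(\mathbb F)$. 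Your Step 4 (the crossing-time localization once finiteness is known) is then superfluous. In short: you have the right raw ingredients but are missing the key lemma — the $\mathbb F$-local boundedness of $(1-Z_-)^{-1}I_{\{Z_-<1\}}$ — which is where the honest-time and finiteness hypotheses on $\tau$ actually enter.
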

\begin{proof} Thanks to Proposition \ref{lemma:predsetFG1}-(b), there exists a sequence of $\mathbb F$-stopping times,
 $(\sigma_n)_{n\geq 1}$ that increases to infinity almost surely and  $1\leq n^2(1-Z_{t-})^2$ on $\{Z_{t-}<1\}\cap\{t\leq\sigma_n\}$.
  Thus, for any nonnegative and bounded $\mathbb F$-optional process $H$, we have
\begin{eqnarray*}
\left(H\is V^{\mathbb F}\right)^{\sigma_n}&\leq& n^2\left(\sum H(1-Z_{-})^2I_{\{\widetilde Z=1>Z_{-}\}}\right)^{\sigma_n}\\
&=& n^2\left(\sum H(\Delta m)^2I_{\{\widetilde Z=1>Z_{-}\}}\right)^{\sigma_n}\leq n(H\is [m,m])^{\sigma_n}.
\end{eqnarray*}
Therefore, the proof of the lemma follows immediately from combining the above inequality and
 the fact that $ [m,m]\in {\cal A}^+_{loc}(\mathbb F)$.\qed\end{proof}

\noindent In the following, we announce our first main result.

\begin{theorem}\label{individualSaftertau0} Suppose that $S$ is an $\mathbb F$-quasi-left-continuous semimartingale, and $\tau \in \mathcal H$
is finite almost surely. Then, the following are equivalent.\\
(a) $S-S^{\tau}$ satisfies  NUPBR$(\mathbb
G)$.\\
(b) $(1-Z_{-})\is {\cal T}_a(S)$ satisfies NUPBR$(\mathbb F)$.\\
(c) $I_{\{Z_{-}<1\}}\is {\cal T}_a(S)$ satisfies NUPBR$(\mathbb F)$, where
\begin{equation}\label{assertion1}
{\cal T}_a(S):=S-[S,V^{\mathbb F}]=S-\sum \Delta S I_{\{\widetilde Z=1>Z_{-}\}}.\end{equation}
\end{theorem}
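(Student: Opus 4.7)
My plan is to establish the three equivalences by first handling the simple equivalence (b) $\Leftrightarrow$ (c) via a scaling argument, and then proving (a) $\Leftrightarrow$ (c) through an explicit deflator construction, following the blueprint of Theorem \ref{NUPR4Scontinuous} while compensating for jumps using Lemmas \ref{localboundZ-} and \ref{VFprocess}.

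For (b) $\Leftrightarrow$ (c), observe that $1-Z_- = (1-Z_-)I_{\{Z_-<1\}}$, hence
\[
(1-Z_-)\is {\cal T}_a(S)\;=\;(1-Z_-)\is\bigl(I_{\{Z_-<1\}}\is {\cal T}_a(S)\bigr),
\]
and the predictable integrand $(1-Z_-)$ is strictly positive on the support $\{Z_-<1\}$ of $I_{\{Z_-<1\}}\is {\cal T}_a(S)$. Since NUPBR is invariant under scaling by a strictly positive predictable integrand --- to any admissible $H$ for a semimartingale $X$ one associates $HK^{-1}$ for $K\is X$ with identical gain process, and conversely --- assertions (b) and (c) are equivalent.

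For (c) $\Rightarrow$ (a), I would first use Proposition \ref{NUPBRLocalization} together with an equivalent change of measure, as in the first part of the proof of Theorem \ref{NUPR4Scontinuous}, to reduce to the case where $S$ is an $\mathbb F$-local martingale (so that Lemma \ref{localboundZ-}(a) applies directly to $S$ yielding the $\mathbb G$-decomposition $S - S^{\tau} = \widehat S - (1-Z_-)^{-1}I_{\Lbrack\tau,+\infty\Rbrack}\is\langle S, m\rangle^{\mathbb F}$). Given an $\mathbb F$-deflator $L^{\mathbb F}$ for $I_{\{Z_-<1\}}\is {\cal T}_a(S)$, the candidate $\mathbb G$-deflator is
\[
L^{\mathbb G} \;:=\; L^{\mathbb F}\,{\cal E}\bigl((1-Z_-)^{-1}I_{\Lbrack\tau,+\infty\Rbrack}\is \widehat m\bigr),
\]
where Lemma \ref{localboundZ-}(b) guarantees that the exponential factor is a well-defined positive $\mathbb G$-local martingale. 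Via integration by parts, one verifies that $L^{\mathbb G}(S-S^{\tau})$ is a $\mathbb G$-$\sigma$-martingale, with the cancellation relying crucially on the identity $\Delta m = 1-Z_-$ on $\{\widetilde Z=1>Z_-\}$ and on $[S,V^{\mathbb F}] = \sum \Delta S\,I_{\{\widetilde Z=1>Z_-\}}$ absorbing precisely the problematic jump contribution at this set. The explicit jump correction ensuring positivity of $L^{\mathbb G}$ is handled in Section \ref{explicitdeflators}.

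For the converse (a) $\Rightarrow$ (c), I would invert the above construction: a $\mathbb G$-deflator for $S-S^{\tau}$ decomposes canonically (after projecting on $\mathbb F$) into a product of a $\widehat m$-exponential and an $\mathbb F$-process that serves as an $\mathbb F$-deflator for $I_{\{Z_-<1\}}\is {\cal T}_a(S)$. The principal obstacle in both directions is the rigorous treatment of the jumps on $\{\widetilde Z=1>Z_-\}$: on this set $\Delta m = 1-Z_-$, so the naive $\mathbb G$-drift formula of Lemma \ref{localboundZ-}(a) produces jump contributions of size $\Delta S$ that cannot be neutralized by any $\mathbb F$-predictable integrand acting on $S$ alone --- which is precisely why $V^{\mathbb F}$ enters the picture and why ${\cal T}_a$ strips these jumps from $S$. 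Lemma \ref{VFprocess}, together with the quasi-left-continuity of $S$, ensures that $[S,V^{\mathbb F}]$ is a well-defined $\mathbb F$-semimartingale with $\mathbb F$-locally integrable variation, making the translation between the $\mathbb G$-process $S-S^{\tau}$ and the $\mathbb F$-process $I_{\{Z_-<1\}}\is {\cal T}_a(S)$ possible. Verifying positivity of the candidate $L^{\mathbb G}$ and the $\sigma$-martingale property of $L^{\mathbb G}(S-S^{\tau})$ across all jump terms simultaneously is the main technical hurdle.
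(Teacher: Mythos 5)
Your (b)$\Leftrightarrow$(c) step is fine (your scaling argument, with the strictly positive predictable factor $(1-Z_-)I_{\{Z_-<1\}}+I_{\{Z_-=1\}}$, is a legitimate alternative to the paper's use of the $\mathbb F$-local boundedness of $(1-Z_{-})^{-1}I_{\{Z_{-}<1\}}$). But your (c)$\Rightarrow$(a) contains two concrete errors. First, you cannot ``reduce to the case where $S$ is an $\mathbb F$-local martingale'': hypothesis (c) only provides a deflator for $I_{\{Z_-<1\}}\is{\cal T}_a(S)$, and $S$ itself may well fail NUPBR$(\mathbb F)$ (the stripped jumps $\sum\Delta S I_{\{\widetilde Z=1>Z_-\}}$ and the behaviour on $\{Z_-=1\}$ are not controlled by (c)). The reduction must be performed on ${\cal T}_a(S)$, using the identity $S-S^{\tau}={\cal T}_a(S)-({\cal T}_a(S))^{\tau}$ together with the fact that ${\cal T}_a(S)$ does not jump on $\{\widetilde Z=1>Z_-\}$, which is how the paper proceeds (Remark \ref{argum4sectiondeflator} and Corollary \ref{Cor4explDefltor}). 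Second, your candidate $L^{\mathbb G}=L^{\mathbb F}{\cal E}\bigl((1-Z_-)^{-1}I_{\Lbrack\tau,+\infty\Rbrack}\is\widehat m\bigr)$ is not a $\mathbb G$-deflator: $L^{\mathbb F}$ is an $\mathbb F$-local martingale and therefore acquires a drift after the honest time $\tau$ (Lemma \ref{localboundZ-}(a)), so the product is not a $\mathbb G$-local martingale; moreover the exponential factor need not be positive, since $1+(1-Z_-)^{-1}\Delta\widehat m=\bigl(1-2Z_-+\widetilde Z\bigr)(1-Z_-)^{-1}+(1-Z_-)^{-2}\Delta\langle m\rangle^{\mathbb F}$ can be negative when $m$ jumps. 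This is exactly why the paper's deflator (\ref{LG}) carries the extra jump correction $W^{\mathbb G}-(W^{\mathbb G})^{p,\mathbb G}$, and why the $\mathbb F$-deflator is absorbed by localization and an equivalent change of measure (Proposition \ref{NUPBRLocalization}) rather than by multiplication.

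The more serious gap is (a)$\Rightarrow$(c), which the paper identifies as the difficult direction. ``A $\mathbb G$-deflator decomposes canonically into a $\widehat m$-exponential times an $\mathbb F$-deflator for $I_{\{Z_-<1\}}\is{\cal T}_a(S)$'' is a restatement of the goal, not an argument: no such decomposition is available, and you give no mechanism that produces an $\mathbb F$-deflator from the $\mathbb G$-one. The paper does this through predictable characteristics: Theorem \ref{mgDensityCharac} converts NUPBR into the existence of a pair $(\beta,f)$ satisfying a drift equation and integrability conditions; Proposition \ref{lemma:predsetFG} replaces the $\mathbb G$-predictable functionals by $\mathbb F$-predictable ones on $\Lbrack\tau,+\infty\Rbrack$; and Lemma \ref{maintechnique} with Proposition \ref{Gstoppingtimeaftertau} translate the $\mathbb G$-conditions on $\Lbrack\tau,+\infty\Rbrack$ into $\mathbb F$-conditions on $\{Z_-<1\}$ (Lemma \ref{mgGtomgF}). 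Even after that, the natural candidate density for ${\cal T}_a(S)$ involves the factor $\psi^{-1}I_{\{\psi>0\}}$, whose integrability cannot be controlled (Remark \ref{remark2difficulty}); the paper circumvents this obstacle by passing to the auxiliary model $S^{(1)}=I_{\{Z_-<1\}}\is S-[S,m^{(1)}]$ and the equivalent measure ${\cal E}(N^{(\psi)})$ (Proposition \ref{NUPBR4T(S)/S1}) before verifying the conditions of Theorem \ref{mgDensityCharac}. None of these steps --- the $\mathbb G$-to-$\mathbb F$ projection of the predictable data, the integrability of the candidate, or the verification of the $\mathbb F$-drift equation for ${\cal T}_a(S)$ --- is addressed in your sketch, so this direction remains unproved.
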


\noindent The proof of (a)$\Longrightarrow$(b) is technical and requires notations. Thus, for the reader's convenience, we
postponed the whole theorem's proof to Section \ref{Sectionproofs}.

\begin{remark}
(a) The theorem asserts, in a precise and deep manner, that $\mathbb G$-arbitrages for the process $S-S^{\tau}$
 are intimately related to the interplay between the jumps of $S$ and the jumps of $\widetilde Z$ to the value one.\\
(b) Theorem \ref{individualSaftertau0} claims that $S-S^{\tau}$ is arbitrage-free under
 $\mathbb G$ if and only if the part ${\cal T}_a(S)$ (of $S$) is arbitrage-free under
  $\mathbb F$ on the set $\{Z_{-}<1\}$. As a result, this allows us to single out practical
   cases for which the NUPBR is preserved after $\tau$, as
outlined in the forthcoming Corollary \ref{corollaryofmain1} and
Theorem \ref{BKPredictableJumpsaftertau3}.
\end{remark}

\begin{corollary}\label{corollaryofmain1}
Suppose that $S$ is $\mathbb F$-quasi-left-continuous, and $\tau\in  \mathcal H$ is finite almost surely. Then the following assertions hold:\\
 {\rm{(a)}} If $\left(S, \sum (\Delta S)I_{\{\widetilde Z=1>Z_{-}\}} \right)$ satisfies  NUPBR$(\mathbb F)$,
  then $S-S^{\tau}$ satisfies  NUPBR$(\mathbb G)$.\\
{\rm{(b)}} If  $S$ satisfies   NUPBR$(\mathbb F)$ and $\{\Delta S\not=0\}\cap\{\widetilde
Z=1>Z_{-}\}=\emptyset$, then $S-S^{\tau}$
satisfies   NUPBR$(\mathbb G)$.
\end{corollary}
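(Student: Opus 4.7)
The plan is to derive both assertions from Theorem \ref{individualSaftertau0}, specifically the equivalence $(a)\Leftrightarrow(c)$, by checking that $I_{\{Z_{-}<1\}}\is {\cal T}_a(S)$ satisfies NUPBR$(\mathbb F)$.

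For assertion (a), I would set $K:=\sum (\Delta S)I_{\{\widetilde Z=1>Z_{-}\}}$, so that ${\cal T}_a(S)=S-K$ by (\ref{assertion1}). The NUPBR$(\mathbb F)$ hypothesis on the pair $(S,K)$ yields, via Proposition 2.3 of \cite{aksamit/choulli/deng/jeanblanc}, a positive $\mathbb F$-local martingale $L$ such that both $LS$ and $LK$ are $(\mathbb F,\sigma)$-martingales. Since the class of $(\mathbb F,\sigma)$-martingales is stable under linear combinations, $L{\cal T}_a(S)=LS-LK$ is itself an $(\mathbb F,\sigma)$-martingale, so $L\in {\cal L}_{\sigma}({\cal T}_a(S),\mathbb F)$ and ${\cal T}_a(S)$ satisfies NUPBR$(\mathbb F)$.

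To transfer NUPBR from ${\cal T}_a(S)$ to $I_{\{Z_{-}<1\}}\is {\cal T}_a(S)$, I would invoke the elementary observation that NUPBR is stable under integration by a bounded $\mathbb F$-predictable process: any $H\in {\cal L}(I_{\{Z_{-}<1\}}\is {\cal T}_a(S),\mathbb F)$ with $H\is (I_{\{Z_{-}<1\}}\is {\cal T}_a(S))\geq -1$ rewrites as $(HI_{\{Z_{-}<1\}})\is {\cal T}_a(S)\geq -1$ with the same terminal wealth, so that ${\cal K}_{T'}(I_{\{Z_{-}<1\}}\is {\cal T}_a(S))\subseteq {\cal K}_{T'}({\cal T}_a(S))$; the latter is bounded in probability by the previous step, and Theorem \ref{individualSaftertau0} closes the case.

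Assertion (b) is an immediate corollary of (a): the disjointness $\{\Delta S\ne 0\}\cap\{\widetilde Z=1>Z_{-}\}=\emptyset$ forces $K\equiv 0$, so the pair $(S,K)=(S,0)$ trivially inherits NUPBR$(\mathbb F)$ from $S$, and (a) applies. I do not anticipate any serious obstacle here: all the substantive work is absorbed into Theorem \ref{individualSaftertau0}, and the only technical ingredient invoked here — the stability of NUPBR under stochastic integration by a bounded predictable integrand — is essentially definitional via the sets ${\cal K}_{T'}$.
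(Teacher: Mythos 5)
Your proof is correct and takes essentially the same route as the paper: the paper's one-line argument exhibits $(1-Z_-)\is {\cal T}_a(S)$ directly as $(1-Z_-,-(1-Z_-))\is (S,K)$, a bounded predictable integral against the pair assumed to satisfy NUPBR$(\mathbb F)$, and then applies Theorem \ref{individualSaftertau0}~(a)$\Leftrightarrow$(b), whereas you unpack the same idea into two steps (deflator for $S-K$, then bounded integration by $I_{\{Z_-<1\}}$) and use (a)$\Leftrightarrow$(c) instead. The logical content is identical; the paper is merely terser.
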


\begin{remark} 1) Assertion (b) asserts that if $S$ does not jump on $\{\widetilde Z=1>Z_{-}\}$,
 then no arbitrage under $\mathbb G$ will occur in the part ``after-$\tau$". Assertion (a) gives
  much weaker assumption than assertion (b), as it assumes that
   $L\left(\sum \Delta SI_{\{\widetilde Z=1>Z_{-}\}}\right)\in {\cal M}_{loc}(\mathbb F)$ for some $L\in {\cal L}_{\sigma}(S,\mathbb F)$
    (defined in (\ref{Set4SigmaMgdensities})),
    while assertion (b) assumes that $\sum \Delta SI_{\{\widetilde Z=1>Z_{-}\}}$ is null. \end{remark}

\begin{proof}{\it of Corollary \ref{corollaryofmain1}:} It is obvious that assertion (a) follows directly from combining
 $(1-Z_{-})\is {\cal T}_a(S)=\left( 1-Z_{-},-(1-Z_{-})\right)\is \left(S,  \sum \Delta SI_{\{\widetilde Z=1>Z_{-}\}}\right)$
  and Theorem \ref{individualSaftertau0}. Due to $\{\Delta S\not=0\}\cap\{\widetilde
Z=1>Z_{-}\}=\emptyset$, assertion (b) follows from assertion (a), and the proof of the corollary is achieved.\qed\end{proof}

\noindent In the spirit of further applicability of Theorem \ref{individualSaftertau0}, we state the following

\begin{theorem}\label{BKPredictableJumpsaftertau3} Suppose that $\tau \in \mathcal H$.
Let $\mu$ be the optional random measure associated to the jumps
of $S$, and $\nu^{\mathbb F}$ and $\nu^{\mathbb G}$ be the $\mathbb F$-compensator and the $\mathbb G$--compensator of
$\mu$ and $I_{\Lbrack\tau,+\infty\Rbrack}\cdot\mu$ respectively. If $S$
satisfies NUPBR$(\mathbb F)$ and
\begin{equation}\label{LevyAssumption}
I_{\Lbrack\tau,+\infty\Rbrack}\cdot \nu^{\mathbb F}\ \ \mbox{is equivalent to }\ \ \nu^{\mathbb G}\ \ \ \ \ \ \ \ \ P-a.s.,
\end{equation}
then $S-S^{\tau}$ satisfies NUPBR$(\mathbb G)$.
\end{theorem}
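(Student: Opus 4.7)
The plan is to apply Theorem \ref{individualSaftertau0} by verifying its condition (c): that $I_{\{Z_-<1\}} \is {\cal T}_a(S)$ satisfies NUPBR$(\mathbb F)$; the implication (c)$\Longrightarrow$(a) then gives NUPBR$(\mathbb G)$ for $S-S^\tau$. I would fix an $\mathbb F$-deflator $L \in \mathcal{L}_\sigma(S, \mathbb F)$ witnessing NUPBR$(\mathbb F)$ of $S$, and decompose
$$
I_{\{Z_-<1\}} \is {\cal T}_a(S) \;=\; I_{\{Z_-<1\}} \is S \;-\; U, \qquad U := \sum \Delta S \, I_{\{\widetilde Z = 1 > Z_-\}}.
$$
The first term inherits NUPBR$(\mathbb F)$ through $L$ since $I_{\{Z_-<1\}}$ is bounded and $\mathbb F$-predictable. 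By Lemma \ref{VFprocess} and the local boundedness of $\Delta S$ for the $\mathbb F$-quasi-left-continuous semimartingale $S$, the correction $U$ has $\mathbb F$-locally integrable variation, with dual $\mathbb F$-predictable projection equal to $(x\, I_{\{\widetilde Z=1>Z_-\}}) \star \nu^{\mathbb F}$.

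The critical step is to exploit the equivalence $I_{\Lbrack\tau,+\infty\Rbrack} \cdot \nu^{\mathbb F} \sim \nu^{\mathbb G}$ via the honest-time compensator formula in the progressive enlargement, which represents $\nu^{\mathbb G}$ as a density transform of $I_{\Lbrack\tau,+\infty\Rbrack} \cdot \nu^{\mathbb F}$; the equivalence precisely says that this $\mathbb G$-predictable density, together with its reciprocal, is strictly positive $\nu^{\mathbb F}$--a.e. on $\Lbrack\tau,+\infty\Rbrack$. Combining this strict positivity with the $\sigma$-martingale property of $LS$ (which encodes how $L$ kills both the continuous and the $\nu^{\mathbb F}$-driven jump drift of $S$), I would extract a positive $\mathbb F$-local martingale $L'$ --- a suitable modification of $L$ supported by the Ansel--Stricker / Yoeurp-type machinery --- such that $L' \, \bigl( I_{\{Z_-<1\}} \is {\cal T}_a(S) \bigr)$ is an $(\mathbb F, \sigma)$-martingale. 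This establishes NUPBR$(\mathbb F)$ for $I_{\{Z_-<1\}} \is {\cal T}_a(S)$, and hence NUPBR$(\mathbb G)$ for $S - S^\tau$ by Theorem \ref{individualSaftertau0}.

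The main obstacle is the rigorous translation of the abstract equivalence of random measures into a concrete statement about the $\mathbb F$-deflator. The pre-$\tau$ jumps contributing to $U^{p,\mathbb F}$ live on the thin set $\{\widetilde Z = 1 > Z_-\} \subset \Lbrack 0,\tau\Lbrack$ (using the honest-time inclusion $\Lbrack\tau,+\infty\Rbrack\subset\{\widetilde Z<1\}$), whereas the equivalence constrains $\nu^{\mathbb F}$ on the complementary set $\Lbrack\tau,+\infty\Rbrack$; bridging these two regions requires a Jacod-type disintegration of $\nu^{\mathbb F}$ with respect to $\tau$ so that the positivity coming from the equivalence can be propagated into a $\sigma$-martingale-density modification of $L$. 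Carrying this out rigorously should draw on the explicit deflator construction developed in Section \ref{explicitdeflators}.
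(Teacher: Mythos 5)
Your plan correctly identifies that the target is condition (c) of Theorem \ref{individualSaftertau0}, and the decomposition $I_{\{Z_-<1\}}\is{\cal T}_a(S)=I_{\{Z_-<1\}}\is S-U$ with $U=\sum\Delta S\,I_{\{\widetilde Z=1>Z_-\}}$ is the right starting point. But the argument leaves the decisive step as a hope rather than a proof, and the mechanism you sketch (modifying the deflator $L$ using a ``Jacod-type disintegration'' and strict positivity of the density of $\nu^{\mathbb G}$) cannot close the gap you yourself flag: the equivalence hypothesis constrains $\nu^{\mathbb F}$ only on $\Lbrack\tau,+\infty\Rbrack$, whereas $U$ lives on the \emph{optional} set $\{\widetilde Z=1>Z_-\}$, which sits on $\Lbrack 0,\tau\Rbrack$. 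There is no direct implication that the equivalence annihilates $U$, and nothing in your sketch explains how the deflator $L$ (which witnesses NUPBR for $S$, not for $I_{\{Z_-<1\}}\is{\cal T}_a(S)$) can be modified to handle the extra drift created by $U$.

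What actually bridges the two regions in the paper is a change of variables you do not use. Using $\psi=M^P_\mu(I_{\{\widetilde Z<1\}}\,|\,\widetilde{\cal P}(\mathbb F))$ and $f_m=M^P_\mu(\Delta m\,|\,\widetilde{\cal P}(\mathbb F))$, Lemma \ref{lemFunctionalpsi} shows $\{\psi=0\}=\{Z_-+f_m=1\}$ $M^P_\mu$-a.e., and Proposition \ref{NUPBR4T(S)/S1} proves that $S^{(0)}:=I_{\{Z_-<1\}}\is{\cal T}_a(S)$ and $S^{(1)}:=I_{\{Z_-<1\}}\is S-[S,m^{(1)}]$ (with $m^{(1)}=I_{\{\psi=0\ \&\ Z_-<1\}}\star(\mu-\nu)$) have the same NUPBR$(\mathbb F)$ status. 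This replaces the optional correction on $\{\widetilde Z=1>Z_-\}$ by a correction on the \emph{predictable} set $\{Z_-+f_m=1>Z_-\}$. Then, because the density of $\nu^{\mathbb G}$ with respect to $I_{\Lbrack\tau,+\infty\Rbrack}\cdot\nu^{\mathbb F}$ is exactly $1-f_m/(1-Z_-)$ (see (\ref{muGnuG})) and $^{p,\mathbb F}(I_{\Lbrack\tau,+\infty\Rbrack})=1-Z_-$ modulo predictable graphs (which $\nu$ does not charge, by quasi-left-continuity), the equivalence hypothesis directly forces
$0=E\bigl[I_{\{Z_-+f_m=1>Z_-\}}\star\nu_\infty\bigr]=E\bigl[I_{\{Z_-+f_m=1>Z_-\}}\star\mu_\infty\bigr]$, hence $m^{(1)}\equiv 0$ and $S^{(1)}=I_{\{Z_-<1\}}\is S$, which trivially satisfies NUPBR$(\mathbb F)$. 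This is the step your proposal lacks: without Lemma \ref{lemFunctionalpsi} and Proposition \ref{NUPBR4T(S)/S1}, the hypothesis (\ref{LevyAssumption}) simply does not reach the set $\{\widetilde Z=1>Z_-\}$, and no modification of $L$ of the kind you describe is forthcoming.

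A smaller point: it is not true that quasi-left-continuity of $S$ makes $\Delta S$ locally bounded, nor does Lemma \ref{VFprocess} give local integrability of $U$ (that lemma controls the counting process $V^{\mathbb F}$, not $\Delta S\cdot V^{\mathbb F}$). This issue is side-stepped in the paper precisely because it never works with $U$ itself, only with $m^{(1)}$, for which local integrability is built into the definition of a compensated integral.
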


\noindent The proof of this theorem follows from Theorem \ref{individualSaftertau0} as long as we prove that,
 under (\ref{LevyAssumption}), $S$ satisfies the NUPBR$(\mathbb F)$ if and only if ${\cal T}_a(S)$
  satisfies the NUPBR$(\mathbb F)$. This proof is technical, and thus it is delegated to Section \ref{Sectionproofs}.

\begin{remark} Remark that we always have the absolute continuity
 $\nu^{\mathbb G}\ll I_{\Lbrack\tau,+\infty\Rbrack}\cdot \nu^{\mathbb F}$ $P-a.s.$
  This follows from the fact that $\nu^{\mathbb G}$ is absolutely continuous with respect to
   $\nu^{\mathbb F}$ and it lives on $\Lbrack\tau,+\infty\Rbrack$ only. \\
(a) {\bf The L\'evy Case:} Suppose that $S$ is a L\'evy process and $F(dx)$ is its L\'evy measure under $\mathbb F$,
 then $\nu^{\mathbb F}(dt,dx)=F(dx)dt$ and $\nu^{\mathbb G}(dt,dx)=I_{\Lbrack\tau,+\infty\Rbrack}F^{\mathbb G}_t(dx)dt$,
  where $F^{\mathbb G}_t(dx)$ is its L\'evy measure under $\mathbb G$. Thus, Theorem \ref{BKPredictableJumpsaftertau3}
   asserts that if $P\otimes \lambda$ almost every $(\omega, t)$ ($\lambda(dt)=dt)$, $F^{\mathbb G}_t(\omega, dx)=f(t,x,\omega)F(dx)$
    for some real-valued and positive functional $f(t,x,\omega)$, then $S-S^{\tau}$ satisfies  NUPBR$(\mathbb G)$.
     For more practical L\'evy cases, we refer the reader to \cite{Deng2014}.\\
(b) {\bf Examples \ref{example1}--\ref{example2} versus Theorem
\ref{BKPredictableJumpsaftertau3}:} In the context of Example
\ref{example1}, we easily calculate $\nu^{\mathbb F}(dt,dx)=I_{\{
a<Y_{t-}\leq a+1\}}\delta_1(dx)dt$ and $\nu^{\mathbb
G}(dt,dx)=I_{\Lbrack\tau,+\infty\Rbrack}(t)I_{\{ a<Y_{t-}\leq
a+1\}}\left(1-\phi_t/(1-Z_{t-})\right)\delta_1(dx)dt\equiv 0$
which is not equivalent to
$I_{\Lbrack\tau,+\infty\Rbrack}\cdot\nu^{\mathbb F}$. This example
shows that (\ref{LevyAssumption}) can be violated. Therefore, in
those circumstances, we can not conclude whether $S-S^{\tau}$
satisfies  NUPBR$(\mathbb G)$ or not directly from Theorem
\ref{BKPredictableJumpsaftertau3}.
\\
For the case of Example \ref{example2}, we have $\nu^{\mathbb F}(dt,dx)=I_{\{ Y_{t-}>a+1\}}\delta_1(dx)dt$
 and $\nu^{\mathbb G}(dt,dx)=I_{\Lbrack\tau,+\infty\Rbrack}(t)I_{\{ Y_{t-}> a+1\}}\left(1-\phi_t/(1-Z_{t-})\right)\delta_1(dx)dt$
  which is equivalent to $I_{\Lbrack\tau,+\infty\Rbrack}\cdot\nu^{\mathbb F}$ since $\{Y_{-}>a+1\}\subset\{\phi<1-Z_{-}\}$ $P\otimes dt$-a.e.
   Thus, Theorem \ref{BKPredictableJumpsaftertau3} allows us to conclude that $S-S^{\tau}$ fulfills the NUPBR$(\mathbb G)$.
\end{remark}

\noindent The rest of this subsection describes models of $\tau$ preserving the NUPBR.

\begin{theorem}\label{CorollaryBKapresdefault} Assume that $\tau \in {\mathcal H}$.
 Then, the following are equivalent.\\
{\rm{(a)}} The set $\{\widetilde Z=1>Z_{-}\}$ is accessible (i.e. it is
contained in a countable union of graphs of $\mathbb F$-predictable stopping times).\\
{\rm{(b)}} For every (bounded) $\mathbb F$-quasi-left-continuous
martingale  $X$, the process  $X-X^{\tau}$ satisfies NUPBR$(\mathbb G)$.\\
{\rm{(b')}} For any probability $Q\sim P$ and every (bounded) $\mathbb F$-quasi-left-continuous
$X\in {\cal M}(Q,\mathbb F)$, the process  $X-X^{\tau}$ satisfies NUPBR$(\mathbb G)$.\\
{\rm{(c)}} For every (bounded) $\mathbb F$-quasi-left-continuous process $X$ satisfying NUPBR$(\mathbb F)$,
 the process  $X-X^{\tau}$ satisfies NUPBR$(\mathbb G)$.
 \end{theorem}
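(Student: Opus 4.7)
My plan proceeds via the cycle $(a)\Rightarrow(c)\Rightarrow(b')\Rightarrow(b)\Rightarrow(a)$, using Corollary \ref{corollaryofmain1} in the easy direction and Theorem \ref{individualSaftertau0} together with Lemma \ref{NUPBRforPredictableProcesses} to build a counterexample in the hard direction.

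For $(a)\Rightarrow(c)$, if $\{\widetilde Z=1>Z_{-}\}$ is accessible and $X$ is $\mathbb F$-quasi-left-continuous, then $\{\Delta X\neq 0\}\cap\{\widetilde Z=1>Z_{-}\}=\emptyset$, since an $\mathbb F$-quasi-left-continuous process has no jumps on any countable union of graphs of $\mathbb F$-predictable stopping times; Corollary \ref{corollaryofmain1}(b) then delivers NUPBR$(\mathbb G)$ for $X-X^{\tau}$. The implication $(c)\Rightarrow(b')$ follows because for any $Q\sim P$ the $\mathbb F$-density process $Z^Q$ is a strictly positive $P$-martingale and $XZ^Q$ is a $P$-local martingale for every $X\in{\cal M}(Q,\mathbb F)$, so $Z^Q\in{\cal L}_{\sigma}(X,\mathbb F)$ and $X$ satisfies NUPBR$(\mathbb F)$ under $P$. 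Specializing $Q=P$ then yields $(b')\Rightarrow(b)$.

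For the crucial implication $(b)\Rightarrow(a)$, I argue by contraposition. Since $\{\widetilde Z=1>Z_{-}\}$ is a thin optional set it decomposes as a countable union of graphs of $\mathbb F$-stopping times; the failure of $(a)$ thus yields an $\mathbb F$-totally inaccessible stopping time $T$ with $P(T<+\infty,\,\widetilde Z_T=1,\,Z_{T-}<1)>0$, and restricting $T$ to this ${\cal F}_T$-event I may assume $\widetilde Z_T=1>Z_{T-}$ holds a.s.\ on $\{T<+\infty\}$. Let $A$ be the $\mathbb F$-dual predictable projection of $I_{\Rbrack T,+\infty\Rbrack}$: by total inaccessibility of $T$, $A$ is continuous, and so $X:=I_{\Rbrack T,+\infty\Rbrack}-A$ is an $\mathbb F$-quasi-left-continuous local martingale. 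Stopping at $\sigma_K:=\inf\{t:A_t\geq K\}$ (a stopping time, thanks to the continuity of $A$) produces a uniformly bounded $\mathbb F$-quasi-left-continuous martingale, while $A_{T-}<+\infty$ a.s.\ on $\{T<+\infty\}$ ensures $P(T<\sigma_K,\,T<+\infty)>0$ for $K$ large enough. A direct computation then gives $\sum\Delta X\,I_{\{\widetilde Z=1>Z_{-}\}}=I_{\Rbrack T,+\infty\Rbrack}$ on the relevant event, whence ${\cal T}_a(X)=-A$ and $I_{\{Z_{-}<1\}}\is{\cal T}_a(X)=-I_{\{Z_{-}<1\}}\is A$ is an $\mathbb F$-predictable finite-variation process whose expected total mass equals $P(T<+\infty,\,Z_{T-}<1)>0$. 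Lemma \ref{NUPBRforPredictableProcesses} then forces a violation of NUPBR$(\mathbb F)$ for $I_{\{Z_{-}<1\}}\is{\cal T}_a(X)$, and Theorem \ref{individualSaftertau0} transfers this into a violation of NUPBR$(\mathbb G)$ for $X-X^{\tau}$, contradicting $(b)$.

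The main technical obstacle lies in the construction carried out in $(b)\Rightarrow(a)$: isolating a totally inaccessible slice of $\{\widetilde Z=1>Z_{-}\}$ through a well-chosen stopping time, building an $\mathbb F$-quasi-left-continuous martingale whose only jump lives on that slice, and meeting the bounded-martingale requirement via an honest localization of the (possibly unbounded) continuous compensator of $T$. Once such an $X$ is in hand, the rest of the argument is a mechanical application of Theorem \ref{individualSaftertau0} together with Lemma \ref{NUPBRforPredictableProcesses}.
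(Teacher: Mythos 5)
Your proof is correct and establishes all four equivalences, but it takes a genuinely different and more economical route than the paper's. You use a single cycle $(a)\Rightarrow(c)\Rightarrow(b')\Rightarrow(b)\Rightarrow(a)$ (four implications), whereas the paper proves the three separate bi-implications $(a)\Leftrightarrow(b)$, $(b)\Leftrightarrow(b')$, $(b')\Leftrightarrow(c)$ (six implications). Two of your steps depart substantively from the paper's. For $(c)\Rightarrow(b')$ you observe directly that the $Q$-density process is a positive $P$-local martingale which deflates $X$, so $X$ already satisfies NUPBR$(\mathbb F)$ and $(c)$ applies; this is cleaner than the paper's argument for $(b)\Rightarrow(b')$, which constructs an auxiliary two-dimensional quasi-left-continuous martingale from the quasi-left-continuous part $N^{(qc)}$ of the density's stochastic logarithm and then massages the resulting $\sigma$-martingale. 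For $(b)\Rightarrow(a)$ you build essentially the same counterexample martingale $X=I_{\Rbrack T,+\infty\Rbrack}-A$ from a totally inaccessible slice of $\{\widetilde Z=1>Z_{-}\}$ as the paper does, but where the paper stays on the $\mathbb G$-side (observing that $I_{\Lbrack\tau,+\infty\Rbrack}\is M$ is $\mathbb G$-predictable because $\{T^i<\infty\}\subset\{\tau\ge T^i\}$, and applying Lemma \ref{NUPBRforPredictableProcesses} in $\mathbb G$), you transfer the problem to the $\mathbb F$-side via Theorem \ref{individualSaftertau0} and apply Lemma \ref{NUPBRforPredictableProcesses} in $\mathbb F$ to ${\cal T}_a(X)=-A$. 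The paper's version of $(b)\Rightarrow(a)$ is more elementary and self-contained, since it does not invoke the machinery of Theorem \ref{individualSaftertau0}; yours is shorter once that theorem is available. One small caveat shared by both proofs: you lean on Theorem \ref{individualSaftertau0} and Corollary \ref{corollaryofmain1}, which are stated under the hypothesis that $\tau$ is finite almost surely, while Theorem \ref{CorollaryBKapresdefault} does not state this. The paper's own proof cites Corollary \ref{corollaryofmain1} in the same way, so this appears to be an implicit standing assumption rather than a gap specific to your argument, but it is worth flagging.
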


\begin{proof} The proof of the proposition is organized in three parts, where we prove (a)$\Longleftrightarrow$(b),
 (b)$\Longleftrightarrow$(b') and (b')$\Longleftrightarrow$(c) respectively.\\
{\bf 1)} We start by proving that (a)$\Rightarrow $(b).
Suppose that the thin set $\{\widetilde Z=1>Z_{-}\}$ is
accessible. Then, for any $\mathbb F$-quasi-left-continuous martingale $X$, we have $\{\Delta X\not=0\}\cap\{\widetilde
Z=1>Z_{-}\}=\emptyset$. Hence, thanks to Corollary \ref{corollaryofmain1}--(d), we deduce that $X-X^{\tau}$ satisfies NUPBR$(\mathbb
G)$. This completes the proof of (a)$\Rightarrow $(b). To prove the reverse,  assuming that assertion (b) holds,  we
consider a sequence of stopping times $(T_n)_{n\geq 1}$ that
exhausts the thin set $\{\widetilde Z=1>Z_{-}\}$ (i.e.,
$\left\{\widetilde Z=1>Z_{-}\right\}=\displaystyle\bigcup_{n=1}^{+\infty}\Rbrack
T_n\Lbrack$). Then, each $T_n$ -- that we denote by $T$ for the
sake of simplicity--  can be decomposed into a totally
inaccessible part $T^i$ and   an accessible part $T^a$ as
$T=T^i\wedge T^a$. Consider the following quasi-left-continuous $\mathbb F$-martingale
$$
M:=V-V^{p,\mathbb F}=:V-\widetilde V,$$
where $V:=I_{\Rbrack T^i,+\infty\Rbrack}$. Then, since   $\{
T^i<+\infty\}\subset\{\widetilde Z_{T^i}=1\}$, we deduce that
$\{T^i<+\infty\}\subset\{\tau\geq T^i\}$ and hence
$$
I_{\Lbrack \tau,+\infty\Rbrack}\is M=-I_{\Lbrack
\tau,+\infty\Rbrack}\is\widetilde V\ \ \ \ \mbox{is $\mathbb
G$-predictable}.$$ Then, the finite variation and $\mathbb G$-predictable process, $I_{\Lbrack \tau,+\infty\Rbrack}\is M$, satisfies NUPBR$(\mathbb
G)$ if and only if it is null, or equivalently
$$
0=E\left(I_{\Lbrack \tau,+\infty\Rbrack}\is\widetilde V_{\infty}\right)=E\left(\int_0^{\infty}(1-Z_{s-})d\widetilde V_s\right)=
E\left((1-Z_{T^i-})I_{\{ T^i<+\infty\}}\right).$$ Therefore, we conclude that $T^i=+\infty,\ P-a.s.$, and the stopping time $T$ is an accessible stopping time. This ends the proof of (a)$\Longleftrightarrow$ (b).\\

\noindent {\bf 2)} It is easy to see that the implication
(b')$\Longrightarrow$ (b) follows from taking $Q=P$. To prove the
reverse sense, we suppose given $Q\sim P$ and an $\mathbb
F$-quasi-left-continuous $X\in {\cal M}(\mathbb F,Q)$. Then, put
$$Z^{\mathbb F}_t:=E\Bigl({{dQ}\over{dP}}|{\cal F}_t\Bigr)=:{\cal E}_t(N),\ \ \ Y:=\left(\begin{array}{ll}
{\cal E}(N^{(qc)})X\\
{\cal E}(N^{(qc)})\end{array}\right)\ \ \mbox{and}\
N^{(qc)}:=N-I_{\bigcup_n\Rbrack \sigma_n\Lbrack}\is N,$$ where
$(\sigma_n)_n$ is the sequence of $\mathbb F$-predictable stopping
times that exhausts all the predictable jumps of $N$. In other
words, $N^{(qc)}$ is the $\mathbb F$-quasi-left-continuous local
martingale part of $N$.  Then, due to the quasi-left-continuity of
$X$, simple calculations show that $Y$ is an $\mathbb
F$-quasi-left-continuous martingale. Therefore, by a directly
applying assertion (b) to $Y$, we conclude that
$Y-Y^{\tau}=\left(\begin{array}{ll}
{\cal E}(N^{(qc)})(X-X^{\tau})+X^{\tau}({\cal E}(N^{(qc)})-{\cal E}(N^{(qc)})^{\tau})\\
{\cal E}(N^{(qc)})-{\cal E}(N^{(qc)})^{\tau}\end{array}\right)$ satisfies  NUPBR$(\mathbb G)$. This implies the existence of a real-valued positive $\mathbb G$-local martingale $Z^{\mathbb G}$ such that both processes $Z^{\mathbb G}{\cal E}(N^{(qc)})(X-X^{\tau})$ and $Z^{\mathbb G}{\cal E}(N^{(qc)})$ are $\sigma$-martingales under $(\mathbb G,P)$. Since $Z^{\mathbb G}{\cal E}(N^{(qc)})$ is positive and thanks to Proposition 3.3 and Corollary 3.5 of \cite{anselstricker1994} (which states that a non-negative $\sigma$-martingale is a local martingale), we deduce that  $Z^{\mathbb G}{\cal E}(N^{(qc)})$ is a real-valued positive element of ${\cal M}_{loc}(\mathbb G,P)$ such that  $Z^{\mathbb G}{\cal E}(N^{(qc)})(X-X^{\tau})$ is a $\sigma$-martingale. This proves that $X-X^{\tau}$ satisfies  NUPBR$(\mathbb G)$, and the proof of (b)$\Longleftrightarrow$ (b') is completed.\\

\noindent{\bf 3)} Remark that (c) $\Longrightarrow $ (b') is obvious, and hence we focus on proving the reverse only. Suppose that assertion (b') holds, and consider an $\mathbb F$-quasi-left-continuous process $X$ satisfying NUPBR$(\mathbb F)$. Then, there exists a real-valued and positive $\mathbb F$-local martingale $Y$, and a real-valued and $\mathbb F$-predictable process $\phi$ such that
$$
0<\phi\leq 1\ \ \ \ \ Y(\phi\is X)\ \ \ \mbox{is an $\mathbb F$-martingale}.$$
Let $(T_n)$ be a sequence of $\mathbb F$-stopping times that increases to infinity (almost surely) such that $Y^{T_n}$ is a martingale, and set
 $$
 \overline X:=\phi\is X,\ \ \ \ Q_n:=Y_{T_n}/Y_0\is P\sim  P.$$
 By applying assertion (b') to  $\overline{X}^{T_n}$  and $Q_n\sim P$ (since $\overline{X}^{T_n}$ is an $\mathbb F$-quasi-left-continuous element of ${\cal M}(\mathbb F,Q_n)$), we conclude that $\left(\phi\is (X-X^{\tau})\right)^{T_n}=\overline{X}^{T_n}-(\overline{X}^{T_n})^{\tau}$ satisfies NUPBR$(\mathbb G)$. Hence, thanks -again- to Proposition \ref{NUPBRLocalization},  NUPBR$(\mathbb G)$ for $X-X^{\tau}$ follows immediately. This ends the proof of (b) $\Longleftrightarrow$ (c), and that of the proposition as well.\qed
\end{proof}

\begin{theorem}\label{F-quasi-left-continuous}
Suppose that $\tau \in {\mathcal H}$  and $\mathbb F$ is quasi-left-continuous. Then the following assertions are equivalent.\\
{\rm{(a)}} The thin set $\{\widetilde Z=1>Z_{-}\}$ is evanescent.\\
{\rm{(b)}} For every (bounded) $\mathbb F$-martingale  $X$, the process  $X-X^{\tau}$ satisfies NUPBR$(\mathbb G)$.\\
{\rm{(b')}} For any probability $Q\sim P$ and every (bounded) $\mathbb F$-quasi-left-continuous $X\in {\cal M}(Q,\mathbb F)$, the process  $X-X^{\tau}$ satisfies NUPBR$(\mathbb G)$.\\
{\rm{(c)}} For every (bounded) $X$ satisfying NUPBR$(\mathbb F)$, $X-X^{\tau}$ satisfies the NUPBR$(\mathbb G)$.
\end{theorem}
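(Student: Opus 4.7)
The plan is to mirror the architecture of the proof of Theorem \ref{CorollaryBKapresdefault} and to exploit the quasi-left-continuity of $\mathbb F$ in order to upgrade ``accessible'' to ``evanescent''. The key observation underlying every step is that under qlc of $\mathbb F$, every $(\mathbb F,Q)$-local martingale for any $Q\sim P$ is automatically quasi-left-continuous; in particular the fundamental martingale $m$ of (\ref{processmZ}) has its jump set $\{\Delta m\neq 0\}$ totally inaccessible.

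For (a)$\Rightarrow$(b), let $X$ be any $\mathbb F$-martingale. Since $\mathbb F$ is qlc, $X$ is quasi-left-continuous, so $\{\Delta X\neq 0\}$ is totally inaccessible, while $\{\widetilde Z=1>Z_{-}\}$ is evanescent by (a); hence $\{\Delta X\neq 0\}\cap\{\widetilde Z=1>Z_{-}\}=\emptyset$, and Corollary \ref{corollaryofmain1}(b) delivers NUPBR$(\mathbb G)$ for $X-X^{\tau}$. For (b)$\Rightarrow$(a), I would apply step (1) of the proof of Theorem \ref{CorollaryBKapresdefault} verbatim to conclude that $\{\widetilde Z=1>Z_{-}\}$ is accessible. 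Combining this with the inclusion $\{\widetilde Z=1>Z_{-}\}\subset\{\Delta m\neq 0\}$ and the total inaccessibility of $\{\Delta m\neq 0\}$, any sequence $(T_n)$ of $\mathbb F$-stopping times that exhausts $\{\widetilde Z=1>Z_{-}\}$ consists of stopping times whose restrictions to $\{T_n<+\infty\}$ are simultaneously accessible and totally inaccessible, which forces $T_n=+\infty$ almost surely and proves (a).

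The equivalences (b)$\Leftrightarrow$(b') and (b')$\Leftrightarrow$(c) follow by transposing the change-of-measure and localization arguments of parts (2) and (3) of the proof of Theorem \ref{CorollaryBKapresdefault}, streamlined by the qlc of $\mathbb F$. Indeed, if $Z^{\mathbb F}=\mathcal E(N)$ is the density process of $Q\sim P$, then $N$ is already quasi-left-continuous and the decomposition $N=N^{(qc)}$ is trivial. Likewise, given $X$ satisfying NUPBR$(\mathbb F)$, a positive $\mathbb F$-local martingale $Y$ and a predictable $0<\phi\leq 1$ with $Y(\phi\is X)$ a martingale, the localized processes $\overline X^{T_n}:=(\phi\is X)^{T_n}$ are $(\mathbb F,Q_n)$-martingales under $Q_n:=(Y_{T_n}/Y_0)\is P$, hence automatically quasi-left-continuous, so (b') applies and yields NUPBR$(\mathbb G)$ for $\phi\is(X-X^{\tau})$. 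A $\sigma$-martingale density $L^{\mathbb G}$ of $\phi\is(X-X^{\tau})$ with predictable witness $\psi\in(0,1]$ then produces a $\sigma$-martingale density of $X-X^{\tau}$ through the rescaled predictable factor $\psi\phi\in(0,1]$, giving (c). The implication (c)$\Rightarrow$(b) is immediate.

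The main obstacle is the step (b)$\Rightarrow$(a): one must carefully stitch the exhaustion coming from Theorem \ref{CorollaryBKapresdefault} together with the qlc of $m$, and invoke the classical fact that a stopping time simultaneously accessible and totally inaccessible is almost surely infinite. The rest of the cycle amounts to verifying that the change-of-measure and localization machinery of Theorem \ref{CorollaryBKapresdefault} is unaffected by dropping the quasi-left-continuity hypothesis on $X$ in (c), which is precisely the payoff of assuming qlc of the filtration $\mathbb F$.
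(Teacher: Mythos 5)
Your proof is correct and follows the same overall architecture as the paper's: (a)$\Rightarrow$(b) via Corollary~\ref{corollaryofmain1}(b); reduction of (b)$\Rightarrow$(a) to Theorem~\ref{CorollaryBKapresdefault}'s conclusion that $\{\widetilde Z=1>Z_{-}\}$ is accessible (using that under quasi-left-continuity of $\mathbb F$ every $\mathbb F$-martingale is quasi-left-continuous, so (b) here coincides with (b) there); and (b)$\Leftrightarrow$(b')$\Leftrightarrow$(c) by transposing parts (2) and (3) of Theorem~\ref{CorollaryBKapresdefault}'s proof, which is exactly what the paper does by citation. The one place where you and the paper part ways is the endgame of (b)$\Rightarrow$(a). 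The paper invokes the standard fact that under a quasi-left-continuous filtration accessible stopping times are predictable, takes a predictable $T$ with $\Rbrack T\Lbrack\subset\{\widetilde Z=1>Z_{-}\}$, and computes $E\bigl(I_{\{T<\infty\}}(1-Z_{T-})\bigr)=E\bigl(I_{\{T<\infty\}}(1-\widetilde Z_T)\bigr)=0$ (the first equality because $T$ is predictable and $m$ is a UI martingale, the second because $\widetilde Z_T=1$ on $\{T<\infty\}$), then concludes $T=\infty$ from $\{T<\infty\}\subset\{Z_{T-}<1\}$. You instead note $\{\widetilde Z=1>Z_{-}\}\subset\{\Delta m\neq 0\}$ (since $\Delta m=\widetilde Z-Z_{-}$) and that $m$, being a local martingale in a qlc filtration, has a totally inaccessible jump set; an exhausting sequence of $\{\widetilde Z=1>Z_{-}\}$ is then simultaneously accessible and totally inaccessible, hence almost surely infinite. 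Both endgames exploit quasi-left-continuity of $\mathbb F$ in equivalent ways; yours bypasses the conditional expectation calculation and is arguably a touch slicker, while the paper's is more self-contained in that it does not appeal to the classical dichotomy for stopping times.
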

\begin{proof} The proofs of both equivalences (b') $\Longleftrightarrow $(c) and  (b) $\Longleftrightarrow $(b') follow the same arguments as the corresponding proofs in Theorem \ref{CorollaryBKapresdefault} (see parts 2) and 3)). Hence, we omit these proofs and the proof of (a) $\Longrightarrow $(b) as well, as this latter one follows immediately from Theorem \ref{CorollaryBKapresdefault}-(a) or Corollary \ref{corollaryofmain1}--(d). Thus, the remaining part of the proof focuses on proving (a) $\Longrightarrow $(b). To this end, we assume that assertion (b) holds, and recall that --when $\mathbb F$ is a quasi-left-continuous filtration-- any accessible $\mathbb F$-stopping time is predictable (see \cite{Dellacherie72} or  \cite[Th. 4.26]{Yanbook}). Then, since $\mathbb F$ is a quasi-left-continuous filtration, any $\mathbb F$-martingale is quasi-left-continuous, and from Theorem \ref{CorollaryBKapresdefault} we deduce that the thin set, $\{\widetilde Z=1<Z_{-}\}$, is predictable. Now take any $\mathbb F$-predictable stopping time $T$ such that
$$
\Rbrack T\Lbrack\subset \{\widetilde Z=1>Z_{-}\}.$$
This implies that $\{T<+\infty\}\subset\{\widetilde Z_T=1\}$, and due to $E(\widetilde Z_T|{\cal F}_{T-})=Z_{T-}$ on $\{T<+\infty\}$, we get
$$E(I_{\{T<+\infty\}}(1-Z_{T-}))=E(I_{\{T<+\infty\}}(1-\widetilde{Z}_{T}))=0.$$

\noindent This leads to $T=+\infty\ P-a.s$ (since $\{T<+\infty\}\subset\{Z_{_T-}<1\}$), and the proof of the theorem is completed.\qed\end{proof}

\begin{remark}
The conclusion of Theorem \ref{F-quasi-left-continuous} remains valid without
 the quasi-left-continuous assumption on the filtration $\mathbb F$. This general case,
  that can be found in the earlier version \cite{aksamit/choulli/deng/jeanblanc},
  requires more technical arguments.\end{remark}


\begin{remark}\label{argum4sectiondeflator}
 The proof of (b) $\Longleftrightarrow$ (c) in Theorem \ref{individualSaftertau0} is obvious
 (due to the fact that $(1-Z_{-})^{-1}I_{\{Z_{-}<1\}}$ is $\mathbb F$-locally bounded --see Proposition \ref{lemma:predsetFG1}--).
 Thus, the only parts that require proof are (a) $\Longleftrightarrow$(b). The implication
(b) $\Longrightarrow$(a) can be formulated in a more abstract way, due to the simple fact that
 ${\cal T}_a(S)-({\cal T}_a(S))^{\tau}=S-S^{\tau}$  and
${\cal T}_a(S)$ does not jump on $\{\widetilde Z=1>Z_{-}\}$. Thus, in virtue of Proposition \ref{NUPBRLocalization}, one can assume without
loss of generality that $S$ is an $\mathbb F$-quasi-left-continuous local martingale that does not jump on $\{\widetilde Z=1>Z_{-}\}$,
and prove that in this case $S-S^{\tau}$ satisfies the NUPBR$(\mathbb G)$. This is the aim of the next section in a more
interesting and general manner, as it constructs explicitly a $\mathbb G$-deflator for any $S-S^{\tau}$
as long as $S\in{\cal M}_{loc}(\mathbb F)$, quasi-left-continuous, and orthogonal to
$V^{\mathbb F}-\left(V^{\mathbb F}\right)^{p,\mathbb F}$ ($V^{\mathbb F}$ is defined in (\ref{VFdefinition})).\end{remark}
\section{Explicit Deflators for a Class of $\mathbb F$-Local Martingales}\label{explicitdeflators}
 This section proposes an explicit construction of $\mathbb G$-deflators for $M-M^{\tau}$,
  when $M$ spans a class of $\mathbb F$-quasi-left-continuous local martingales.
   The key mathematical idea behind this achievement lies in the exact relationship between the $\mathbb G$-compensator
    and the $\mathbb F$-compensator of a process with finite variation when both exists.
 This is the aim of the first subsection, while the second subsection states the results about deflators.
 \subsection{Dual Predictable Projections under $\mathbb G$ and $\mathbb F$}\label{Stochastic1}

\noindent In the following, we start our study  by writing the
$\mathbb G$-compensators/projections in terms of $\mathbb
F$-compensators/projections respectively. Even though, the proofs of the results of this subsection are easy and not technical at all,
 we opted for delegating them to the Appendix for the reader's convenience.

\begin{lemma}\label{lemmecrucialapresdefault} Suppose that  $\tau \in {\mathcal H}$. Then, the following assertions hold.\\
{\rm{(a)}}  For any $\mathbb F$-adapted process  $V$, with locally
integrable variation we have
\begin{eqnarray}\label{Gcompensatoraftertau}
  I_{\Lbrack\tau,+\infty\Rbrack}\is   V ^{p,\mathbb G} &=&
   I_{\Lbrack\tau,+\infty\Rbrack}(1-Z_{-})^{-1}\is \left((1-\widetilde Z) \is V\right)^{p,\mathbb F} \, ,
 \end{eqnarray}
 and  on $\Lbrack\tau,+\infty\Rbrack$ 
 \begin{equation}\label{predictableprojectionsjumpsaftertau}
 ^{p,{\mathbb G}}\left({\Delta V}\right) =  (1-Z_{-})^{-1}\ ^{p,{\mathbb F}}\left( (1-\widetilde Z){\Delta V} \right)\,.
 \end{equation}
{\rm{(b)}} For any $\mathbb F$-local martingale $M$,   one has, on $\Lbrack\tau,+\infty\Rbrack$
\begin{equation}\label{inclusionofsetsaftertau}
^{p,{\mathbb G}}\left({{\Delta M}\over{ 1-\widetilde Z}}\right) =
{{^{p,{\mathbb F}}\left({\Delta M}I_{\{\widetilde Z<1\}}\right)}\over{1-Z_{-}}},\ \ \mbox{and}\ \
^{p,{\mathbb G}}\left({1\over{ 1-\widetilde Z}}\right)={{^{p,{\mathbb F}}\left(I_{\{\widetilde Z<1\}}\right)}\over{1-Z_{-}}}.\end{equation}
{\rm{(c)}} For any quasi-left-continuous $\mathbb F$-local martingale $M$,  one has
 \begin{equation}\label{UsefulProjections}
  ^{p,{\mathbb G}}\left((\Delta M) (1-\widetilde Z)^{-1}I_{\Lbrack\tau,+\infty\Rbrack}\right)  = 0.
\end{equation}
 \end{lemma}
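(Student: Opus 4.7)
The proof rests on two classical ingredients: (i) the Jeulin--Yor decomposition for honest times, which states that every bounded $\mathbb G$-predictable process $K$ coincides on $\Lbrack\tau,+\infty\Rbrack$ with some bounded $\mathbb F$-predictable process $J$, i.e., $KI_{\Lbrack\tau,+\infty\Rbrack}=JI_{\Lbrack\tau,+\infty\Rbrack}$; and (ii) the projection identities $^{o,\mathbb F}(I_{\Lbrack\tau,+\infty\Rbrack})=1-\widetilde Z$ and $^{p,\mathbb F}(I_{\Lbrack\tau,+\infty\Rbrack})=1-Z_{-}$, which are direct consequences of the definitions of $Z$ and $\widetilde Z$.

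For the compensator identity in (a), both sides are $\mathbb G$-predictable FV processes, so equality is verified by contraction against an arbitrary bounded $\mathbb G$-predictable test $K$. Using (i), write $KI_{\Lbrack\tau,+\infty\Rbrack}=JI_{\Lbrack\tau,+\infty\Rbrack}$ with $J$ $\mathbb F$-predictable. Then, using the definition of $V^{p,\mathbb G}$, the optional-projection identity from (ii) against the $\mathbb F$-adapted FV process $V$, and the definition of the $\mathbb F$-dual predictable projection applied to the $\mathbb F$-predictable $J$,
\begin{equation*}
E\Big[\int K_sI_{\{s>\tau\}}\,dV^{p,\mathbb G}_s\Big]=E\Big[\int J_sI_{\{s>\tau\}}\,dV_s\Big]=E\Big[\int J_s(1-\widetilde Z_s)\,dV_s\Big]=E\Big[\int J_s\,d((1-\widetilde Z)\is V)^{p,\mathbb F}_s\Big].
\end{equation*}
On the RHS, since $((1-\widetilde Z)\is V)^{p,\mathbb F}$ is $\mathbb F$-predictable FV, the predictable-projection identity from (ii) inserts $^{p,\mathbb F}(I_{\{s>\tau\}})=1-Z_{s-}$, which cancels $(1-Z_{s-})^{-1}$ and yields the same expression. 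Equality of contractions for all such $K$ establishes the compensator identity. The predictable-projection identity in (a) then follows by taking $\Delta$ of both sides and invoking $\Delta X^{p,\mathbb H}={}^{p,\mathbb H}(\Delta X)$ in either filtration.

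For assertion (b), I would apply the jump identity from (a) at the level of thin optional processes. For the first identity, choose $V$ (suitably localized) to be the purely discontinuous $\mathbb F$-adapted FV process with $\Delta V=\Delta M(1-\widetilde Z)^{-1}I_{\{\widetilde Z<1\}}$, so that $(1-\widetilde Z)\Delta V=\Delta M\,I_{\{\widetilde Z<1\}}$; noting that $\widetilde Z<1$ on $\Lbrack\tau,+\infty\Rbrack$, this delivers the first identity of (b). Choosing $\Delta V=(1-\widetilde Z)^{-1}I_{\{\widetilde Z<1\}}$ produces the second identity analogously. For assertion (c), invoke quasi-left-continuity: for every $\mathbb F$-predictable stopping time $T$, $\Delta M_T=0$ a.s.\ on $\{T<+\infty\}$, so
\begin{equation*}
{}^{p,\mathbb F}\!\bigl(\Delta M\,I_{\{\widetilde Z<1\}}\bigr)_T=E\bigl[\Delta M_T\,I_{\{\widetilde Z_T<1\}}\,\big|\,\mathcal F_{T-}\bigr]=0,
\end{equation*}
so the right-hand side of (b)'s first identity vanishes on $\Lbrack\tau,+\infty\Rbrack$, proving (c).

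The main obstacle is purely integrability bookkeeping: ensuring that all Stieltjes integrals and projections used above are well-defined, which is addressed through standard localization so that $V$ and $(1-\widetilde Z)\is V$ have integrable variation. Once this is set up, the computation reduces to a clean application of Fubini combined with the projections (ii).
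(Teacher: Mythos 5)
Your approach to assertion (a) is genuinely different from the paper's. The paper applies the Jeulin--Yor $\mathbb G$-local-martingale decomposition (\ref{Mhat}) to $M:=V-V^{p,\mathbb F}$, then uses Yoeurp's lemma to rewrite $\langle V,m\rangle^{\mathbb F}$ as $(\Delta m\is V)^{p,\mathbb F}$ and simplifies via $\widetilde Z=Z_{-}+\Delta m$. You instead verify the identity by duality against bounded $\mathbb G$-predictable test processes, invoking Barlow/Jeulin's factorization $KI_{\Rbrack\tau,+\infty\Rbrack}=JI_{\Rbrack\tau,+\infty\Rbrack}$ and the projection identities for the indicator of the after-$\tau$ interval. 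Your route avoids Yoeurp's lemma but needs the Barlow/Jeulin structure theorem for $\mathbb G$-predictable processes (Proposition \ref{lemma:predsetFG}), which the paper's proof of (a) does not. Both are legitimate; the paper's proof is perhaps slightly shorter since it works directly at the level of processes rather than through expectations, and it reuses (\ref{Mhat}) which is already established.

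Two points deserve attention. First, your item (ii) conflates intervals: what holds is $^{o,\mathbb F}(I_{\Rbrack\tau,+\infty\Rbrack})=1-\widetilde Z$ (open at $\tau$) while $^{p,\mathbb F}(I_{\Lbrack\tau,+\infty\Rbrack})=1-Z_{-}$ (closed at $\tau$); as written, your ``$^{o,\mathbb F}(I_{\Lbrack\tau,+\infty\Rbrack})=1-\widetilde Z$'' would give $1-Z$, not $1-\widetilde Z$. Your computation actually uses the correct open interval $\{s>\tau\}$, so this is a typographical rather than substantive error, but you should keep the two intervals separate because the two sides of the chain of equalities use different ones. Second, in part (b) the choice $\Delta V=\Delta M(1-\widetilde Z)^{-1}I_{\{\widetilde Z<1\}}$ need not define a process of locally integrable variation: $(1-\widetilde Z)^{-1}$ can blow up and ``localization'' in the usual stopping-time sense does not cure this. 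The paper handles it by first truncating to $V_{\epsilon,\delta}:=\sum\Delta M(1-\widetilde Z)^{-1}I_{\{|\Delta M|\geq\epsilon,\,1-\widetilde Z\geq\delta\}}$, applying (\ref{predictableprojectionsjumpsaftertau}) to each $V_{\epsilon,\delta}$, and then passing to the limit $\epsilon,\delta\to 0$ by monotone convergence. Your phrase ``suitably localized'' does not make this step precise and should be replaced by such a truncation-plus-limit argument; otherwise your parts (b) and (c) are in line with the paper.
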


\noindent The next lemma focuses on the integrability of the process $ (1-\widetilde Z)^{-1}I_{\Lbrack\tau,+\infty\Rbrack}$
 with respect to any process with $\mathbb F$-locally integrable variation. As a result, we complete our comparison of
  $\mathbb G$ and $\mathbb F$ compensators. Recall that, due to \cite[Chapter XX]{DMM},  $\widetilde Z=Z$ on $\Lbrack\tau,+\infty\Rbrack$.

\begin{lemma}\label{maintechnique}
Let $\tau$ be an honest time and $V$ be a c\`adl\`ag and $\mathbb F$-adapted process with finite variation. Then, the following assertions hold.\\
(a) The process
\begin{equation}\label{ZcdotV} U:=\left(1-Z\right)^{-1}I_{\Lbrack
\tau,+\infty\Rbrack}\is V,\end{equation}
is a well defined process, that is $\mathbb G$-adapted, c\`adl\`ag and  has finite variation.\\
(b) If $V$ belongs to ${\cal A}_{loc}(\mathbb F)$ (respectively to ${\cal A}(\mathbb F)$), then $U\in {\cal A}_{loc}(\mathbb G)$
 (respectively $U\in {\cal A}(\mathbb G)$) and
\begin{equation}\label{GFcompensator1}
   U ^{p,\mathbb G}=I_{\Lbrack \tau,+\infty\Rbrack}(1-Z_{-})^{-1}\is(I_{\{ {\widetilde Z}<1\}}\is V)^{p,\mathbb F}.
\end{equation}
(c) Suppose furthermore that $\tau$ is finite almost surely. Then, $I_{\Lbrack \tau,+\infty\Rbrack}\is V\in {\cal A}_{loc}(\mathbb G)$
 if and only if $(1-\widetilde Z)\is V\in {\cal A}_{loc}(\mathbb F)$.\\
(d) Suppose furthermore that $\tau$ is finite almost surely, and $V$ is a nondecreasing and  $\mathbb
F$-predictable process. Then, for any $\mathbb
F$-predictable process $\varphi\geq0$,\\ $\varphi I_{\Lbrack
\tau,+\infty\Rbrack}\is V\in {\cal A}^+_{loc}(\mathbb G)$ iff $(1-Z_{-})\varphi\is V\in {\cal
A}^+_{loc}(\mathbb F)$ iff $\varphi I_{\{Z_{-}<1\}}\is V\in {\cal
A}^+_{loc}(\mathbb F)$.
\end{lemma}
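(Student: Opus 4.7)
The plan is to build up from (a) through (d), treating (b) as the main technical result and deriving (c) and (d) as consequences.

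For (a), the key step is to establish that $(1-Z)^{-1}I_{\Lbrack\tau,+\infty\Rbrack}$ is $\mathbb G$-locally bounded, which reduces $U$ to a well-defined pathwise Stieltjes integral. I would exploit the standard consequences of honesty that $Z=\widetilde Z$ on $\Rbrack\tau,+\infty\Rbrack$ together with $\widetilde Z<1$ strictly after $\tau$ to conclude $Z<1$ on $\Lbrack\tau,+\infty\Rbrack$, then upgrade to local boundedness using the c\`adl\`ag property of $Z$ and a localization $T_n:=\inf\{t\geq 0:(1-Z_t)^{-1}I_{\{t>\tau\}}\geq n\}$ mimicking the argument in the proof of Lemma \ref{localboundZ-}(b). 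The $\mathbb G$-adaptedness, c\`adl\`ag regularity, and finite variation of $U$ then transfer immediately from those of $V$.

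For (b), I would first secure local $\mathbb G$-integrability of $U$ by combining an $\mathbb F$-localization $R_n\uparrow\infty$ of $V$ with the $\mathbb G$-localization $S_n$ of the integrand from part (a), setting $T_n:=R_n\wedge S_n$. For the compensator formula, the strategy is to view $U$ as $I_{\Lbrack\tau,+\infty\Rbrack}\cdot W$, where $W:=(1-Z)^{-1}I_{\{Z<1\}}\cdot V$ is $\mathbb F$-adapted with finite variation, and then to apply Lemma \ref{lemmecrucialapresdefault}(a) to $W$ to obtain
$$U^{p,\mathbb G}=I_{\Lbrack\tau,+\infty\Rbrack}(1-Z_-)^{-1}\cdot\bigl((1-\widetilde Z)(1-Z)^{-1}I_{\{Z<1\}}\cdot V\bigr)^{p,\mathbb F}.$$
The central obstacle, on which the bulk of the effort will go, is the $\mathbb F$-dual predictable projection identity
$$\bigl((1-\widetilde Z)(1-Z)^{-1}I_{\{Z<1\}}\cdot V\bigr)^{p,\mathbb F}=\bigl(I_{\{\widetilde Z<1\}}\cdot V\bigr)^{p,\mathbb F},$$
which I would establish by using $\widetilde Z-Z=\Delta D^{o,\mathbb F}$ to localize the difference of the two integrands to the thin $\mathbb F$-predictable set $\{\Delta D^{o,\mathbb F}>0\}$, and then testing against bounded $\mathbb F$-predictable processes to verify the vanishing of the $\mathbb F$-compensator of the difference via the defining property of $D^{o,\mathbb F}$.

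Parts (c) and (d) follow from (b). For (c), the direction $(1-\widetilde Z)\cdot V\in{\cal A}_{loc}(\mathbb F)\Rightarrow I_{\Lbrack\tau,+\infty\Rbrack}\cdot V\in{\cal A}_{loc}(\mathbb G)$ is immediate from Lemma \ref{lemmecrucialapresdefault}(a) together with the $\mathbb G$-local boundedness of $(1-Z_-)^{-1}I_{\Lbrack\tau,+\infty\Rbrack}$ from Lemma \ref{localboundZ-}; the converse inverts this relationship by solving for the $\mathbb F$-compensator of $(1-\widetilde Z)\cdot V$ in terms of the $\mathbb G$-compensator of $I_{\Lbrack\tau,+\infty\Rbrack}\cdot V$. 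For (d), I would specialise (c) to the $\mathbb F$-predictable nondecreasing $V$ and combine it with two elementary observations: first, the equivalence between $(1-Z_-)\varphi\cdot V\in{\cal A}^+_{loc}(\mathbb F)$ and $\varphi I_{\{Z_-<1\}}\cdot V\in{\cal A}^+_{loc}(\mathbb F)$, which follows from the $\mathbb F$-local boundedness of $(1-Z_-)^{-1}I_{\{Z_-<1\}}$ stated in Proposition \ref{lemma:predsetFG1}; and second, the standard predictable projection identity linking $\mathbb G$- and $\mathbb F$-integrability for $\mathbb F$-predictable integrators, which converts the $\mathbb G$-integrability of $\varphi I_{\Lbrack\tau,+\infty\Rbrack}\cdot V$ into the $\mathbb F$-integrability of $\varphi(1-Z_-)\cdot V$.
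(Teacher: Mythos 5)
Your proposal diverges from the paper's argument and contains several genuine gaps that would prevent it from going through as written.

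The central problem is your reliance on local boundedness of the right-continuous integrand. In part (a) you claim that $(1-Z)^{-1}I_{\Lbrack\tau,+\infty\Rbrack}$ is $\mathbb G$-locally bounded ``mimicking the argument in Lemma~\ref{localboundZ-}(b)''. But that lemma proves local boundedness only of the \emph{left-limit} process $(1-Z_{-})^{-1}I_{\Lbrack\tau,+\infty\Rbrack}$: the localization $T_n:=\inf\{t:X_t\geq n\}$ controls $X^{T_n-}$ and $X_-^{T_n}$, but not $X^{T_n}$ itself, since $X_{T_n}$ can jump above $n$. For the c\`adl\`ag version $(1-Z)^{-1}I_{\Rbrack\tau,+\infty\Rbrack}$ there is no such bound in general, because honesty gives no control on the upward jumps of $Z$ towards $1$ after $\tau$ (indeed $Z_t=Z_{t-}+\Delta m_t$ on $\Rbrack\tau,+\infty\Rbrack$ and $\Delta m$ is unconstrained in sign and size). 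The paper never asserts local boundedness of $(1-Z)^{-1}I_{\Rbrack\tau,+\infty\Rbrack}$; instead it obtains well-definedness and finite variation of $U$ through the estimate
$E\bigl[(\phi\is\mathrm{Var}(U))_{\infty}\bigr]=E\bigl[\int \phi_t\frac{1-\widetilde Z_t}{1-Z_t}I_{\{Z_t<1\}}d\mathrm{Var}(V)_t\bigr]\leq E\bigl[(\phi\is\mathrm{Var}(V))_{\infty}\bigr]$,
where the crucial cancellation $\frac{1-\widetilde Z}{1-Z}\leq 1$ (because $\widetilde Z\geq Z$) comes from taking the $\mathbb F$-optional projection of $I_{\{t>\tau\}}$, which equals $1-\widetilde Z$. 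This inequality, not local boundedness, is the engine of the proof of (a) and (b).

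The same gap sinks your strategy for (b). The intermediate process $W:=(1-Z)^{-1}I_{\{Z<1\}}\is V$ need not be well-defined, let alone belong to ${\cal A}_{loc}(\mathbb F)$ — the unbounded $\mathbb F$-optional integrand $(1-Z)^{-1}I_{\{Z<1\}}$ can render the Stieltjes integral divergent even when $V$ has bounded variation — so Lemma~\ref{lemmecrucialapresdefault}(a) is not applicable to $W$. The paper circumvents this by truncating: it works with $U_n:=(1-\widetilde Z)^{-1}I_{\{\widetilde Z\leq 1-1/n\}}I_{\Lbrack\tau,+\infty\Rbrack}\is V$ (using $Z=\widetilde Z$ on $\Lbrack\tau,+\infty\Rbrack$), where the integrand is bounded by $n$, applies Lemma~\ref{lemmecrucialapresdefault}(a) there, observes that the factor $(1-\widetilde Z)$ and the cut-off integrand exactly cancel, and then lets $n\to\infty$ by monotone convergence. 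This truncation also makes your ``central obstacle'' — the dual-projection identity — evaporate: the factor $(1-\widetilde Z)$ in formula (\ref{Gcompensatoraftertau}) cancels $(1-\widetilde Z)^{-1}$ leaving $I_{\{\widetilde Z\leq 1-1/n\}}$, with no further compensator identity to prove. (As an aside, your claimed identity $\bigl((1-\widetilde Z)(1-Z)^{-1}I_{\{Z<1\}}\is V\bigr)^{p,\mathbb F}=\bigl(I_{\{\widetilde Z<1\}}\is V\bigr)^{p,\mathbb F}$ does hold, but for a stronger, pointwise, reason: $\widetilde Z=Z$ on $\{\widetilde Z<1\}$ because $\widetilde Z_\tau=1$ a.s.\ on $\{\tau<\infty\}$ for honest $\tau$; however, the set $\{\Delta D^{o,\mathbb F}>0\}=\{\widetilde Z>Z\}$ that you invoke is $\mathbb F$-optional, not $\mathbb F$-predictable, so the ``localize to a thin predictable set'' step would have to be rephrased.)

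Finally, for (c) and (d) you omit the key tool. The hard direction is passing from $\mathbb G$-local integrability to $\mathbb F$-local integrability, and this cannot be done by merely ``inverting the compensator relationship''. The paper uses Proposition~\ref{lemma:predsetFG1}(a): for finite $\tau$, any $\mathbb G$-localizing sequence $\sigma_n^{\mathbb G}\uparrow\infty$ can be replaced by an $\mathbb F$-localizing sequence $\sigma_n^{\mathbb F}\uparrow\infty$ with $\sigma_n^{\mathbb G}\vee\tau=\sigma_n^{\mathbb F}\vee\tau$; combined with the optional-projection equality $E[(I_{\Lbrack\tau,+\infty\Rbrack}\is V)_{\sigma_n^{\mathbb F}}]=E[((1-\widetilde Z)\is V)_{\sigma_n^{\mathbb F}}]$ this yields (c), and the predictable-projection analogue gives (d). Without this stopping-time transfer your argument has a hole exactly where the finiteness of $\tau$ must be used.
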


 \subsection{Construction of Deflators}
 Herein, we start by introducing a deflator-candidate as follows.

\begin{proposition}\label{lemmforNtilde1}Suppose that $\tau\in {\mathcal H}$ and
consider the $\mathbb G$-local martingale
  \begin{equation}\label{processNNhat1}
  \widehat{m} :=  I_{\Lbrack \tau,+\infty\Rbrack}\is m + (1-Z_{-})^{-1}I_{\Lbrack \tau,+\infty\Rbrack}\is\langle m \rangle^{\mathbb F},\end{equation}
 and the process
 \begin{equation}\label{Kappa/WG}
  W^{\mathbb G}:=\Bigl((1-Z_{-})(1-\widetilde Z)\Bigr)^{-1}I_{\Lbrack\tau,+\infty\Rbrack}\is [m,m].
\end{equation}
Then, the following assertions hold.\\
1) The nondecreasing and $\mathbb G$-optional process $W^{\mathbb G}$ belongs to ${\cal A}^+_{loc}(\mathbb G)$.\\
2) The $\mathbb G$-local martingale
\begin{equation}\label{LG}
L^{\mathbb G}:=(1-Z_{-})^{-1}I_{\Lbrack \tau,+\infty\Rbrack}\is\widehat m+W^{\mathbb G}-\left(W^{\mathbb G}\right)^{p,\mathbb G},
\end{equation}
satisfies the following properties:\\
{\rm{(2-a)}} ${\cal E}( L^{\mathbb G})>0$ (or equivalently $1+\Delta  L^{\mathbb G}>0$) and  $I_{\Lbrack 0,\tau\Lbrack}\is  L^{\mathbb G}=0.$\\
{\rm{(2-b)}} For any  $M\in {\cal M}_{0,loc}(\mathbb F)$, we
have
\begin{equation}\label{localintegrablebracket1}
 [ L^{\mathbb G}, \widehat M ]\in{\cal A}_{loc}(\mathbb G)\ \
 \left(\mbox{i.e.}\ \langle  L^{\mathbb G},\widehat  M \rangle^{\mathbb G}\ \mbox{ exists}\right),\end{equation}
 where $\widehat M$ is defined in (\ref{Mhat}).
\end{proposition}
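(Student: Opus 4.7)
The plan is to tackle the three claims sequentially, relying on two key ingredients: the algebraic identity $\widetilde{Z} - Z_{-} = \Delta m$ (equivalently $1-\widetilde Z = 1-Z_{-} - \Delta m$), which follows from $m = Z + D^{o,\mathbb F}$ and was already used in the proof of Lemma \ref{VFprocess}; and the fact that, under $\tau \in \mathcal H$, both $Z_{-}<1$ and $Z = \widetilde Z < 1$ hold on $\Lbrack \tau,+\infty\Rbrack$, as shown in the proof of Lemma \ref{localboundZ-}(b).

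\emph{Assertion 1.} The $\mathbb G$-optionality of $W^{\mathbb G}$ is clear from its definition. For the local integrability, both denominators become $\mathbb G$-locally bounded after localization: Lemma \ref{localboundZ-}(b) handles $(1-Z_{-})^{-1}I_{\Lbrack\tau,+\infty\Rbrack}$, and the same argument applied to the c\`adl\`ag $\mathbb G$-adapted process $(1-Z)^{-1}I_{\Lbrack\tau,+\infty\Rbrack}$, combined with $\widetilde Z = Z$ on $\Lbrack \tau,+\infty\Rbrack$, yields the same for $(1-\widetilde Z)^{-1}I_{\Lbrack\tau,+\infty\Rbrack}$. Under joint localization, $W^{\mathbb G}$ is dominated by a constant multiple of $[m,m] \in {\cal A}^{+}_{loc}(\mathbb F) \subset {\cal A}^{+}_{loc}(\mathbb G)$.

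\emph{Assertion 2 -- local martingale property and (2-a).} That $L^{\mathbb G} \in {\cal M}_{loc}(\mathbb G)$ follows from (1), the $\mathbb G$-local boundedness of the integrand, and the fact that $\widehat m$ is a $\mathbb G$-local martingale (stated in the proposition, cf.\ Lemma \ref{localboundZ-}(a)). The identity $I_{\Lbrack 0,\tau\Lbrack}\is L^{\mathbb G}=0$ follows because $I_{\Lbrack\tau,+\infty\Rbrack}\is \widehat m$ and $W^{\mathbb G}$ are both supported on $\Lbrack \tau,+\infty\Rbrack$, while the $\mathbb G$-predictability of $\Lbrack 0,\tau\Lbrack$ gives $I_{\Lbrack 0,\tau\Lbrack}\is (W^{\mathbb G})^{p,\mathbb G} = \bigl(I_{\Lbrack 0,\tau\Lbrack}\is W^{\mathbb G}\bigr)^{p,\mathbb G} = 0$. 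The strict positivity $1 + \Delta L^{\mathbb G}>0$ is the main obstacle: on $\Lbrack 0,\tau\Lbrack$ trivially $1 + \Delta L^{\mathbb G} = 1$, and on $\Lbrack \tau,+\infty\Rbrack$ at points where $\Delta (W^{\mathbb G})^{p,\mathbb G} = 0$ (in particular at $\mathbb G$-totally-inaccessible jumps), using $\Delta\widehat m=\Delta m$ together with $\Delta m = \widetilde Z - Z_{-}$, the short algebraic identity
$$\frac{\Delta m}{1-Z_{-}} + \frac{(\Delta m)^{2}}{(1-Z_{-})(1-\widetilde Z)} = \frac{1-Z_{-}}{1-\widetilde Z} - 1$$
gives $1 + \Delta L^{\mathbb G} = \frac{1-Z_{-}}{1-\widetilde Z} > 0$. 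At $\mathbb G$-predictable jump times $T \in \Lbrack \tau,+\infty\Rbrack$ the It\^o-type correction $(1-Z_{-})^{-1}I_{\Lbrack\tau,+\infty\Rbrack}\is\langle m \rangle^{\mathbb F}$ contributes the extra term $\Delta\langle m\rangle^{\mathbb F}_{T}/(1-Z_{T-})^{2}$ to $\Delta L^{\mathbb G}_{T}$, and the crucial step is to verify that this is exactly cancelled by $\Delta (W^{\mathbb G})^{p,\mathbb G}_{T}$, restoring $1 + \Delta L^{\mathbb G}_{T} = \frac{1-Z_{T-}}{1-\widetilde Z_{T}} > 0$. This cancellation is obtained by invoking Lemma \ref{lemmecrucialapresdefault}(a) to express the $\mathbb G$-predictable projection of $\Delta W^{\mathbb G}$ on $\Lbrack \tau,+\infty\Rbrack$ as a quotient of an $\mathbb F$-predictable projection by $(1-Z_{-})$, which collapses precisely to $\Delta \langle m\rangle^{\mathbb F}/(1-Z_{-})^{2}$.

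\emph{Assertion (2-b).} Expanding by bilinearity,
$$[L^{\mathbb G},\widehat M] = (1-Z_{-})^{-1}I_{\Lbrack\tau,+\infty\Rbrack}\is [\widehat m,\widehat M] + [W^{\mathbb G} - (W^{\mathbb G})^{p,\mathbb G},\widehat M].$$
Under the localization from step 1, the first summand is controlled via Kunita--Watanabe by $[m,m]^{1/2}[M,M]^{1/2}$ and is therefore $\mathbb G$-locally integrable; for the second, a jump-by-jump Cauchy--Schwarz against $W^{\mathbb G} \in {\cal A}^{+}_{loc}(\mathbb G)$ and $[\widehat M,\widehat M]\in {\cal A}^{+}_{loc}(\mathbb G)$ yields local integrability. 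Combining these bounds delivers $[L^{\mathbb G},\widehat M] \in {\cal A}_{loc}(\mathbb G)$.
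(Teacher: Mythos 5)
Your proposal has genuine gaps in all three parts.

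\textbf{Assertion 1.} You claim that the ``same argument'' as in Lemma~\ref{localboundZ-}(b) shows $(1-\widetilde Z)^{-1}I_{\Lbrack\tau,+\infty\Rbrack}$ is $\mathbb G$-locally bounded. It is not. Lemma~\ref{localboundZ-}(b) proves local boundedness of the \emph{predictable} left-limit process $(1-Z_{-})^{-1}I_{\Lbrack\tau,+\infty\Rbrack}$: the stopping times $T_n=\inf\{t: X_t\ge n\}$ cut off the process strictly \emph{before} the jump that takes it above $n$, which controls $X_{-}$ but not $X$ itself. The optional process $(1-\widetilde Z)^{-1}I_{\Lbrack\tau,+\infty\Rbrack}$ is merely finite-valued and c\`adl\`ag, and there is no uniform bound available at those jump times where $\widetilde Z$ gets arbitrarily close to $1$. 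The paper avoids this trap entirely: it factors $W^{\mathbb G}$ as a locally bounded predictable integrand $(1-Z_{-})^{-1}I_{\Lbrack\tau,+\infty\Rbrack}$ times the process $(1-\widetilde Z)^{-1}I_{\Lbrack\tau,+\infty\Rbrack}\is[m,m]$, and then invokes Lemma~\ref{maintechnique}(b) to deal with the $(1-\widetilde Z)^{-1}$ factor via an integrability argument (using $1-\widetilde Z_t=P(\tau<t\,|\,{\cal F}_t)$ rather than boundedness).

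\textbf{Assertion (2-a).} The claimed ``exact cancellation'' of the term $\Delta\langle m\rangle^{\mathbb F}/(1-Z_{-})^2$ by $\Delta(W^{\mathbb G})^{p,\mathbb G}$ does not hold. Applying Lemma~\ref{lemmecrucialapresdefault}(b) one gets, on $\Lbrack\tau,+\infty\Rbrack$, $\Delta(W^{\mathbb G})^{p,\mathbb G}=(1-Z_{-})^{-2}\,{}^{p,\mathbb F}\bigl((\Delta m)^2 I_{\{\widetilde Z<1\}}\bigr)$, whereas $\Delta\langle m\rangle^{\mathbb F}={}^{p,\mathbb F}\bigl((\Delta m)^2\bigr)$. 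Since $(\Delta m)^2 =(1-Z_{-})^2$ on $\{\widetilde Z=1\}$, the difference is $\,{}^{p,\mathbb F}(I_{\{\widetilde Z=1\}})$, a nonnegative but in general nonzero term. The paper's computation shows
\begin{equation*}
1+\Delta L^{\mathbb G}= \frac{1-Z_{-}}{1-\widetilde Z}+\,{}^{p,\mathbb F}\bigl(I_{\{\widetilde Z=1\}}\bigr)\quad\text{on }\Lbrack\tau,+\infty\Rbrack,
\end{equation*}
and positivity comes from the first summand alone, because $\tau\in\mathcal H$ guarantees $Z_{-}<1$ and $\widetilde Z<1$ there. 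Your final sign conclusion is right, but the mechanism you describe is wrong; were the cancellation exact it would force $\{\widetilde Z=1>Z_{-}\}$ to be $M^P_\mu$-null, which is not assumed. Also note the paper computes $\Delta L^{\mathbb G}$ in one stroke via the $\Delta(V^{p,\mathbb H})={}^{p,\mathbb H}(\Delta V)$ identity, without splitting jump times into totally-inaccessible and predictable ones, which is cleaner.

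\textbf{Assertion (2-b).} Your Kunita--Watanabe bound would need both $\widehat m$ and $\widehat M$ to be $\mathbb G$-locally square-integrable, but you do not establish this; in general $[\widehat M,\widehat M]$ is \emph{not} in ${\cal A}^+_{loc}(\mathbb G)$ for an arbitrary $\mathbb G$-local martingale (only its square root is, by BDG). The paper instead computes $[L^{\mathbb G},\widehat M]$ \emph{explicitly}. After discarding terms annihilated by quasi-left-continuity (the paper restricts to quasi-left-continuous $M$, which is the case used downstream), one finds $[L^{\mathbb G},\widehat M]=(1-\widetilde Z)^{-1}I_{\Lbrack\tau,+\infty\Rbrack}\is[m,M]$; since $m$ has bounded jumps $[m,M]\in{\cal A}_{loc}(\mathbb F)$, and Lemma~\ref{maintechnique}(b) then promotes this to ${\cal A}_{loc}(\mathbb G)$. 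That explicit identification is exactly what is exploited in the proof of Theorem~\ref{resultat1apresdefault}, so the structural computation is not a shortcut you can replace with a generic estimate.
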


\begin{proof}Thanks to Lemma \ref{localboundZ-}-(b), $(1-Z_{-})^{-1}I_{\Lbrack\tau,+\infty\Rbrack}$ is $\mathbb G$-locally bounded.
 Thus, by combining this fact with $[m,m]\in {\cal A}^+_{loc}(\mathbb F)$ and Lemma \ref{maintechnique}-(b),
  we conclude that $W^{\mathbb G}=(1-Z_{-})^{-1}(1-\widetilde Z)^{-1}I_{\Lbrack\tau,+\infty\Rbrack}\is [m,m]\in {\cal A}^+_{loc}(\mathbb G)$,
   and subsequently assertion (1) holds. Thus,
   the process $L^{\mathbb G}$ --given in (\ref{LG})-- is a well defined $\mathbb G$-local martingale.
    The rest of this proof focuses on proving the properties (2-a) and (2-b). To this end, by combining
     Lemma \ref{lemmecrucialapresdefault}-(b), the fact that $\Delta (V^{p,\mathbb H})=\ ^{p,\mathbb H}(\Delta V)$
     for any process $V$ with locally integrable variation and any filtration $\mathbb H$,
      and $\Delta m=\widetilde Z-Z_{-}$, on ${\Lbrack\tau,+\infty\Rbrack}$ we calculate
\begin{eqnarray*}
&&\Delta L^{\mathbb G}=(1-Z_{-})^{-1}\Delta\widehat{m}+\Delta W^{\mathbb G}-\Delta \left(W^{\mathbb G}\right)^{p,\mathbb G}\\
&=& {{\Delta m}\over{1-Z_{-}}} +{{\Delta\langle m\rangle^{\mathbb F}}\over{(1-Z_{-})^2}}+
{{(\Delta m)^2}\over{(1-\widetilde Z)(1-Z_{-})}}-\ ^{p,\mathbb G}\left({{(\Delta m)^2}\over{(1-\widetilde Z)(1-Z_{-})}}\right)\hskip 0.3cm\\
&=& {{\Delta m}\over{1-\widetilde Z}}+{{\Delta\langle m\rangle^{\mathbb F}}\over{(1-Z_{-})^2}}-
{{^{p,\mathbb F}\left((\Delta m)^2I_{\{\widetilde Z<1\}}\right)}\over{(1-Z_{-})^2}}=
-1+{{1-Z_{-}}\over{1-\widetilde Z}}+\ ^{p,\mathbb F}\left(I_{\{\widetilde Z=1\}}\right)\\
\end{eqnarray*}
Therefore, $1+\Delta L^{\mathbb G}=I_{\Lbrack\tau,+\infty\Rbrack}\Bigl[{{1-Z_{-}}\over{1-\widetilde Z}}+
\ ^{p,\mathbb F}\left(I_{\{\widetilde Z=1\}}\right)\Bigr]+I_{\Rbrack0,\tau\Lbrack}>0.$ This proves the property (2-a).
 In order to prove the property (2-b), we consider a quasi-left-continuous $\mathbb F$-local martingale $M$.
  Then, it is obvious that this quasi-left-continuous assumption implies that $\langle m, M\rangle^{\mathbb F}$
   is continuous and $[X,M]\equiv0$ for any $\mathbb G$-predictable process with finite variation $X$. As a result, we derive
\begin{eqnarray}\label{bracketLGM}
[L^{\mathbb G}, \widehat M]&=&[L^{\mathbb G}, M-M^{\tau}]=(1-Z_{-})^{-1}I_{\Lbrack\tau,+\infty\Rbrack}\is [m, M]+[W^{\mathbb G}, M]\nonumber\\
&=& {1\over{1-Z_{-}}}I_{\Lbrack\tau,+\infty\Rbrack}\is [m, M]
+{{\Delta m}\over{(1-Z_{-})(1-\widetilde Z)}}I_{\Lbrack\tau,+\infty\Rbrack}\is[m, M]\nonumber\\
&=&  \left(1-\widetilde Z\right)^{-1}I_{\Lbrack\tau,+\infty\Rbrack}\is [m, M].
\end{eqnarray}
Therefore, since $[m, M]\in {\cal A}_{loc}(\mathbb F)$, the property (2-b) follows immediately from combining the above equality
 and Lemma \ref{maintechnique}-(b). This ends the proof of the proposition.\qed\end{proof}

\noindent Below, we elaborate our main results deflator for ``the part-after-$\tau$".


\begin{theorem}\label{resultat1apresdefault}
Let $\tau\in {\mathcal H}$ be a finite almost surely  and $L^{\mathbb G}$ be defined by (\ref{LG}). Then, the following assertions hold.\\
(a) If $M$ is a quasi-left-continuous $\mathbb F$-local martingale such that $\sum \Delta M I_{\{\widetilde Z=1>Z_{-}\}}$
 is also an $\mathbb F$-local martingale, then $ {\cal E}(L^{\mathbb G})\left(M-M^{\tau}\right)$ is a $\mathbb G$-local martingale.\\
(b)  For any quasi-left-continuous $\mathbb F$-local martingale, $M$, such that $\{\widetilde Z=1>Z_{-}\}\cap\{\Delta M\not=0\}$
 is evanescent,  $ {\cal E}(L^{\mathbb G})\left(M-M^{\tau}\right)$ is a $\mathbb G$-local martingale.
 \end{theorem}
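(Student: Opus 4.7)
The plan is to prove (a) first, from which (b) follows immediately: under (b)'s evanescence assumption, $\sum\Delta M I_{\{\widetilde Z=1>Z_{-}\}}\equiv 0$ is trivially an $\mathbb F$-local martingale, so (a) applies.

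For (a), I would perform integration by parts on ${\cal E}(L^{\mathbb G})(M-M^{\tau})$ and use Lemma \ref{localboundZ-}-(a) to decompose $M-M^{\tau}=\widehat M-A$, with $A:=(1-Z_{-})^{-1}I_{\Lbrack\tau,+\infty\Lbrack}\is\langle M,m\rangle^{\mathbb F}$. Quasi-left-continuity of $M$ makes $\langle M,m\rangle^{\mathbb F}$ continuous (as the dual predictable projection of $[M,m]$, whose jumps sit on totally inaccessible times), so $A$ is continuous and thus $[{\cal E}(L^{\mathbb G}),A]=0$. Combining this with ${\cal E}(L^{\mathbb G})_{-}$ being $\mathbb G$-locally bounded (because ${\cal E}(L^{\mathbb G})>0$ by property (2-a) of Proposition \ref{lemmforNtilde1}) and $\widehat M,{\cal E}(L^{\mathbb G})\in{\cal M}_{loc}(\mathbb G)$, integration by parts reduces the task to showing
$${\cal E}(L^{\mathbb G})_{-}\is\bigl([L^{\mathbb G},\widehat M]-A\bigr)\in{\cal M}_{loc}(\mathbb G).$$
Plugging in $[L^{\mathbb G},\widehat M]=(1-\widetilde Z)^{-1}I_{\Lbrack\tau,+\infty\Lbrack}\is[m,M]$ from (\ref{bracketLGM}) and writing $[m,M]=\langle m,M\rangle^{\mathbb F}+K$ with $K:=[m,M]-\langle m,M\rangle^{\mathbb F}\in{\cal M}_{loc}(\mathbb F)$, the continuity of $\langle m,M\rangle^{\mathbb F}$ gives $(1-\widetilde Z)^{-1}I_{\Lbrack\tau,+\infty\Lbrack}\is\langle m,M\rangle^{\mathbb F}=(1-Z_{-})^{-1}I_{\Lbrack\tau,+\infty\Lbrack}\is\langle m,M\rangle^{\mathbb F}=A$ (the two integrands agree outside the thin set $\{\Delta m\neq 0\}$, which has zero measure against a continuous integrator). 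Hence the problem collapses to proving $(1-\widetilde Z)^{-1}I_{\Lbrack\tau,+\infty\Lbrack}\is K\in{\cal M}_{loc}(\mathbb G)$.

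This final step is the main obstacle, and it is where the hypothesis on $S_M:=\sum\Delta M I_{\{\widetilde Z=1>Z_{-}\}}$ enters. Since $\Lbrack\tau,+\infty\Lbrack\subset\{\widetilde Z<1\}$ and $\widetilde Z=Z$ on $\Lbrack\tau,+\infty\Lbrack$, one rewrites $(1-\widetilde Z)^{-1}I_{\Lbrack\tau,+\infty\Lbrack}\is K=(1-Z)^{-1}I_{\Lbrack\tau,+\infty\Lbrack}\is K_1$, with $K_1:=I_{\{\widetilde Z<1\}}\is K$. Splitting $K=K_1+K_2$ where $K_2:=I_{\{\widetilde Z=1\}}\is K$, and using that $\Delta m=1-Z_{-}$ on $\{\widetilde Z=1\}$, one obtains $K_2=(1-Z_{-})\is S_M$. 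The hypothesis $S_M\in{\cal M}_{loc}(\mathbb F)$, combined with the predictability and boundedness of $1-Z_{-}$, yields $K_2\in{\cal M}_{loc}(\mathbb F)$ and hence $K_1=K-K_2\in{\cal M}_{loc}(\mathbb F)\cap{\cal A}_{loc}(\mathbb F)$. Applying Lemma \ref{maintechnique}-(b) to $V=K_1$ then gives $(1-Z)^{-1}I_{\Lbrack\tau,+\infty\Lbrack}\is K_1\in{\cal A}_{loc}(\mathbb G)$ with $\mathbb G$-compensator $I_{\Lbrack\tau,+\infty\Lbrack}(1-Z_{-})^{-1}\is K_1^{p,\mathbb F}=0$, so the process is a $\mathbb G$-local martingale, closing the argument.
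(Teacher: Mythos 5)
Your proof follows the same architecture as the paper's: Itô/integration-by-parts applied to ${\cal E}(L^{\mathbb G})(M-M^\tau)$, the bracket identity $[L^{\mathbb G},\widehat M]=(1-\widetilde Z)^{-1}I_{\Lbrack\tau,+\infty\Rbrack}\is[m,M]$ from (\ref{bracketLGM}), and Lemma \ref{maintechnique}-(b) for the $\mathbb G$-compensator. The paper computes $\langle L^{\mathbb G},M-M^\tau\rangle^{\mathbb G}$ directly by applying Lemma \ref{maintechnique}-(b) to $V=[m,M]$ and then eliminates the residual term via the hypothesis in the form $\left(I_{\{\widetilde Z=1\}}\is[m,M]\right)^{p,\mathbb F}=0$; you instead peel off the compensator $\langle m,M\rangle^{\mathbb F}$ (which cancels against $A$ by continuity) and apply the lemma to $K_1=I_{\{\widetilde Z<1\}}\is K$. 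These are two equivalent orderings of the same calculation, and your reduction to $K_1^{p,\mathbb F}=0$ via Lemma \ref{maintechnique}-(b) is correct.

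The one place where you are too quick is the identity $K_2:=I_{\{\widetilde Z=1\}}\is K=(1-Z_-)\is S_M$. Since $K=\sum\Delta m\,\Delta M-\bigl(\sum\Delta m\,\Delta M\bigr)^{p,\mathbb F}$, integrating $I_{\{\widetilde Z=1\}}$ against $dK$ also picks up the continuous predictable compensator, so in fact
\[
K_2=(1-Z_-)\is S_M-I_{\{Z_-=1\}}\is\Bigl(\sum\Delta m\,\Delta M\Bigr)^{p,\mathbb F},
\]
where $I_{\{\widetilde Z=1\}}$ has been replaced by $I_{\{Z_-=1\}}$ because the two indicators differ on a thin set, which is negligible for a continuous integrator. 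Your conclusion $K_2\in{\cal M}_{loc}(\mathbb F)$ therefore needs the extra continuous finite-variation term to vanish, i.e.\ $\bigl(\sum\Delta m\,\Delta M\bigr)^{p,\mathbb F}$ must not charge $\{Z_-=1\}$. Be aware that the paper itself relies on a parallel unstated fact: the asserted chain $0=\bigl((1-Z_-)\is V\bigr)^{p,\mathbb F}=\bigl(I_{\{\widetilde Z=1\}}\is[m,M]\bigr)^{p,\mathbb F}$ implicitly uses $\bigl(I_{\{\widetilde Z=1\}}\is\langle m^c,M^c\rangle\bigr)^{p,\mathbb F}=0$, i.e.\ $\langle m^c,M^c\rangle$ not charging $\{Z_-=1\}$. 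Both boil down (by Kunita--Watanabe) to $\langle m\rangle^{\mathbb F}$ not charging $\{Z_-=1\}$, a property of the Az\'ema supermartingale. Since you neither invoke this nor a substitute, the claim about $K_2$ should be corrected or the null-charge property spelled out; once that is done your argument closes as you intend.
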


\begin{proof} Let $M$ be a quasi-left-continuous $\mathbb F$-local martingale such that
$V:=\sum \Delta M I_{\{\widetilde Z=1>Z_{-}\}}$ is an $\mathbb F$-local martingale. As a result, we get
$$
0=\Bigl((1-Z_{-})\is V\Bigr)^{p,\mathbb F}=\left(I_{\{\widetilde Z=1\}}\is [m,M]\right)^{p,\mathbb F}.
$$
Therefore, by combining  this equation, (\ref{bracketLGM}) and Lemma \ref{maintechnique}-(b), we obtain
\begin{eqnarray*}
&&M-M^{\tau}+\langle L^{\mathbb G},M-M^{\tau}\rangle^{\mathbb G}=M-M^{\tau}+
\left({{I_{\Lbrack\tau,+\infty\Rbrack}}\over{1-\widetilde Z}}\is[ m,M]\right)^{p,\mathbb G}\\
&=&M-M^{\tau}+{{I_{\Lbrack\tau,+\infty\Rbrack}}\over{1-Z_{-}}}\is\langle m,M\rangle^{\mathbb F}-
{{I_{\Lbrack\tau,+\infty\Rbrack}}\over{1-Z_{-}}}\is\left(I_{\{\widetilde Z=1\}}\is [ m,M]\right)^{p,\mathbb F}\\
&=&M-M^{\tau}+(1-Z_{-})^{-1}I_{\Lbrack\tau,+\infty\Rbrack}\is\langle m,M\rangle^{\mathbb F}=\widehat M\in {\cal M}_{loc}(\mathbb G).
\end{eqnarray*}
Thus, assertion (a) follows immediately form this combined with Ito's formula applied to $(M-M^{\tau}){\cal E}(L^{\mathbb G})$.
Assertion (b) follows obviously from assertion (a), and the proof of the theorem is completed.
\qed\end{proof}

\noindent As a consequence of this theorem, we describe a class of
$\mathbb F$-quasi-left-continuous processes for which  the NUPBR
property is preserved for the ``part-after-$\tau$".

\begin{corollary}\label{Cor4explDefltor}Suppose that $\tau\in {\mathcal H}$ is finite almost surely,
and $S$ is $\mathbb F$-quasi-left-continuous.
If $\left(S, \sum \Delta S I_{\{\Delta S\not=0\}}\right)$ satisfies the NUPBR$(\mathbb F)$,
 then $S-S^{\tau}$ satisfies the NUPBR$(\mathbb G)$.\end{corollary}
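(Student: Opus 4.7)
Set $V := \sum \Delta S \, I_{\{\widetilde Z=1>Z_-\}}$. The plan is to reduce, by localization and an equivalent change of measure, to the hypotheses of Theorem~\ref{resultat1apresdefault}(a) applied to $M=S$, thereby exhibiting $\mathcal E(L^{\mathbb G})$ as an explicit $\mathbb G$-deflator for $S-S^\tau$. The NUPBR$(\mathbb F)$ hypothesis on the two-dimensional process $(S,V)$ delivers a positive $\mathbb F$-local martingale $L$ such that $L(S,V)$ is a vector-valued $(\mathbb F,\sigma)$-martingale.

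First, choose $\mathbb F$-stopping times $(T_n)$ increasing to infinity along which $L^{T_n}$ is a uniformly integrable martingale, and set $Q_n:=(L_{T_n}/L_0)\cdot P\sim P$. By Proposition~\ref{NUPBRLocalization}, it is enough to establish NUPBR$(\mathbb G,Q_n)$ for $(S-S^\tau)^{T_n}$ for every $n$. Under $Q_n$, the pair $(S^{T_n},V^{T_n})$ is an $(\mathbb F,Q_n)$-sigma-martingale; an Ansel--Stricker-type localization, analogous to the one used in the proof of Theorem~\ref{NUPR4Scontinuous} but applied jointly to the two components, upgrades both $S^{T_n}$ and $V^{T_n}$ to genuine $(\mathbb F,Q_n)$-local martingales after further stopping. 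Quasi-left-continuity of $S$ is preserved by this procedure, and it is inherited by $V$, whose jumps coincide with those of $S$ on the thin set $\{\widetilde Z=1>Z_-\}$.

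In this reduced setting, Theorem~\ref{resultat1apresdefault}(a) applies to $M=S$: the required hypothesis that $\sum \Delta M \, I_{\{\widetilde Z=1>Z_-\}}=V$ be an $\mathbb F$-local martingale holds by construction. Consequently $\mathcal E(L^{\mathbb G})(S-S^\tau)$ is a $\mathbb G$-local martingale under the working measure, while property (2-a) of Proposition~\ref{lemmforNtilde1} guarantees $\mathcal E(L^{\mathbb G})>0$. Hence $\mathcal E(L^{\mathbb G})$ is a $\mathbb G$-deflator for the stopped process, and undoing the localization yields NUPBR$(\mathbb G)$ for $S-S^\tau$.

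The main obstacle is the vector-valued reduction from sigma-martingale to local martingale in the second step. In the continuous setting of Theorem~\ref{NUPR4Scontinuous}, the scalar Ansel--Stricker argument suffices because jumps can be dominated via $\sup_u |S_u|\cdot \sup_u |\Delta L_u|$; here the quasi-left-continuity of $S$ is a strictly weaker form of control, and we must simultaneously tame the predictable integrands witnessing the sigma-martingale property of $LS$ and of $LV$. The natural remedy is to localize along a common reducing sequence $(\sigma_k)$ on which both integrands are bounded below by a positive constant, and then invoke the Ansel--Stricker criterion coordinate-wise, exploiting the fact that the jumps of $V$ are a subset of those of $S$. Once this reduction is in place, the conclusion is a mechanical consequence of the explicit deflator construction of Theorem~\ref{resultat1apresdefault}.
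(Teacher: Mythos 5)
Your overall strategy is the same as the paper's: localize the $\sigma$-martingale density $L$ for $(S,V)$, pass to $Q_n:=(L_{T_n}/L_0)\cdot P$, and invoke Theorem~\ref{resultat1apresdefault}(a) with $M=S$ together with Proposition~\ref{NUPBRLocalization} in the reduced setting. However, there is a genuine gap at the change-of-measure step. Under $Q_n$ the Az\'ema supermartingales of $\tau$ become $\widetilde Z^{Q_n}_t:=Q_n(\tau\geq t\,|\,{\cal F}_t)$ and $Z^{Q_n}_t:=Q_n(\tau> t\,|\,{\cal F}_t)$, and the hypothesis of Theorem~\ref{resultat1apresdefault}(a), applied under $Q_n$, requires $\sum\Delta M\,I_{\{\widetilde Z^{Q_n}=1>Z^{Q_n}_-\}}$ to be an $(\mathbb F,Q_n)$-local martingale; the deflator $L^{\mathbb G}$ of Proposition~\ref{lemmforNtilde1} is likewise built from the $Q_n$-versions of $Z$, $\widetilde Z$ and $m$. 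You tacitly assume $\{\widetilde Z^{Q_n}=1>Z^{Q_n}_-\}=\{\widetilde Z=1>Z_-\}$ when you say that $V$ ``by construction'' equals $\sum\Delta M\,I_{\{\widetilde Z=1>Z_-\}}$ after reweighting, but this invariance is a nontrivial fact that must be established. The paper's proof highlights exactly this ingredient: it shows $\{\widetilde Z=1>Z_{-}\}=\{\widetilde Z^Q=1>Z^Q_{-}\}$ for every $Q\sim P$ by invoking Theorem 86 of \cite{dm2}. Without this fact, your appeal to Theorem~\ref{resultat1apresdefault}(a) under $Q_n$ is not licensed.

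A secondary comment: the paragraph you devote to ``the main obstacle'' points in the wrong direction. Upgrading $(S^{T_n},V^{T_n})$ from an $(\mathbb F,Q_n)$-$\sigma$-martingale to an $(\mathbb F,Q_n)$-local martingale does not call for an Ansel--Stricker jump-domination argument (that device is tailored to the continuous case of Theorem~\ref{NUPR4Scontinuous}); the definition of $\sigma$-martingale already supplies a predictable $0<\phi\leq 1$ with $\phi\is(S^{T_n},V^{T_n})\in{\cal M}(\mathbb F,Q_n)$, and Proposition~\ref{NUPBRLocalization}(c) lets you replace $S$ by $\phi\is S$ throughout while preserving quasi-left-continuity and the identity $\phi\is V=\sum\Delta(\phi\is S)\,I_{\{\widetilde Z=1>Z_-\}}$. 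The obstacle you should be flagging is the measure-invariance of $\{\widetilde Z=1>Z_-\}$, which your argument never addresses.
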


\begin{proof} The proof follows immediately from a combination of Theorem \ref{resultat1apresdefault}-(a),
 Proposition \ref{NUPBRLocalization} (see the appendix),  and the fact that
 \begin{equation}\label{ZtildePandQ} \{\widetilde
Z=1>Z_{-}\}=\{\widetilde Z^Q=1>Z^Q_{-}\}\ \ \ \ \mbox{for any}\
Q\sim P,\end{equation} where $\widetilde Z^Q_t:=Q(\tau\geq
t\big|{\cal F}_t)$ and $Z^Q_t:=Q(\tau> t|{\cal F}_t)$. This last
fact is an immediate application of Theorem 86 of \cite{dm2} by
taking  on the one hand $X=I_{\{\widetilde Z=0\}}$ and $
Y=I_{\{ \widetilde Z^Q=0\}}$, and on the other hand
$X=I_{\{Z_{-}=0\}}$ and $ Y=I_{\{ Z^Q_{-}=0\}}$.\qed
\end{proof}

\section{Proof of Theorems \ref{individualSaftertau0} and \ref{BKPredictableJumpsaftertau3}}
\label{Sectionproofs}


This section focuses on the proofs of Theorems \ref{individualSaftertau0} and \ref{BKPredictableJumpsaftertau3}.
Both proofs are based essentially on the predictable characteristics of $S$ under $\mathbb F$ and $\mathbb G$.
This section is divided into three subsections. The first subsection recalls the predictable characteristics,
and proposes afterwards a functional $\psi$, which is intimately
  related to the set $\{\widetilde Z=1>Z_{-}\}$. This $\psi$ quantifies the part responsible for $\mathbb G$-arbitrages.
   The second and third subsections
   are devoted to the proof of Theorems \ref{individualSaftertau0}
    and \ref{BKPredictableJumpsaftertau3} respectively.

\subsection{Predictable Characteristics of $S$ and the Functional $\psi$}
\noindent To the process $S$, we associate its random measure of
jumps $\mu(dt,dx):=\sum _{u>0} I_{\{\Delta
S_u\not=0\}}\delta_{(u,\Delta S_u)}(dt,dx)$. For any nonnegative
product-measurable functional $H(t,\omega, x)$, we define the
process $H\star\mu$ and   a $\sigma$-finite measure $M_{\mu}^P$ on
the measurable space $\left(\Omega\times{\mathbb
R}^+\times{\mathbb R}^d, {\cal F}_{\infty}\otimes{\cal B}({\mathbb
R }^+)\otimes {\cal B}({\mathbb R}^d)\right)$  by
\begin{equation}\label{integralmuandMpmu}
H\star\mu_t:=\int_0^t\int_{{\mathbb R}^d} H(u,x)\mu(du,dx)\ \ \ \
\mbox{and} \ \ M_{\mu}^P(H):= E\left[H\star\mu_\infty \right].\end{equation} Throughout the rest of the paper,
for any filtration $\mathbb H$, we denote
$$\widetilde{\cal O}(\mathbb H):={\cal O}(\mathbb H)\otimes {\cal
B}({\mathbb R}^d),\ \ \ \  \widetilde{\cal P}(\mathbb H):= {\cal P}(\mathbb H)\otimes {\cal
B}({\mathbb R}^d),$$ and
 $M^P_{\mu}(W|\widetilde{\cal P}(\mathbb H))$, for a nonnegative or bounded functional $W$,
  is the unique $\widetilde{\cal P}(\mathbb H)$-measurable functional $Y$ satisfying
 $M^P_{\mu}(YU)=M^P_{\mu}(WU)$ for any bounded and $\widetilde{\cal P}(\mathbb H)$-measurable functional $U$.
 The random measure $\nu(dt,dx)$ is the unique $\mathbb F$-predictable random measure satisfying
  $\left(H\star\mu\right)^{p,\mathbb F}=H\star\nu$, for any $\widetilde{\cal P}(\mathbb F)$-measurable and nonnegative $H$.
   There is a version of $\nu$
  taking the form of $\nu(dt,dx)=F_t(dx)dA_t$ where $A$ is a nondecreasing and $\mathbb F$-predictable process and $F(dx)$
  is an $\mathbb F$-predictable kernel. Then, the canonical decomposition of $S$ is
  \begin{equation}\label{canonicalS}
  S=S_0+S^c+h\star(\mu-\nu)+b\is A+(x-h)\star\mu,
  \end{equation}
  where $h(x):=xI_{\{\vert x\vert\leq 1\}}$, $S^c$ is the continuous $\mathbb F$-local martingale part of $S$, $b$
   is an $\mathbb F$-predictable process, $h\star(\mu-\nu)$ is the unique pure jumps $\mathbb F$-local martingale
    with jumps given by $h(\Delta S)I_{\{\Delta S\not=0\}}$, and there exists an $\mathbb F$-predictable matrix process, $c$,
     such that $\langle S^c,S^c\rangle^{\mathbb F}= c\is A$.
     Then,
     \begin{equation}\label{PredCharact4SF}
     \mbox{the quadruplet}\ \left(b,c,F,A\right)\  \mbox{is the}\ {\mathbb F}-\mbox{predictable characteristics of}\ S.
     \end{equation}
     These characteristics parameterize the model $(S,\mathbb F, P)$, and will be used frequently throughout the remaining part of the paper.\\

\noindent The following identifies explicitly the source of $\mathbb G$-arbitrage for $S-S^{\tau}$ denoted by  the functional $\psi $,
 and gives some of its properties.
\begin{lemma}\label{lemFunctionalpsi} Consider
\begin{equation}\label{psifunctional}
f_m:=M^P_{\mu}\Bigl(\Delta m|\widetilde{\cal P}(\mathbb F)\Bigr),\ \ \mbox{and}\ \
 \psi:=M^P_{\mu}\left(I_{\{\widetilde Z<1\}}\big|\ {\cal P}(\mathbb F)\right).\end{equation}
Then, the following hold.\\
(a) The process $(f_m)^2\star\mu$ belongs to ${\cal A}_{loc}^+(\mathbb F)$, and there exist
$\beta_m\in L(S^c)$ and $m^{\perp}\in{\cal M}_{loc}(\mathbb F)$ (i.e. $(\beta_m)^{tr}c\beta_m\is A\in{\cal A}^+_{loc}(\mathbb F)$)
such that $[S^c,m^{\perp}]\equiv 0$ and
\begin{equation}\label{GKW4m}
m=m_0+\beta_m\is S^c+m^{\perp}.\end{equation}
(b) We have $\{\psi=0\}=\{Z_{-}+f_m=1\}\subset\{\widetilde Z=1\}$,
$M^P_{\mu}-a.e.$ or equivalently
\begin{equation}\label{iunclusion1}
\{\psi=0\}=\{Z_{-}+f_m=1\}\subset\{\widetilde
Z=1\}\ \ \ \ \mbox{on}\ \{\Delta S\not=0\}.\end{equation}
(c) The nondecreasing process $I_{\{ \psi=0\ \&\ Z_{-}<1\}}\star\mu$ is c\`adl\`ag and $\mathbb F$-locally integrable under any probability measure $Q$.
\end{lemma}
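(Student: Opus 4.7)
For part (a), the decomposition $m = m_0 + \beta_m \is S^c + m^{\perp}$ is just the Galtchouk--Kunita--Watanabe decomposition of $m$ relative to the continuous local martingale $S^c$; it automatically provides $\beta_m \in L(S^c)$ (i.e. $(\beta_m)^{tr} c \beta_m \is A \in {\cal A}^+_{loc}(\mathbb F)$) and an orthogonal remainder $m^{\perp}$ with $[S^c, m^{\perp}] \equiv 0$. For the local integrability of $(f_m)^2 \star \mu$, I would apply Jensen's inequality for the $M^P_\mu$-conditional expectation,
$$(f_m)^2 = \bigl(M^P_\mu(\Delta m \mid \widetilde{\cal P}(\mathbb F))\bigr)^2 \leq M^P_\mu\bigl((\Delta m)^2 \mid \widetilde{\cal P}(\mathbb F)\bigr),$$
and combine it with the elementary bound $(\Delta m)^2 \star \mu \leq [m,m] \in {\cal A}^+_{loc}(\mathbb F)$. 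Integrating against $\nu$ yields $(f_m)^2 \star \nu \leq \left((\Delta m)^2 \star \mu\right)^{p,\mathbb F} \in {\cal A}^+_{loc}(\mathbb F)$, which forces $(f_m)^2 \star \mu$ itself to lie in ${\cal A}^+_{loc}(\mathbb F)$.

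For part (b), the central ingredient is the classical identity $\widetilde Z - Z_{-} = \Delta m$, obtained from $m = Z + D^{o,\mathbb F}$ together with $\widetilde Z = Z + \Delta D^{o,\mathbb F}$. Since $Z_{-}$ is $\mathbb F$-predictable (hence $\widetilde{\cal P}(\mathbb F)$-measurable and preserved by the projection), linearity of $M^P_\mu(\,\cdot \mid \widetilde{\cal P}(\mathbb F))$ gives
$$Z_{-} + f_m = M^P_\mu\bigl(\widetilde Z \mid \widetilde{\cal P}(\mathbb F)\bigr) \in [0,1].$$
Thus $\{Z_{-} + f_m = 1\} = \{M^P_\mu(1 - \widetilde Z \mid \widetilde{\cal P}(\mathbb F)) = 0\}$, and by nonnegativity of $1 - \widetilde Z$ this set coincides $M^P_\mu$-a.e. with $\{\widetilde Z = 1\}$, yielding the announced inclusion. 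For the equality $\{\psi = 0\} = \{Z_{-} + f_m = 1\}$ $M^P_\mu$-a.e., an analogous argument on the coarser projection shows that, by nonnegativity of $I_{\{\widetilde Z < 1\}}$, the set $\{\psi = 0\}$ is (modulo $M^P_\mu$-null) the largest ${\cal P}(\mathbb F)$-set on which $\widetilde Z = 1$; the essential observation is that $\widetilde Z$ and $\Delta m$ depend only on $(t,\omega)$, so a canonical version of $f_m$ can be chosen constant in $x$ $M^P_\mu$-a.e., which makes $\{Z_{-} + f_m = 1\}$ itself essentially ${\cal P}(\mathbb F)$-measurable and hence identifiable with $\{\psi = 0\}$.

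Part (c) follows immediately from (b) and Lemma \ref{VFprocess}. By (b), the process $I_{\{\psi = 0,\ Z_{-} < 1\}} \star \mu$ is $P$-indistinguishable from $I_{\{\widetilde Z = 1,\ Z_{-} < 1\}} \star \mu$, which is dominated pathwise by the counting process $V^{\mathbb F} = \sum I_{\{\widetilde Z = 1 > Z_{-}\}}$ defined in (\ref{VFdefinition}). Lemma \ref{VFprocess} ensures that $V^{\mathbb F}$ is c\`adl\`ag with finite values, and since its jumps are bounded by one, the stopping times $T_n := \inf\{t \geq 0 : V^{\mathbb F}_t \geq n\}$ satisfy $V^{\mathbb F}_{T_n} \leq n+1$, making $V^{\mathbb F}$ locally integrable under \emph{any} probability measure $Q$. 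The main obstacle is the delicate reconciliation in part (b) of the projection $\psi$, which lives naturally on the coarser $\sigma$-algebra ${\cal P}(\mathbb F)$, with $Z_{-} + f_m$, which lives on the finer $\widetilde{\cal P}(\mathbb F)$; it is resolved by exploiting that both $\widetilde Z$ and $\Delta m$ are functions of $(t,\omega)$ only, which ultimately collapses the $x$-dependence of both projected objects.
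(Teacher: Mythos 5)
Your proofs of parts (a) and (c) are sound: part (a) is the Galtchouk--Kunita--Watanabe decomposition plus the Jensen bound $(f_m)^2\le M^P_\mu((\Delta m)^2\mid\widetilde{\cal P}(\mathbb F))$ together with $(\Delta m)^2\star\mu\le[m,m]\in{\cal A}^+_{loc}(\mathbb F)$ (the paper simply cites \cite{aksamit/choulli/deng/jeanblanc} here), and part (c) follows exactly as you and the paper argue, by bounding $I_{\{\psi=0,\,Z_-<1\}}\star\mu$ by $V^{\mathbb F}$ of Lemma \ref{VFprocess} and noting this is a pathwise, hence $Q$-independent, statement.

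The gap is in part (b), precisely where you try to identify $\{\psi=0\}$ with $\{Z_-+f_m=1\}$. Your pivotal claim that, because $\widetilde Z$ and $\Delta m$ are functions of $(\omega,t)$ only, ``a canonical version of $f_m$ can be chosen constant in $x$ $M^P_\mu$-a.e.'' is false. The projection $M^P_\mu(\cdot\mid\widetilde{\cal P}(\mathbb F))$ is taken with respect to the Dol\'eans measure of $\mu$, which \emph{sees} the mark $x=\Delta S_t$; even when the integrand is $x$-independent, its $\widetilde{\cal P}(\mathbb F)$-projection is essentially $E[\Delta m_T\mid{\cal F}_{T-},\Delta S_T=x]$ at a jump time $T$, which genuinely depends on $x$ whenever $\Delta m_T$ and $\Delta S_T$ are conditionally correlated given ${\cal F}_{T-}$. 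Indeed the paper itself writes $f_m(x,t)$ and $\psi(t,x)$ with explicit $x$-arguments in (\ref{muGnuG}) and (\ref{nu0}). Consequently $\{Z_-+f_m=1\}$ need not reduce to a predictable set in $(\omega,t)$ alone, and your ``measurability collapse'' does not establish either inclusion between $\{\psi=0\}$ and $\{Z_-+f_m=1\}$. (As a smaller point, your parenthetical remark that $\{Z_-+f_m=1\}$ ``coincides'' with $\{\widetilde Z=1\}$ $M^P_\mu$-a.e.\ overstates what nonnegativity gives: one only gets the inclusion $\{Z_-+f_m=1\}\subset\{\widetilde Z=1\}$, which is in fact all the lemma asserts.)

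The paper's route is simpler and avoids all of this. From $0\le 1-\widetilde Z\le I_{\{\widetilde Z<1\}}$, apply $M^P_\mu(\cdot\mid\widetilde{\cal P}(\mathbb F))$ to obtain the pointwise chain
$$0\ \le\ 1-Z_--f_m\ \le\ \psi\qquad M^P_\mu\text{-a.e.},$$
which immediately yields $\{\psi=0\}\subset\{Z_-+f_m=1\}$. The inclusion $\{Z_-+f_m=1\}\subset\{\widetilde Z=1\}$ then follows from $E\bigl[(1-\widetilde Z)I_{\{Z_-+f_m=1\}}\star\mu_\infty\bigr]=E\bigl[(1-Z_--f_m)I_{\{Z_-+f_m=1\}}\star\mu_\infty\bigr]=0$, and the reverse inclusion $\{Z_-+f_m=1\}\subset\{\psi=0\}$ from $E\bigl[I_{\{Z_-+f_m=1\}}\psi\star\mu_\infty\bigr]=E\bigl[I_{\{Z_-+f_m=1\}}I_{\{\widetilde Z<1\}}\star\mu_\infty\bigr]=0$, using that $\{Z_-+f_m=1\}$ is $\widetilde{\cal P}(\mathbb F)$-measurable (so it is a legitimate test set for the projection $\psi$). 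You should replace the $x$-independence argument with this direct comparison.
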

\begin{proof} Assertion (a) is proved in \cite{aksamit/choulli/deng/jeanblanc}. Thus, we address assertions (b) and (c).\\
1) Here, we prove assertion (b). Recall that we always have $E\left[W\star\mu_\infty\right]=
E\Bigl[M^P_{\mu}(W|\widetilde{\cal P}(\mathbb F))\star\nu_\infty\Bigr],$
for any non-negative $\widetilde {\cal O}(\mathbb F)$-measurable functional $W$.
Thus, since $M^P_{\mu}(\widetilde Z|\widetilde{\cal P}(\mathbb F))=Z_{-}+f_m$ and $1-\widetilde Z\leq I_{\{\widetilde Z<1\}}$, we derive
\begin{eqnarray*}
&&0\leq 1-Z_{-}-f_m\leq \psi\ \ \ \ M^P_{\mu}-a.e.\ \ \mbox{and}\\
&&E\Bigl[(1-\widetilde Z)I_{\{ Z_{-}+f_m=1\}}\star\mu_\infty\Bigr]= E\Bigl[(1- Z_{-}-f_m)I_{\{ Z_{-}+f_m=1\}}\star\mu_\infty\Bigr]=0.
\end{eqnarray*}
These clearly prove that, on one hand, we have   $$\{\psi=0\}\subset \{Z_{-}+f_m=1\}\subset\{\widetilde Z=1\},\ \ \ \ M^P_{\mu}-a.e.$$
On the other hand, we derive
$$
E\Bigl[I_{\{Z_{-}+f_m=1\}}\psi\star\mu_{\infty}\Bigr]=E\left[I_{\{Z_{-}+f_m=1\}}I_{\{\widetilde Z<1\}}\star\mu_{\infty}\right]=0.$$
This proves  that $\{Z_{-}+f_m=1\}\subset\{\psi=0\}$, $M^P_{\mu}-a.e.$, and the proof of the assertion (a) is completed.\\
\noindent 2) The proof of the assertion b) follows immediately from Lemma \ref{VFprocess} (where $V^{\mathbb F}$ is defined and
 we recall here $V^{\mathbb F}:= \sum I_{\{\widetilde Z=1>Z_{-}\}}$ for the reader's convenience) and
 the following inequality (which is due to (\ref{iunclusion1}))
 $$
 HI_{\{\psi=0\ \&\ Z_{-}<1\}}\star\mu\leq HI_{\{\widetilde Z=1> Z_{-}\}}\star\mu\leq \sum HI_{\{\widetilde Z=1> Z_{-}\}}=:H\is V^{\mathbb F},$$
 for $H$ nonnegative and bounded. This ends the proof of the lemma.\qed\end{proof}

\noindent We end this subsection with providing the $\mathbb G$-predictable characteristics of $S-S^{\tau}$ as follows.
 Throughout the rest of the paper,
 we put $\mu^{\mathbb G}(dt,dx):=I_{\{t>\tau\}}\mu(dt,dx)$,
 and deduce that $\nu^{\mathbb G}$ --its $\mathbb G$-compensator--
 is given by
\begin{eqnarray}\label{muGnuG}
\nu^{\mathbb G}(dt,dx)&:=&I_{\{t>\tau\}}\Bigl[1-f_m(x,t)(1-Z_{t-})^{-1}\Bigr]\nu(dt,dx).\end{eqnarray}
Furthermore, the $\mathbb G$-canonical decomposition of $S-S^{\tau}$ is given by
\begin{eqnarray*}
S-S^{\tau}&=&\widehat{S^c}+h\star(\mu^{\mathbb G}-\nu^{\mathbb G})+bI_{\Lbrack\tau,+\infty\Rbrack}\is A-
{{c\beta_m}\over{1-Z_{-}}}I_{\Lbrack\tau,+\infty\Rbrack}\is A\\&& \quad -h{{f_m}\over{1-Z_{-}}} I_{\Lbrack\tau,+\infty\Rbrack}\star\nu+
(x-h)\star\mu^{\mathbb G},\end{eqnarray*}
where $\widehat{S^c}$ is defined by (\ref{Mhat}). This decomposition clearly states that
the $\mathbb G$-predictable characteristics of $S-S^{\tau}$, $\left(b^{\mathbb G}, c^{\mathbb G}, F^{\mathbb G},A^{\mathbb G}\right)$,
are given by
\begin{eqnarray}\label{GpredictableCharac}
&&b^{\mathbb G}:=b-\left[\int h(x)f_m(x)F(x)+c\beta_m\right](1-Z_{-})^{-1}I_{\{Z_{-}<1\}},\ \ c^{\mathbb G}:=c\hskip 1cm \nonumber\\
&&F^{\mathbb G}(dx):=\left(1-{{f_m(x,t)}\over{1-Z_{-}}}\right)I_{\{Z_{-}<1\}}F(dx),\ \ A^{\mathbb G}:=A-A^{\tau}.
\end{eqnarray}

\subsection{Proof of Theorem \ref{individualSaftertau0}}
This section proves this theorem. To this end, we start by singling out, in the following remark,  the simplest parts of the theorem, and
the key ideas for the proof of the difficult part(s) as well.

\begin{remark}\label{simplification}
1) Since $(1-Z_{-})^lI_{\{Z_{-}<1\}}$ (for any $l\in\mathbb R$) is $\mathbb F$-locally bounded (see Lemma \ref{localboundZ-}),
 it is easy to see that $I_{\{Z_{-}<1\}}\is X$ satisfies
 the NUPBR$(\mathbb F)$ if and only if $(1-Z_{-})\is X$ does, for any $\mathbb F$-semimartingale $X$. As a result, on one hand,
 the proof of (b)$\Longleftrightarrow$ (c) follows immediately from this fact. On the other hand, as it is mentioned in
 Remark \ref{argum4sectiondeflator},
 it is very clear that
$$S-S^{\tau}={\cal T}_a(S)-\left({\cal T}_a(S)\right)^{\tau}\ \ \ \mbox{and}\ \ \{\Delta {\cal T}_a(S)\not=0\}\cap\{\widetilde Z=1>Z_{-}\}
=\emptyset,$$
and the proof of (b)$\Longrightarrow$ (a) follows from combining these with Corollary \ref{Cor4explDefltor}.
 Therefore, the rest of this
section prepares and delivers afterwards the proof of (a) $\Longrightarrow$ (b), which is the  most technical and difficult part of the theorem.\\
2) The proof of (a) $\Longrightarrow$ (b) relies on applying Theorem \ref{mgDensityCharac} adequately. Hence, the first task in proving this part
resides in we guessing/getting the pair
 $(\beta^{(1)},f^{(1)})$ for $({\cal T}_a(S),\mathbb F)$ from $(\beta^{\mathbb G}, f^{\mathbb G})$ associated to $(S-S^{\tau},\mathbb G)$.
Thanks to Proposition \ref{lemma:predsetFG}, the next lemma prepares the ground for this goal by providing
  equivalent statement to the NUPBR of $(S-S^{\tau},\mathbb G)$, using the $\mathbb F$-predictable functionals only.
   After this step, we will prove that the chosen pair fulfills
   the conditions (\ref{Assump1})-(\ref{Assump2}-(\ref{Assump3}) that correspond to the model $(I_{\{Z_{-}<1\}}\is{\cal T}_a(S),\mathbb F)$.
     \end{remark}

\begin{lemma}\label{mgGtomgF} Let $\Phi_{\alpha}(f)$ (for $\alpha>0$ ) be defined by
\begin{equation}\label{Phialpha}
\Phi_{\alpha}(f):=(f-1)^2I_{\{\vert f-1\vert\leq\alpha\}}+\vert f-1\vert I_{\{\vert f-1\vert>\alpha\}},
\ \ \mbox{for any}\ f\in\widetilde{\cal P}(\mathbb F).
\end{equation}
Then, $(S-S^{\tau})$ satisfies the  NUPBR$(\mathbb G)$ if and only if there exists a pair, $\left(\beta^{\mathbb F},f^{\mathbb F}\right)$, of
$\mathbb F$-predictable process and $\widetilde{\cal P}(\mathbb F)$-predictable functional, such that $f^{\mathbb F}>0\ \ \ M^P_{\mu}-a.e.$,
\begin{equation}\label{conditionsOn(beta,f)1}
(\beta^{\mathbb F})^{tr} c\beta^{\mathbb F}I_{\{Z_{-}<1\}}\is A+
 \Phi_{\alpha}(f^{\mathbb F})(1-Z_{-}-f_m)I_{\{Z_{-}<1\}}\star\mu\in {\cal A}^+_{loc}(\mathbb F),
\end{equation}
and $P\otimes A-a.e.$ on $\{ Z_{-}<1\}$, we have
\begin{eqnarray}
&& \int \vert x f^{\mathbb F}(x)\left(1-{{f_m(x)}\over{1-Z_{-}}}\right)-h(x)\vert F(dx)<+\infty\,
 \ \ \ \ \mbox{and}\ \ \label{conditionsOn(beta,f)2}\\
 &&b+c\left(\beta^{\mathbb F}-{{\beta_m}\over{1-Z_{-}}}\right)+
 \int \left[x f^{\mathbb F}(x)(1-{{f_m(x)}\over{1-Z_{-}}})-h(x)\right]F(dx)\equiv 0.\hskip 0.8cm
\label{conditionsOn(beta,f)3}\end{eqnarray}
\end{lemma}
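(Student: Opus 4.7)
The plan is to apply the general characterization of NUPBR via predictable characteristics (Theorem \ref{mgDensityCharac} in the appendix) to $S-S^\tau$ in the large filtration $\mathbb{G}$, and then to translate the resulting $\mathbb{G}$-conditions into the $\mathbb{F}$-conditions (\ref{conditionsOn(beta,f)1})--(\ref{conditionsOn(beta,f)3}) by combining Proposition \ref{lemma:predsetFG} (the bridge between $\mathbb{G}$- and $\mathbb{F}$-predictable objects on $\{Z_-<1\}\cap\Lbrack\tau,+\infty\Rbrack$) with Lemma \ref{maintechnique} and the explicit formulas (\ref{GpredictableCharac}) and (\ref{muGnuG}) for the $\mathbb{G}$-predictable characteristics of $S-S^\tau$.

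First, Theorem \ref{mgDensityCharac} tells us that NUPBR$(\mathbb{G})$ for $S-S^\tau$ is equivalent to the existence of a $\mathbb{G}$-predictable process $\beta^\mathbb{G}$ and a $\widetilde{\cal P}(\mathbb{G})$-measurable functional $f^\mathbb{G}>0$ satisfying: (i) $(\beta^\mathbb{G})^{tr} c^\mathbb{G}\beta^\mathbb{G}\is A^\mathbb{G}+\Phi_\alpha(f^\mathbb{G})\star\nu^\mathbb{G}\in {\cal A}^+_{loc}(\mathbb{G})$; (ii) the $\mathbb{G}$-analogue of the integrability condition (\ref{conditionsOn(beta,f)2}); and (iii) the drift equation $b^\mathbb{G}+c^\mathbb{G}\beta^\mathbb{G}+\int[xf^\mathbb{G}(x)-h(x)]F^\mathbb{G}(dx)=0$ on the support of $A^\mathbb{G}$. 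Since $A^\mathbb{G}=I_{\Lbrack\tau,+\infty\Rbrack}\is A$ and $F^\mathbb{G}$ is supported on $\{Z_-<1\}\cap\Lbrack\tau,+\infty\Rbrack$, only the restriction of $(\beta^\mathbb{G},f^\mathbb{G})$ to that set is active. Proposition \ref{lemma:predsetFG} then furnishes $\mathbb{F}$-predictable representatives $(\beta^\mathbb{F},f^\mathbb{F})$ coinciding with $(\beta^\mathbb{G},f^\mathbb{G})$ on $\Lbrack\tau,+\infty\Rbrack\cap\{Z_-<1\}$, uniquely determined on $\{Z_-<1\}$, and the strict positivity $f^\mathbb{G}>0$ transfers to $f^\mathbb{F}>0$ $M^P_\mu$-a.e. (one is free to set $f^\mathbb{F}\equiv 1$ off $\{Z_-<1\}$).

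Next, substituting $b^\mathbb{G}$ and $F^\mathbb{G}(dx)=(1-f_m(x)/(1-Z_-))I_{\{Z_-<1\}}F(dx)$ from (\ref{GpredictableCharac}) into (iii) and using $c^\mathbb{G}=c$, condition (iii) becomes exactly the equation (\ref{conditionsOn(beta,f)3}) on $\{Z_-<1\}$, while (ii) becomes (\ref{conditionsOn(beta,f)2}). For (i), apply Lemma \ref{maintechnique}-(d) to $V=A$ with $\varphi=(\beta^\mathbb{F})^{tr} c\beta^\mathbb{F}$ to convert $\mathbb{G}$-local integrability on $\Lbrack\tau,+\infty\Rbrack$ into $\mathbb{F}$-local integrability on $\{Z_-<1\}$; for the random-measure piece, use (\ref{muGnuG}), the identity $(1-Z_-)\bigl(1-f_m/(1-Z_-)\bigr)=1-Z_--f_m$, the relation $(W\star\mu)^{p,\mathbb{F}}=W\star\nu$, and the analogue of Lemma \ref{maintechnique}-(d) for random measures, to obtain the second summand of (\ref{conditionsOn(beta,f)1}). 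Reversing these substitutions shows that any $\mathbb{F}$-predictable pair satisfying (\ref{conditionsOn(beta,f)1})--(\ref{conditionsOn(beta,f)3}) lifts, by setting $\beta^\mathbb{G}=\beta^\mathbb{F}$ and $f^\mathbb{G}=f^\mathbb{F}$ on $\Lbrack\tau,+\infty\Rbrack$, to a $\mathbb{G}$-predictable solution of (i)--(iii), giving the equivalence.

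The principal obstacle is the careful bookkeeping in the bridging step: one must verify that Proposition \ref{lemma:predsetFG} yields a \emph{bijective} correspondence between $\mathbb{G}$- and $\mathbb{F}$-predictable data on $\{Z_-<1\}\cap\Lbrack\tau,+\infty\Rbrack$ \emph{both} for $\mathbb{R}^d$-valued predictable processes \emph{and} for $\widetilde{\cal P}$-measurable random-measure functionals, and that the $M^P_\mu$-almost everywhere positivity of $f^\mathbb{F}$ is preserved in both directions. Once that equivalence is secured, the remainder is a mechanical substitution of (\ref{GpredictableCharac})--(\ref{muGnuG}) into (i)--(iii) combined with Lemma \ref{maintechnique}.
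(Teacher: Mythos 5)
You follow the paper's own route---Theorem \ref{mgDensityCharac} applied to $(S-S^{\tau},\mathbb G)$ with the characteristics (\ref{GpredictableCharac}), then Proposition \ref{lemma:predsetFG} to replace $(\beta^{\mathbb G},f^{\mathbb G})$ by $\mathbb F$-predictable representatives, then Lemma \ref{maintechnique}-(d) for the integrability---but there is a genuine gap in the step you dismiss as ``mechanical substitution''. Because $A^{\mathbb G}=I_{\Lbrack\tau,+\infty\Rbrack}\is A$, the drift identity and the finiteness condition delivered by Theorem \ref{mgDensityCharac} under $\mathbb G$ hold $P\otimes A$-a.e.\ on $\Lbrack\tau,+\infty\Rbrack$ only; after inserting $(\beta^{\mathbb F},f^{\mathbb F})$ and (\ref{GpredictableCharac})--(\ref{muGnuG}) you therefore obtain $\mathbb F$-predictable relations on $\Lbrack\tau,+\infty\Rbrack$, whereas (\ref{conditionsOn(beta,f)2})--(\ref{conditionsOn(beta,f)3}) are required on the strictly larger set $\{Z_{-}<1\}$. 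The upgrade from $\Lbrack\tau,+\infty\Rbrack$ to $\{Z_{-}<1\}$ does not come from substitution: it is the content of Proposition \ref{Gstoppingtimeaftertau}-(c),(d), proved by taking $\mathbb F$-predictable projections of the relevant inclusions (the $\mathbb F$-predictable projection of $I_{\Lbrack\tau,+\infty\Rbrack}$ being $1-Z_{-}$), and this is precisely what the paper invokes at this point. Similarly, the jump half of (\ref{conditionsOn(beta,f)1}) is not a formal ``analogue of Lemma \ref{maintechnique}-(d) for random measures'' to be taken for granted; it is Proposition \ref{Gstoppingtimeaftertau}-(a),(b), which reduces $\sqrt{(f-1)^2 I_{\Lbrack\tau,+\infty\Rbrack}\star\mu}\in{\cal A}^+_{loc}(\mathbb G)$ to $\Phi_{\alpha}(f)(1-Z_{-}-f_m)I_{\{Z_{-}<1\}}\star\mu\in{\cal A}^+_{loc}(\mathbb F)$ via Lemma \ref{maintechnique}-(d) applied to $V=\Phi_{\alpha}(f)\star\nu$.

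Relatedly, you misplace the ``principal obstacle''. Proposition \ref{lemma:predsetFG} does not (and need not) determine $(\beta^{\mathbb F},f^{\mathbb F})$ uniquely on $\{Z_{-}<1\}$: it only prescribes the representatives on $\Lbrack\tau,+\infty\Rbrack$, and their values on $\{Z_{-}<1\}\setminus\Lbrack\tau,+\infty\Rbrack$ are a priori arbitrary, so no bijective correspondence is involved. What makes the lemma true is exactly the projection argument above, which forces the stated $\mathbb F$-relations to hold on all of $\{Z_{-}<1\}$ regardless of how the representatives were extended. The converse direction is indeed straightforward, since $\Lbrack\tau,+\infty\Rbrack\subset\{Z_{-}<1\}$ and the Lemma \ref{maintechnique}-(d)/Proposition \ref{Gstoppingtimeaftertau}-(b) equivalences run in both directions.
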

\begin{proof} In virtue of Theorem \ref{mgDensityCharac}, by using the $\mathbb G$-predictable characteristics of $S-S^{\tau}$,
given in (\ref{GpredictableCharac}), we deduce that $(S-S^{\tau})$ satisfies the NUPBR$(\mathbb G)$ iff
there exists a pair of $\mathbb G$-predictable functionals
$(\beta^{\mathbb G}, f^{\mathbb G})$ such that $f^{\mathbb G}>0$,
\begin{eqnarray}\label{finiteG1}
&&(\beta^{\mathbb G})^{tr}c\beta^{\mathbb G}I_{\Lbrack\tau,+\infty\Rbrack}\is A+
\sqrt{(f^{\mathbb G}-1)^2I_{\Lbrack\tau,+\infty\Rbrack}\star\mu}\in{\cal A}^+_{loc}(\mathbb G),\end{eqnarray}
and $P\otimes A$-a.e. on $\Lbrack\tau,+\infty\Rbrack$,
\begin{eqnarray}
&&\int\vert x f^{\mathbb G}(x)-h(x)\vert(1-Z_{-}-f_m)F(dx)<+\infty,\ \ \ \ \mbox{and}\label{finiteG2}\\
&&0\equiv b+c\left(\beta^{\mathbb G}-{{\beta_m}\over{1-Z_{-}}}\right)+\int
\left[x f^{\mathbb G}(x)(1-{{f_m(x)}\over{1-Z_{-}}})-h(x)\right]F(dx).\hskip 1cm \label{mg4G1}
\end{eqnarray}
Furthermore, to this pair $(\beta^{\mathbb G}, f^{\mathbb G})$ , Proposition \ref{lemma:predsetFG} guarantees the existence of a pair
of $\mathbb F$-predictable functionals $(\beta^{\mathbb F},f^{\mathbb F})$ such that  $f^{\mathbb F}>0$ and
\begin{equation}\label{G=F}
 (\beta^{\mathbb G},f^{\mathbb G})=(\beta^{\mathbb F},f^{\mathbb F})\ \ \ \mbox{on}\ \ \Lbrack\tau,+\infty\Rbrack.
 \end{equation}
Therefore, by inserting (\ref{G=F}) in the three conditions, (\ref{finiteG1}), (\ref{finiteG2}) and (\ref{mg4G1}), and using
afterwards Lemma \ref{maintechnique}--(d) and Proposition \ref{Gstoppingtimeaftertau} (precisely assertions (b), (c) and (d)), the proof
of the lemma follows immediately. \qed  \end{proof}

\begin{remark}\label{remark2difficulty} To guess the pair $(\beta^{(0)}, f^{(0)})$ for the model
 $\left(S^{(0)}:=I_{\{Z_{-}<1\}}\is {\cal T}_a(S),\mathbb F\right)$ from the pair, $(\beta^{\mathbb F},f^{\mathbb F})$,
  provided by the above lemma, we need to derive the $\mathbb F$-predictable characteristics of the model. Thus, we start by getting the
  random measure associated to the jumps of $S^{(0)}$ as $\mu^{(0)}(dt,dx):=I_{\{\widetilde Z<1\ \&\ Z_{-}<1\}}\mu(dt,dx)$, and its $\mathbb F$-
  compensator $\nu^{(0)}$ given by
  \begin{equation}\label{nu0}
  \nu^{(0)}(dt,dx):=\psi(t,x)I_{\{Z_{t-}<1\}}\nu(dt,dx).\end{equation}
Then, by combining this with (\ref{canonicalS}), $hI_{\{\widetilde Z=1>Z_{-}\}}\star\mu\in{\cal A}_{loc}(\mathbb F)$, and
$\left(hI_{\{\widetilde Z=1>Z_{-}\}}\star\mu\right)^{p,\mathbb F}=h\psi I_{\{Z_{-}<1\}}\star\nu$ to derive
the canonical decomposition of $\left(S^{(0)},\mathbb F\right)$ and get its $\mathbb F$-predictable characteristics,
 $\left(b^{(0)},c^{(0)},F^{(0)}(dx), A^{(0)}\right)$, as follows
 \begin{eqnarray}\label{predCharac4S0}
 &&b^{(0)}:=b-\int h(x)\psi(x)F(dx),\ \ \ \  \   c^{(0)}:=c\nonumber\\
 &&F^{(0)}(dx):= \psi(x) F(dx),\ \ \ \ \      A^{(0)}:=I_{\{Z_{-}<1\}}\is A.\hskip1cm
 \end{eqnarray}
Then, in virtue of Theorem \ref{mgDensityCharac}, $\left(S^{(0)},\mathbb F\right)$  satisfies the NUPBR if and only if there exists a pair
$(\beta^{(0)},f^{(0)})$ satisfying $f^{(0)}>0$, (\ref{Assump1}) and (\ref{Assump2}) hold, and after simplifications using (\ref{predCharac4S0}),
\begin{equation}\label{mg4S0}
b+c\beta^{(0)}+\int \left[xf^{(0)}(x)\psi(x)-h(x)\right]F(dx)\equiv 0.
\end{equation}
Therefore, by comparing this equation to (\ref{conditionsOn(beta,f)3}), one can easily conclude that the only pair, $(\beta^{(0)},f^{(0)})$,
 that comes from the pair
$(\beta^{\mathbb F},f^{\mathbb F})$ is given by
$$
\beta^{(0)}:=\left(\beta^{\mathbb F}-{{\beta_m}\over{1-Z_{-}}}\right)I_{\{Z_{-}<1\}},\  \&\
f^{(0)}:=f^{\mathbb G}(x)(1-{{f_m(x)}\over{1-Z_{-}}})\psi^{-1}I_{\{\psi>0\ \&\ Z_{-}<1\}}.
$$
This (apparently) unique choice leads to a major obstacle, as we have no information regarding the integrability of $\psi^{-1} I_{\{\psi>0\}}$.
 However, this also explains that finding an ``equivalent" model that will allow us to control this integrability problem imposes itself. This
 is the aim of the following. \end{remark}

\begin{proposition}\label{NUPBR4T(S)/S1}
Let ${\cal T}_a(S)$ be defined in (\ref{assertion1}) and  consider
\begin{eqnarray}\label{m1S1}
m^{(1)}:=I_{\{\psi=0\ \&\ Z_{-}<1\}}\star(\mu-\nu)\ \ \ \mbox{and}\ \  S^{(1)}:=I_{\{Z_{-}<1\}}\is S-[S,m^{(1)}].\hskip 1cm
\end{eqnarray}
Then, $S^{(0)}:=I_{\{Z_{-}<1\}}\is {\cal T}_a(S)$ satisfies the NUPBR$(\mathbb F)$ if and only if $ S^{(1)}$
satisfies the NUPBR$(\mathbb F)$.
\end{proposition}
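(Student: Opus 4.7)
The plan is to apply Theorem~\ref{mgDensityCharac} to translate both NUPBR$(S^{(0)},\mathbb F)$ and NUPBR$(S^{(1)},\mathbb F)$ into the existence of $\mathbb F$-predictable data $(\beta^{(i)},f^{(i)})$ satisfying a structure-equation together with an integrability condition, and then to establish an explicit bijection between such pairs. First I would derive the $\mathbb F$-predictable characteristics of both processes. From $S^{(0)}=I_{\{Z_-<1\}}\is S-xI_{\{\widetilde Z=1,Z_-<1\}}\star\mu$ and $S^{(1)}=I_{\{Z_-<1\}}\is S-xI_{\{\psi=0,Z_-<1\}}\star\mu$, canonically decomposed using the identity $M^P_{\mu}(I_{\{\widetilde Z<1\}}\,|\,\widetilde{\cal P}(\mathbb F))=\psi$, both share the continuous part $I_{\{Z_-<1\}}\is S^c$ and the quadratic characteristic $cI_{\{Z_-<1\}}\is A$, while their L\'evy kernels read $F^{(0)}(dx)=I_{\{Z_-<1\}}\psi(x)F(dx)$ and $F^{(1)}(dx)=I_{\{Z_-<1,\psi>0\}}F(dx)$, and their predictable drifts differ by $I_{\{Z_-<1\}}\int h(x)(\psi(x)-I_{\{\psi>0\}})F(dx)$.

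Next, Theorem~\ref{mgDensityCharac} tells us that NUPBR$(S^{(i)},\mathbb F)$ holds iff there exists $(\beta^{(i)},f^{(i)})$ with $f^{(i)}>0$ on the support of $\nu^{(i)}$, satisfying an integrability condition of the form $(\beta^{(i)})^{tr}c\beta^{(i)}I_{\{Z_-<1\}}\is A+\Phi_\alpha(f^{(i)})\star\nu^{(i)}\in{\cal A}^+_{loc}(\mathbb F)$ together with the drift equation. A direct computation using the characteristics above shows that, on $\{Z_-<1\}$, both drift equations collapse to the same identity
\[
b+c\beta^{(i)}+\int\!\big[x\,w^{(i)}(x)f^{(i)}(x)-h(x)\big]\,F(dx)=0,
\]
with weights $w^{(0)}=\psi$ and $w^{(1)}=I_{\{\psi>0\}}$. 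This makes the natural bijection transparent: set $\beta^{(1)}:=\beta^{(0)}$ and $f^{(1)}:=\psi f^{(0)}$ (equivalently $f^{(0)}:=f^{(1)}/\psi$ on $\{\psi>0\}$). Since $\psi f^{(0)}=f^{(1)}I_{\{\psi>0\}}$, positivity of $f^{(i)}$ on the relevant support is preserved in both directions and both drift equations become identical.

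The main obstacle is to transfer the integrability condition through this bijection, since $\Phi_\alpha(\psi f^{(0)})$ and $\psi\Phi_\alpha(f^{(0)})$ are not pointwise comparable near $\psi=0$: for $f^{(0)}$ close to $1$, the former is of order one while the latter vanishes. I would handle this by splitting the $\widetilde{\cal P}(\mathbb F)$-set $\{Z_-<1,\psi>0\}$ into $\{\psi\geq 1/n\}$ and $\{0<\psi<1/n\}$. On the first, the two $\Phi_\alpha$-expressions are comparable up to a multiplicative constant depending only on $n$ and $\alpha$, so the transfer is immediate; on the second, Lemma~\ref{lemFunctionalpsi}(c), combined with the local-boundedness result of Proposition~\ref{lemma:predsetFG1}, yields a localizing sequence of $\mathbb F$-stopping times that controls $I_{\{0<\psi<1/n,Z_-<1\}}\star\mu$ in ${\cal A}^+_{loc}(\mathbb F)$. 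Finally, using the freedom to modify $f^{(i)}$ on the small-$\psi$ region---for instance by redefining $f^{(0)}:=1/\psi$ there, with a compensating adjustment absorbed into $\beta^{(0)}$ or into the unbounded-$x$ part of the drift equation---both integrability conditions can then be arranged simultaneously, closing the equivalence.
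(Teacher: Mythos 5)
Your computation of the $\mathbb F$-predictable characteristics of $S^{(0)}$ and $S^{(1)}$ is correct, and the translation of NUPBR via Theorem~\ref{mgDensityCharac} is the right starting point. However, your route differs from the paper's and has a genuine gap at the last step. The paper does \emph{not} attempt a direct transfer of the pair $(\beta,f)$ between the two models under $P$. Instead, it introduces the positive $\mathbb F$-local martingale $Z^{(\psi)}={\cal E}\bigl((\psi-1)I_{\{\psi>0\}}\star(\mu-\nu)\bigr)$, reduces by Proposition~\ref{NUPBRLocalization} to the case where $Z^{(\psi)}$ is a uniformly integrable martingale, and passes to $Q:=Z^{(\psi)}_\infty\cdot P$. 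Under $Q$, the predictable characteristics of $S^{(1)}$ become \emph{identical} to those of $S^{(0)}$ under $P$ (in particular $\nu^{(1,Q)}=\psi I_{\{Z_-<1\}}\is\nu=\nu^{(0)}$), so the two NUPBR conditions are verbatim the same and no transfer of $(\beta,f)$---and hence no comparison of $\Phi_\alpha$-terms---is needed at all.

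The gap in your argument is precisely where you acknowledge the difficulty: you propose to redefine $f^{(0)}:=1/\psi$ on $\{0<\psi<1/n\}$ and absorb the resulting change ``into $\beta^{(0)}$ or into the unbounded-$x$ part of the drift equation.'' This does not work. The drift equation in Theorem~\ref{mgDensityCharac} is an exact pointwise identity, and modifying $f^{(0)}$ on a region of $x$-space shifts the integral term $\int x f^{(0)}(x)\psi(x)F(dx)$ by a vector that, generically, does not lie in the range of $c$; it therefore cannot be absorbed by any choice of $\beta^{(0)}$. There is also no ``unbounded-$x$ part'' of the drift equation to hide it in. Moreover, the threshold $n$ is arbitrary but the drift equation is single and exact, so a family of approximate modifications indexed by $n$ cannot be combined into one admissible pair. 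A direct transfer \emph{without} modification (keeping $\beta^{(1)}=\beta^{(0)}$, $f^{(1)}=\psi f^{(0)}$ and $f^{(0)}=f^{(1)}/\psi$) might be salvageable: one can show pointwise inequalities of the type $\Phi_\alpha(\psi f^{(0)})\leq C\bigl[\psi\Phi_\alpha(f^{(0)})+(1-\psi)\bigr]$, and $(1-\psi)I_{\{Z_-<1\}}\star\nu\in{\cal A}^+_{loc}(\mathbb F)$ follows from Lemma~\ref{VFprocess} because $(1-\psi)I_{\{Z_-<1\}}\star\nu=\bigl(I_{\{\widetilde Z=1>Z_-\}}\star\mu\bigr)^{p,\mathbb F}$. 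But those inequalities must actually be proved (in both directions), and your proposal does not supply them; as written, the last paragraph does not close the equivalence. The change-of-measure argument avoids the issue entirely, which is why the paper uses it.
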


\noindent The proof of this proposition is delegated to the appendix for the reader's convenience.
Now, we are in the stage of proving Theorem \ref{individualSaftertau0}.

\begin{proof}{\it of Theorem \ref{individualSaftertau0}} Suppose that $S-S^{\tau}$ satisfies   NUPBR$(\mathbb G)$. Then, due to
Lemma \ref{mgGtomgF}, we deduce the existence of $(\beta^{\mathbb F}, f^{\mathbb F})$ satisfying $f^{\mathbb F}>0$,
 (\ref{conditionsOn(beta,f)1}), (\ref{conditionsOn(beta,f)2}) and (\ref{conditionsOn(beta,f)3}). Thanks to Proposition \ref{NUPBR4T(S)/S1},
 this proof will be completed as soon as we prove that
$(S^{(1)},\mathbb F)$ satisfies the NUPBR. This is the aim of the rest of the proof. To this end, we put $\Sigma_1:=\{Z_{-}<1\ \&\ \psi>0\}$,
 $\widetilde{\Omega}:= \Omega\times[0,+\infty)$,
\begin{eqnarray}\label{beta1f1}
\beta:=(\beta^{\mathbb F}-{{\beta_m}\over{1-Z_{-}}})I_{\{ Z_{-}<1\}},\
 f:=f^{\mathbb F}(x)\left(1-{{f_m(x)}\over{1-Z_{-}}}\right)I_{\Sigma_1}+I_{\widetilde{\Omega}\setminus\Sigma_1}.\hskip 0.5cm \end{eqnarray}
It is obvious that $f>0$. To apply Theorem \ref{mgDensityCharac} using the above pair $(\beta, f)$, we need to derive the predictable characteristics
$(S^{(1)},\mathbb F)$. Thus, we start by getting the random measure for the jumps of this model by
$\mu^{(1)}(dt,dx):=\mu_{S^{(1)}}(dt,dx)=I_{\{\psi(t,x)>0\}}I_{\{Z_{t-}<1\}}\mu(dx,dt)$,
 and its $\mathbb F$-compensator
 \begin{equation}\label{nu1}
 \nu^{(1)}(dt,dx):=I_{\Sigma_1}(tx)\nu(dt,dx),\ \ \ \Sigma_1:=\{\psi>0\ \&\ Z_{-}<1\}.\end{equation}
 Then, again, combining this with (\ref{canonicalS}), and
 $\left(hI_{\{\psi=0\ \&\ Z_{-}<1\}}\star\mu\right)^{p,\mathbb F}=hI_{\{\psi=0\ \&\ Z_{-}<1\}}\star\nu$, we derive easily the canonical
 decomposition of the model and get its predictable characteristics, $\left(b^{(1)},c^{(1)},F^{(1)}(dx),A^{(1)}\right)$,  as follows:
 \begin{eqnarray}\label{PredCharac4S1}
 b^{(1)}:=b-\int h(x)I_{\{\psi(x)=0\}}F(dx),\ \ \ \ c^{(1)}:=c,\nonumber\\
   F^{(1)}(dx):=I_{\{\psi(t,x)>0\}}F(dx),\ \ \ \ A^{(1)}:=I_{\{Z_{-}<1\}}\is A\Bigr).\hskip 1cm \end{eqnarray}
 It is obvious that, by plugging (\ref{beta1f1}) and (\ref{PredCharac4S1}) into (\ref{conditionsOn(beta,f)2}) and (\ref{conditionsOn(beta,f)3}),
   we get
 \begin{eqnarray*}
&&\int \vert xf^{(1)}(x)-h(x)\vert F^{(1)}(dx)<+\infty\ \ \ \ P\otimes A^{(0)}-a.e.\ \ \ \mbox{and}\\
&& b^{(1)}+c\beta^{(0)}+\int \left[xf^{(1)}(x)-h(x)\right]F^{(1)}(dx)\equiv 0,\ \ \ P\otimes A^{(0)}-a.e..\end{eqnarray*}
 Thus, we focus in the rest of this proof on proving the integrability condition (\ref{Assump1}) for the pair ($\beta, f)$. Due to the local
  boundedness of $(1-Z_{-})^{-2}I_{\{Z_{-}<1\}}$ (see Lemma \ref{localboundZ-}) and $(\beta_m)^{tr}c\beta_m\is A+
 (\beta^{\mathbb F})^{tr}c\beta^{\mathbb F}\is A\in{\cal A}^+_{loc}(\mathbb F)$
 (see Lemma \ref{lemFunctionalpsi} and (\ref{conditionsOn(beta,f)1})), we deduce that $\beta^{tr}c\beta\is A\in{\cal A}_{loc}^+(\mathbb F)$.
  Therefore, now we deal with $\sqrt{(f-1)^2\star\mu^{(1)}}\in{\cal A}^+_{loc}(\mathbb F)$. Then,
$$
f-1=(f^{\mathbb F}-1)\left(1-{{f_m(x)}\over{1-Z_{-}}}\right)I_{\Sigma_1}-{{f_m(x)}\over{1-Z_{-}}}I_{\Sigma_1},
\ \ \ \ \Sigma_1:=\{\psi>0\ \&\  Z_{-}<1\}.$$
As a result, since $0\leq 1-Z_{-}-f_m\leq 1$, we obtain
$$\sqrt{(f-1)^2\star\mu}\leq \sqrt{(f^{\mathbb F}-1)^2{{1-Z_{-}-f_m}\over{(1-Z_{-})^2}}I_{\{Z_{-}<1\}}\star\mu}+
\sqrt{{{f_m^2}\over{(1-Z_{-})^2}} I_{\{Z_{-}<1\}}\star\mu}.$$
 Thus, a combination of this with the local boundedness of $(1-Z_{-})^{-2}I_{\{Z_{-}<1\}}$ (see Lemma \ref{localboundZ-}),
 (\ref{conditionsOn(beta,f)1}), and $f_m^2\star\mu\in{\cal A}^+_{loc}(\mathbb F)$
 (see Lemma \ref{lemFunctionalpsi})), the proof of
 $\sqrt{(f-1)^2\star\mu^{(1)}}=\sqrt{(f-1)^2I_{\{\psi>0\ \&\ Z_{-}<1\}}\star\mu}\in{\cal A}^+_{loc}(\mathbb F)$ is completed.
  This ends the proof of the theorem.\qed\end{proof}
\subsection{Proof of Theorem \ref{BKPredictableJumpsaftertau3}}
In virtue of (\ref{muGnuG}) and $\Lbrack\tau,+\infty\Rbrack\subset\{Z_{-}<1\}$, the assumption (\ref{LevyAssumption}) holds iff
\begin{eqnarray*}
0&=&E\Bigl[(1-Z_{-})^{-1}I_{\{Z_{-}+f_m=1>Z_{-}\}}I_{\Lbrack\tau,+\infty\Rbrack}\star\nu_{\infty}\Bigr]\\
&=&E\Bigl[I_{\{Z_{-}+f_m=1>Z_{-}\}}\star\nu_{\infty}\Bigr]=E\Bigl[I_{\{Z_{-}+f_m=1>Z_{-}\}}\star\mu_{\infty}\Bigr].\end{eqnarray*}
This implies that $I_{\{Z_{-}+f_m=1>Z_{-}\}}\star\nu$ and $I_{\{Z_{-}+f_m=1>Z_{-}\}}\star\mu$ are null.
 Thus, we deduce that $m^{(1)}=I_{\{Z_{-}+f_m=1>Z_{-}\}}\star\mu-f_m I_{\{Z_{-}+f_m=1>Z_{-}\}}\star\nu$ is also null.
  Then, the proof of the theorem follows immediately from combining this with Proposition \ref{NUPBR4T(S)/S1}
   and Corollary \ref{corollaryofmain1}--(ii).\qed

\vspace*{0,5cm}

\centerline{\textbf{APPENDIX}}
\appendix
 \normalsize

 \section{Deflators via Predictable Characteristics}
 Most results of this section are elaborated in \cite{aksamit/choulli/deng/jeanblanc},
 and we refer the reader to the appendix of that paper for details. Herein, we consider given a probability $Q$, a filtration $\mathbb H$,
  and a $(\mathbb H, Q)$-quasi-left-continuous semimartingale $X$.
  To this process, we associate the random measure of its jumps, denoted by $\mu_X$, and its $(\mathbb H,Q)$-compensator is denoted by $\nu_X$.
  We suppose that $X$ has the following conical decomposition
  $$X=X_0+X^c+h\star(\mu_X-\nu_X)+(x-h)\star\mu_X+b\is A.$$
  Here $h(x):=xI_{\{\vert x\vert\leq 1\}}$ and $h\star(\mu_X-\nu_X)$ represents the unique pure jumps
  $(\mathbb H,Q)$-local martingale with jumps taking the form of $h(\Delta S)I_{\{\Delta S\not=0\}}$.
   We suppose that $\nu_X(dt,dx)=F(t,dx)dA_t$, and
  $c$ the matrix such that $\langle X^c\rangle=c\is A$.
  The quadruplet $(b, c, F, A)$ is the predictable characteristics of $X$ under $(\mathbb H,Q)$.
   Here the elements of this quadruplet depends on $\left(X,Q,\mathbb H\right)$, but there is no risk of confusion in this part.

\begin{theorem}\label{mgDensityCharac} Let $\left(X,Q,\mathbb H\right)$ be a quasi-left-continuous model,
and $(b^Q, c, F^Q, A)$ be its predictable characteristics under $(\mathbb H, Q)$. Then, $X$ satisfies the NUPBR$(\mathbb H, Q)$
 if and only if there exists a pair $(\beta, f)$, of $\mathbb H$-predictable process $\beta$ and
  $\widetilde{\cal P}(\mathbb H)$-measurable functional  $f$, such that
\begin{eqnarray}
&&f>0,\ \ \beta^{tr}c\beta\is A+ \sqrt{(f-1)^2\star\mu_X}\in {\cal A}^+_{loc}(\mathbb H,Q),\label{Assump1}\\
&&\int \vert xf(x)-h(x)\vert F(dx)<+\infty,\ \ \ Q\otimes A-a.e.\label{Assump2}\\
 &&b+c\beta+\int \left[xf(x)-h(x)\right]F(dx)=0,\ \ \ Q\otimes A-a.e.\label{Assump3}
 \end{eqnarray}
 \end{theorem}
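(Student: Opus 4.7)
The plan is to reduce NUPBR$(\mathbb H,Q)$ to the non-emptiness of $\mathcal L_\sigma(X,\mathbb H)$ via Proposition 2.3 of \cite{aksamit/choulli/deng/jeanblanc}, and then to parametrize any $\sigma$-martingale density $L$ by the Jacod--Kunita--Watanabe coefficients of its multiplicative logarithm $N$ (where $L=\mathcal{E}(N)$) relative to $X^c$ and to the jump measure $\mu_X$. The desired pair $(\beta,f)$ will be precisely those coefficients: $\beta$ from the $X^c$-integral part and $f$ built from the $\widetilde{\cal P}(\mathbb H)$-measurable functional governing the purely discontinuous part of $N$.

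For the sufficiency direction, given $(\beta,f)$ satisfying $f>0$ together with (\ref{Assump1})--(\ref{Assump3}), I would set
$$N:=\beta\is X^c+(f-1)\star(\mu_X-\nu_X),\qquad L:=\mathcal{E}(N).$$
The local-integrability in (\ref{Assump1}) makes both integrals well-defined $(\mathbb H,Q)$-local martingales, while $f>0$ (ensuring $1+\Delta N>0$) gives $L>0$; hence $L$ is a strictly positive local martingale. A direct product-formula computation of $LX$ using the characteristics $(b,c,F,A)$ shows that its predictable drift equals $L_{-}\bigl[b+c\beta+\int(xf-h)F(dx)\bigr]\is A$, which vanishes by (\ref{Assump3}), while (\ref{Assump2}) guarantees that the remaining purely discontinuous contribution is a $\sigma$-martingale; thus $L\in\mathcal L_\sigma(X,\mathbb H)$. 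Conversely, given $L=L_0\mathcal{E}(N)\in\mathcal L_\sigma(X,\mathbb H)$ with $\Delta N>-1$, Jacod's representation theorem decomposes $N$ along the directions spanned by $X$:
$$N=N_0+\beta\is X^c+(W-\widehat W)\star(\mu_X-\nu_X)+N^{\perp},$$
where $W\in\widetilde{\cal P}(\mathbb H)$, $\widehat W:=M^Q_{\mu_X}\!\bigl(W\mid\widetilde{\cal P}(\mathbb H)\bigr)$, and $N^{\perp}$ is orthogonal to $X^c$ and to all integrals against $\mu_X-\nu_X$. Setting $f:=1+W-\widehat W$ on the support of $\mu_X$ and $f\equiv1$ elsewhere, the constraint $\Delta N>-1$ forces $f>0$, the local-martingale property of $N$ yields (\ref{Assump1})--(\ref{Assump2}), and reading off the drift of $LX$ via the same product-formula computation, then equating it to zero because $LX$ is a $\sigma$-martingale, delivers (\ref{Assump3}).

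The main obstacle is the passage from the $\sigma$-martingale property of $LX$ (which need not be a true local martingale, since $X$ can be signed) to the pointwise $Q\otimes A$-a.e.\ drift identity (\ref{Assump3}). This requires localizing along $\mathbb H$-predictable sets on which $|b+c\beta+\int(xf-h)F(dx)|$ is bounded, and a measurable selection argument to pass from ``the total drift of $LX$ vanishes in distribution'' to the pointwise equality. A parallel subtlety is the asymmetric form of (\ref{Assump1}): the square-root $\sqrt{(f-1)^2\star\mu_X}$ appears rather than $(f-1)^2\star\mu_X$ itself, reflecting that $(f-1)\star(\mu_X-\nu_X)$ needs only be a local martingale (not locally square-integrable), so that the large-jump part $\{|f-1|>\alpha\}$ must be absorbed through a locally integrable-variation control, precisely in the spirit of the truncation $\Phi_\alpha$ in Lemma \ref{mgGtomgF}.
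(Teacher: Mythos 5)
The paper does not prove Theorem \ref{mgDensityCharac} itself but defers to \cite{aksamit/choulli/deng/jeanblanc}, so there is no internal proof to compare against. Your road map---reduce NUPBR$(\mathbb H,Q)$ to non-emptiness of $\mathcal L_\sigma(X,\mathbb H)$, parametrize a deflator as $L=\mathcal E(N)$ via Jacod's decomposition of $N$ along $X^c$ and $\mu_X$, and read off $(\beta,f)$ from the drift of $LX$---is the standard one, and the sufficiency direction is written correctly.

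Two claims in the necessity direction are wrong as stated, although the overall strategy survives. First, $\Delta N>-1$ does not force $f>0$: on $\{\Delta X\neq0\}$ one has $1+\Delta N=f(\Delta X)+\Delta N^\perp$, and the orthogonal part $N^\perp$ may jump where $X$ jumps, since orthogonality only imposes $M^Q_{\mu_X}\bigl(\Delta N^\perp\mid\widetilde{\mathcal P}(\mathbb H)\bigr)=0$, not $\Delta N^\perp=0$ on $\{\Delta X\neq 0\}$. The correct argument is to observe $f=M^Q_{\mu_X}\bigl(L/L_-\mid\widetilde{\mathcal P}(\mathbb H)\bigr)$ and invoke strict positivity of conditional expectations of the strictly positive $L/L_-$. (Note also that your definition $\widehat W:=M^Q_{\mu_X}(W\mid\widetilde{\mathcal P}(\mathbb H))$ is circular---it returns $W$; the intended $\widehat W_t=\int W(t,x)\,\nu(\{t\}\times dx)$ vanishes here by quasi-left-continuity.) Second, the local-martingale property of $N$ yields only (\ref{Assump1}); the finiteness (\ref{Assump2}) does not follow from it. Both (\ref{Assump2}) and (\ref{Assump3}) come jointly from the $\sigma$-martingale property of $LX$: with $\phi$ predictable, $0<\phi\le1$, and $\phi\is(LX)$ a local martingale, its predictable compensator must exist and vanish, which simultaneously forces $\int|xf-h|\,F(dx)<+\infty$ and the drift identity; dividing by $\phi L_->0$ gives the $Q\otimes A$-a.e.\ equality pointwise, so the ``measurable selection'' device you flag as the main obstacle is not actually needed.
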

\noindent See \cite{aksamit/choulli/deng/jeanblanc} for the proof.

\begin{proposition}\label{NUPBRLocalization}
Let $X$ be an $\mathbb H$-adapted process. Then, the following assertions are equivalent.\\
{\rm{(a)}}  There exists a sequence   $(T_n)_{n\geq 1}$ of
$\mathbb H$-stopping times that increases to $+\infty$, such that
for each $n\geq 1$, there exists a probability $Q_n$ on $(\Omega,
{\cal H}_{T_n})$ such that $Q_n\sim P$ and $X^{T_n}$ satisfies
 NUPBR$(\mathbb H)$ under $Q_n$.\\
{\rm{(b)}}  $X$ satisfies   NUPBR$(\mathbb H)$.\\
{\rm{(c)}} There exists an $\mathbb H$-predictable process $\phi$,
such that $0<\phi\leq 1$ and  $(\phi\is X)$ satisfies
NUPBR$(\mathbb H)$.
\end{proposition}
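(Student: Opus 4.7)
The plan is to reduce all three assertions to the characterization of NUPBR$(\mathbb H)$ via either the existence of a $\sigma$-martingale density ${\cal L}_{\sigma}(X,\mathbb H)\neq\emptyset$ (Proposition 2.3 of \cite{aksamit/choulli/deng/jeanblanc}) or the existence of a pair $(\beta,f)$ with $f>0$ satisfying (\ref{Assump1})--(\ref{Assump3}) (Theorem \ref{mgDensityCharac}). The implications (b)$\Rightarrow$(a) (take $T_n\equiv n$, $Q_n\equiv P$) and (b)$\Rightarrow$(c) (take $\phi\equiv1$) are immediate from either characterization.

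For (c)$\Rightarrow$(b), I would take $L\in{\cal L}_{\sigma}(\phi\is X,\mathbb H)$ and apply integration by parts to obtain
\begin{equation*}
L(\phi\is X)-L_0(\phi\is X)_0=(\phi\is X)_{-}\is L+\phi\is\bigl(L_{-}\is X+[L,X]\bigr).
\end{equation*}
Since the first summand is a local martingale, $L(\phi\is X)$ is a $(\mathbb H,\sigma)$-martingale iff $\phi\is(L_{-}\is X+[L,X])$ is one; because $0<\phi\leq1$ is predictable, this is in turn equivalent to $L_{-}\is X+[L,X]$ being a $\sigma$-martingale (the nontrivial direction uses that if $\zeta\is(\phi\is Y)=(\zeta\phi)\is Y$ is a local martingale with $0<\zeta\leq1$ predictable, then $\zeta\phi$ is predictable with $0<\zeta\phi\leq1$ and witnesses $Y$ as a $\sigma$-martingale). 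A second integration by parts on $LX$ then yields that $LX$ is a $\sigma$-martingale, hence $L\in{\cal L}_{\sigma}(X,\mathbb H)$.

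For (a)$\Rightarrow$(b), the first step is the invariance of NUPBR under equivalent change of probability: if $L\in{\cal L}_{\sigma}(X,\mathbb H,Q)$ and $Z_t:=E_P[dQ/dP\mid{\cal H}_t]$, then by Bayes' rule $ZL\in{\cal L}_{\sigma}(X,\mathbb H,P)$. Applied to (a), this gives $X^{T_n}$ satisfies NUPBR$(\mathbb H,P)$ for every $n$, and Theorem \ref{mgDensityCharac} yields pairs $(\beta^n,f^n)$ satisfying $f^n>0$ and (\ref{Assump1})--(\ref{Assump3}) relative to the $\mathbb H$-predictable characteristics of $X^{T_n}$ under $P$. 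I would then paste
\begin{equation*}
\beta:=\sum_{n\geq1}\beta^n\,1_{(T_{n-1},T_n]},\qquad f:=\sum_{n\geq1}f^n\,1_{(T_{n-1},T_n]},\qquad T_0:=0.
\end{equation*}
On $(T_{n-1},T_n]$ the predictable characteristics of $X$ coincide with those of $X^{T_n}$, so (\ref{Assump2})--(\ref{Assump3}) pass pointwise from $(\beta^n,f^n)$ to $(\beta,f)$; for (\ref{Assump1}), the sequence $(T_n)$ is the localizing device, since on $[0,T_n]$ the process $\beta^{tr}c\beta\is A+\sqrt{(f-1)^2\star\mu}$ is dominated by a finite sum (over $k\leq n$) of processes each locally integrable, and intersecting their individual localizing sequences provides a common one. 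Theorem \ref{mgDensityCharac} then delivers (b).

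The main obstacle lies in the bookkeeping of the pasting for (a)$\Rightarrow$(b), specifically the global verification of (\ref{Assump1}); this is routine once one notes that stopping cleanly restricts the predictable characteristics and that $(T_n)$ localizes the global sum through the finite-sum argument above. A preliminary point worth recording is that $X$ is itself an $\mathbb H$-semimartingale, because each $X^{T_n}$ is (NUPBR$(\mathbb H,Q_n)$ requires it, and the semimartingale property is invariant under $Q_n\sim P$), so Stricker's localization principle applies and the very statement NUPBR$(\mathbb H)$ for $X$ in (b) is meaningful.
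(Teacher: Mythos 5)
The paper offers no proof of this proposition; it is cited from \cite{aksamit/choulli/deng/jeanblanc}, so there is no in-paper argument to compare against. Taking the proposal on its own terms: the trivial implications and the integration-by-parts argument for (c)$\Rightarrow$(b) are sound (the chain $L(\phi\is X)$ $\sigma$-martingale $\Leftrightarrow$ $\phi\is(L_{-}\is X+[L,X])$ $\sigma$-martingale $\Leftrightarrow$ $L_{-}\is X+[L,X]$ $\sigma$-martingale, followed by a second integration by parts, correctly shows that the \emph{same} deflator $L$ witnesses NUPBR for $X$). The Bayes step at the start of (a)$\Rightarrow$(b) is also fine, modulo the small point that $Q_n$ lives only on ${\cal H}_{T_n}$ and must first be extended to ${\cal H}_\infty$ via its density before Bayes' rule applies.

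The genuine gap is in the remainder of (a)$\Rightarrow$(b). You invoke Theorem~\ref{mgDensityCharac} to produce the pairs $(\beta^n,f^n)$, but that theorem is stated only for a \emph{quasi-left-continuous} model $(X,Q,\mathbb{H})$, whereas Proposition~\ref{NUPBRLocalization} is stated for an arbitrary $\mathbb{H}$-adapted (semimartingale) $X$, and its very purpose in the paper is to be a completely general localization/measure-change principle. A general semimartingale may jump at predictable times, and the predictable-characteristics characterization you rely on simply does not apply there. So the route through Theorem~\ref{mgDensityCharac} proves the implication only under an extra hypothesis you have not justified. Once you have reduced, via Bayes, to ``$X^{T_n}$ satisfies NUPBR$(\mathbb{H},P)$ for every $n$'', the cleanest finish is elementary and avoids characteristics altogether: fix $T'>0$ and $\varepsilon>0$, pick $n$ with $P(T_n<T')<\varepsilon/2$, note that for any $1$-admissible $H$ for $X$ one has $(H\is X)_{T'}=(H\is X^{T_n})_{T'}$ on $\{T_n\geq T'\}$ and that $HI_{\Rbrack 0,T_n\Rbrack}$ is $1$-admissible for $X^{T_n}$, then use boundedness in probability of ${\cal K}_{T'}(X^{T_n})$ to control the tail uniformly in $H$. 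This is shorter, needs no quasi-left-continuity, and bypasses the pasting bookkeeping entirely. Alternatively, one can paste the $\sigma$-martingale densities $L^n\in{\cal L}_\sigma(X^{T_n},\mathbb{H})$ multiplicatively across $\Rbrack T_{n-1},T_n\Rbrack$; that also works in full generality.
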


\noindent The proof of this proposition can be found in Aksamit et al. \cite{aksamit/choulli/deng/jeanblanc}.\\

\begin{proposition}\label{lemma:predsetFG} Suppose that $\tau$ is a honest time, and let $H^{\mathbb G}$ be an $\widetilde{\cal P}(\mathbb
G)$-measurable functional. Then, the following assertions
 hold.\\
{\rm{(a)}}  There exist two $\widetilde{\cal
P}(\mathbb F)$-measurable functional $H^{\mathbb F}$ and $K^{\mathbb F}$ such
that
\begin{eqnarray}\label{eq:widePGandwidePF}
H^{\mathbb G}(\omega,t,x) = H^{\mathbb F}(\omega,t,x) I_{\Lbrack
0,\tau\Lbrack}+ K^{\mathbb F}(\omega,t,x)I_{\Lbrack
\tau,+\infty\Rbrack}.
\end{eqnarray}
{\rm{(b)}}  If furthermore $H^{\mathbb G}>0$ (respectively $H^{\mathbb
G}\leq 1$), then we can choose $K^{\mathbb F}>0$ (respectively
$K^{\mathbb F}\leq 1$) in (\ref{eq:widePGandwidePF}).\\
\end{proposition}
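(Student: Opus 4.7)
The plan is to reduce assertion (a) to the classical decomposition of $\mathbb G$-predictable processes for honest times, via a functional monotone class argument, and then to obtain (b) from (a) by truncating $K^{\mathbb F}$ off $\Lbrack\tau,+\infty\Rbrack$ in a way that preserves its values there.

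For (a), I would start from the classical fact (Jeulin \cite{Jeu}, Barlow \cite{Barlow}) that, since $\tau$ is honest, every $\mathbb G$-predictable process $U$ admits a decomposition $U = u\, I_{\Lbrack 0,\tau\Lbrack} + v\, I_{\Lbrack\tau,+\infty\Rbrack}$ with $u,v$ being $\mathbb F$-predictable. Thus for elementary functionals of the form $H^{\mathbb G}(\omega, t, x) = U(\omega, t) g(x)$, with $U$ $\mathbb G$-predictable and $g$ bounded Borel on $\mathbb R^d$, the required decomposition holds with $H^{\mathbb F} := u g$ and $K^{\mathbb F} := v g$. Let ${\cal D}$ denote the collection of bounded $\widetilde{\cal P}(\mathbb G)$-measurable functionals admitting such a decomposition with $H^{\mathbb F}, K^{\mathbb F}$ bounded and $\widetilde{\cal P}(\mathbb F)$-measurable. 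Then ${\cal D}$ is a vector space containing the constants, closed under bounded monotone limits, and contains the multiplicative class of elementary functionals, which generates $\widetilde{\cal P}(\mathbb G)$. The functional monotone class theorem then yields that ${\cal D}$ contains all bounded $\widetilde{\cal P}(\mathbb G)$-measurable functionals. The general (unbounded) case follows by truncating $(H^{\mathbb G})^+ \wedge n$ and $(H^{\mathbb G})^- \wedge n$, passing to the limit, and recombining.

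For (b), take $K^{\mathbb F}$ produced by (a) and note that $K^{\mathbb F}(\omega,t,x) = H^{\mathbb G}(\omega,t,x)$ whenever $t \geq \tau(\omega)$. Under the hypothesis $H^{\mathbb G} > 0$, define
\[
\widetilde K^{\mathbb F} := K^{\mathbb F}\, I_{\{K^{\mathbb F} > 0\}} + I_{\{K^{\mathbb F} \leq 0\}},
\]
which is $\widetilde{\cal P}(\mathbb F)$-measurable, strictly positive, and agrees with $K^{\mathbb F}$ on $\Lbrack\tau,+\infty\Rbrack$ (where $K^{\mathbb F} = H^{\mathbb G} > 0$). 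Hence (\ref{eq:widePGandwidePF}) remains valid with $\widetilde K^{\mathbb F}$ in place of $K^{\mathbb F}$. The bound $H^{\mathbb G} \leq 1$ is treated analogously via $\widetilde K^{\mathbb F} := K^{\mathbb F} \wedge 1$.

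The main obstacle is the classical Jeulin-type decomposition for $\mathbb G$-predictable processes under an honest random time, which is where all the delicate behaviour (notably at the graph $[\![\tau]\!]$) is concentrated. Once this is invoked as a black box, the rest of the argument is a routine monotone-class extension together with a simple pointwise modification in (b) that by construction leaves the values of $K^{\mathbb F}$ on $\Lbrack\tau,+\infty\Rbrack$ unchanged.
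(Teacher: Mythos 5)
Your proposal is correct and takes essentially the same route the paper indicates: the paper simply refers to Jeulin's Proposition 5.3 (the scalar $\mathbb G$-predictable decomposition for honest times) and says to mimic that proof, and your argument supplies precisely the natural extension to $\widetilde{\cal P}(\mathbb G)$-measurable functionals via a functional monotone-class argument, together with the standard pointwise modification of $K^{\mathbb F}$ off $\Lbrack\tau,+\infty\Rbrack$ for part (b).
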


\begin{proof} The proofs of assertions (a) and (b) follow from mimicking Jeulin's proof \cite[Proposition
5,3]{Jeu}, and will be omitted herein.
\end{proof}

 \section{$\mathbb G$-local Integrability versus $\mathbb F$-local Integrability}\label{SubsectionLocalisation}
This subsection connects the $\mathbb G$-localisation and the $\mathbb F$-localisation for the part after $\tau$.
This completes the analysis of  \cite{aksamit/choulli/deng/jeanblanc} regarding the issue of local integrability under $\mathbb F$
 and $\mathbb G$, where the part up to $\tau$ is fully discussed.
 There is a major difference between the current results and those of \cite{aksamit/choulli/deng/jeanblanc},
  which lies in the fact that for the case up to $\tau$ we loose information after an $\mathbb F$-stopping when we pass to $\mathbb F$.
   However, for the part after $\tau$, as long as $\tau$ is finite, we pass from $\mathbb G$-localisation to
    $\mathbb F$-localisation without any loss of information. The following is the most innovative result of the appendix.
\begin{proposition}\label{lemma:predsetFG1}  The following assertions hold.\\
{\rm{(a)}} If $\tau$ is a finite almost surely honest time and $(\sigma_n^{\mathbb G})_{n\geq 1}$ is a
 sequence of finite $\mathbb G$-stopping times that increases to infinity, then there exists a sequence
  of finite $\mathbb F$-stopping times, $(\sigma_n^{\mathbb F})_{n\geq 1}$, that increases to infinity as well and
\begin{equation}\label{SigmaGSigmaF}
\max(\sigma_n^{\mathbb G},\tau)=\max(\sigma_n^{\mathbb F},\tau),\ \ P-a.s.
\end{equation}
{\rm{(b)}} If $\tau\in {\cal H}$ and is finite almost surely, then there exists a sequence of $\mathbb F$-stopping times,
 $(\sigma_n)_{n\geq 1}$, that increases to infinity almost surely and
\begin{equation}\label{bounded4Z-}
\Bigl\{Z_{-}<1\Bigr\}\cap\Lbrack 0,\sigma_n\Lbrack\subset\Bigl\{1-Z_{-}\geq {1\over{n}}\Bigr\},\ \ \ \forall\ n\geq 1.
\end{equation}
Or equivalently, $(1-Z_{-})^{-1}I_{\{Z_{-}<1\}}$ is $\mathbb F$-locally bounded when $\tau\in {\cal H}$ and is finite almost surely.
\end{proposition}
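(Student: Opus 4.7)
My plan for (a) is to invoke the classical Jeulin representation theorem for honest times (\cite[Chap.~V]{Jeu}): every $\mathbb G$-optional process agrees on $\Lbrack\tau,+\infty\Rbrack$ with some $\mathbb F$-optional process. Applied to the càdlàg adapted indicator $I_{\Rbrack\sigma_n^{\mathbb G},+\infty\Rbrack}$, this yields an $\mathbb F$-optional process $V_n$, from which I extract an $\mathbb F$-stopping time $\widetilde\sigma_n^{\mathbb F}:=\inf\{t\geq 0\,:\,V_n(t)\geq 1\}$ via the d\'ebuts theorem. The agreement on $\Lbrack\tau,+\infty\Rbrack$ immediately forces $\max(\widetilde\sigma_n^{\mathbb F},\tau)=\max(\sigma_n^{\mathbb G},\tau)$, and replacing $\widetilde\sigma_n^{\mathbb F}$ by $\sigma_n^{\mathbb F}:=\max_{k\leq n}\widetilde\sigma_k^{\mathbb F}$ restores monotonicity. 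Divergence $\sigma_n^{\mathbb F}\uparrow+\infty$ then follows from $\sigma_n^{\mathbb G}\uparrow+\infty$ together with the finiteness of $\tau$, since $\max(\sigma_n^{\mathbb F},\tau)=\max(\sigma_n^{\mathbb G},\tau)\uparrow+\infty$ and $\tau<+\infty$ a.s.

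For (b), my plan is to feed the $\mathbb G$-localization from Lemma~\ref{localboundZ-}-(b) through (a). Let $T_n^{\mathbb G}:=\inf\{t\,:\,(1-Z_t)^{-1}I_{\Rbrack\tau,+\infty\Rbrack}(t)\geq n\}$ be the sequence appearing in the proof of that lemma; it satisfies $T_n^{\mathbb G}\uparrow+\infty$ and $(1-Z_-)^{-1}I_{\Lbrack\tau,+\infty\Rbrack}\leq n$ on $\Rbrack 0,T_n^{\mathbb G}\Lbrack$. Because the localized process vanishes on $\Rbrack 0,\tau\Rbrack$ and is right-continuous, $T_n^{\mathbb G}>\tau$ almost surely. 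Applying (a), I obtain $\mathbb F$-stopping times $\widehat\sigma_n^{\mathbb F}\uparrow+\infty$ with $\max(\widehat\sigma_n^{\mathbb F},\tau)=T_n^{\mathbb G}$, so $\widehat\sigma_n^{\mathbb F}\geq\tau$ a.s., and on the after-$\tau$ portion of $\Lbrack 0,\widehat\sigma_n^{\mathbb F}\Lbrack\cap\{Z_-<1\}$ the $\mathbb G$-bound directly gives $1-Z_-\geq 1/n$. To handle the before-$\tau$ portion, I intersect $\widehat\sigma_n^{\mathbb F}$ with the predictable $\mathbb F$-stopping time $R_n:=\inf\{t\,:\,Z_{t-}\in[1-1/n,1)\}$: on $\Rbrack 0,R_n\Lbrack$, the process $Z_-$ is either equal to $1$ or at most $1-1/n$, so $1-Z_-\geq 1/n$ on $\{Z_-<1\}$. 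The final localizing sequence is then $\sigma_n:=\widehat\sigma_n^{\mathbb F}\wedge R_n$.

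The main obstacle is to show that $R_n\uparrow+\infty$ almost surely under the assumption $\tau\in\mathcal H$. My plan here is a contradiction argument: if $R_\infty:=\lim_n R_n<+\infty$ on a positive-measure event, then left-continuity of $Z_-$ gives $Z_{R_\infty-}=1$, while the inclusion $\Rbrack\tau,+\infty\Rbrack\subset\{Z_-<1\}$ recalled in the proof of Lemma~\ref{localboundZ-} confines such an approach of $Z_-$ to $1$ to the predictable subset of $\Lbrack 0,\tau\Lbrack$. A careful use of Az\'ema's structure theorem for honest times, exhausting the thin set $\{Z_-=1\}\cap\Lbrack 0,\tau\Lbrack$ by a sequence of $\mathbb F$-predictable stopping times, together with the defining property $Z_\tau<1$ of $\mathcal H$ (which prevents the honest endpoint from being itself an accumulation point of approaches of $Z_-$ to $1$), yields the required contradiction. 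This technical step, rather than the manipulation of (a), is where the hypothesis $\tau\in\mathcal H$ is genuinely and non-trivially used; the rest of the proof reduces to the relatively soft transfer between $\mathbb G$- and $\mathbb F$-localizations already encapsulated in assertion~(a).
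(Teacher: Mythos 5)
Your outline follows the paper's general shape for (a) (Jeulin/Barlow representation plus a d\'ebut), but it glosses over the step that is in fact the whole point, and your route for (b) is different from the paper's and rests on a false premise.

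For (a), the claim that ``agreement on $\Lbrack\tau,+\infty\Rbrack$ immediately forces $\max(\widetilde\sigma_n^{\mathbb F},\tau)=\max(\sigma_n^{\mathbb G},\tau)$'' is wrong. The representation only constrains $V_n$ on $\Lbrack\tau,+\infty\Rbrack$, so nothing prevents $V_n$ from reaching $1$ on $\Lbrack 0,\tau\Lbrack$. If that happens on $\{\sigma_n^{\mathbb G}>\tau\}$, then $\widetilde\sigma_n^{\mathbb F}<\tau$, so $\widetilde\sigma_n^{\mathbb F}\vee\tau=\tau<\sigma_n^{\mathbb G}=\sigma_n^{\mathbb G}\vee\tau$, and the desired equality fails; passing to running maxima does not repair it. Ruling out exactly this situation is what the hypothesis of honesty buys and is the crux of the proposition: the paper applies the representation to $I_{\Rbrack\sigma^{\mathbb G}\vee\tau,+\infty\Rbrack}$ (not $I_{\Rbrack\sigma^{\mathbb G},+\infty\Rbrack}$), chooses $K^{\mathbb F}$ so that $\{K^{\mathbb F}=1\}\subset\{Z<1\}$, and then invokes Barlow's interval decomposition $\{Z<1\}\subset\bigcup_{n,m}\Rbrack\alpha_{nm},\beta_{nm}\Rbrack$ together with the property $\{\tau\geq\alpha_{nm}\}\subset\{\tau\geq\beta_{nm}\}$ to pin down the d\'ebut: on $\{\tau<\sigma^{\mathbb G}\}$ one gets $\tau<\alpha_{nm}\leq\sigma\leq\sigma^{\mathbb G}<\beta_{nm}$ and then $\sigma=\sigma^{\mathbb G}$. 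Your sketch contains no substitute for this interval argument.

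For (b), the paper's route is shorter and avoids the before-/after-$\tau$ split entirely: starting from the inclusion $\Lbrack\tau,+\infty\Rbrack\cap\Rbrack 0,\sigma_n\Lbrack\subset\{1-Z_-\geq 1/n\}$ (with $\sigma_n$ already an $\mathbb F$-stopping time by (a)), one takes $\mathbb F$-predictable projections; since $I_{\Rbrack 0,\sigma_n\Lbrack}$ and $I_{\{1-Z_-\geq 1/n\}}$ are $\mathbb F$-predictable and $^{p,\mathbb F}(I_{\Lbrack\tau,+\infty\Rbrack})=1-Z_-$, this yields $(1-Z_-)I_{\Rbrack 0,\sigma_n\Lbrack}\leq I_{\{1-Z_-\geq 1/n\}}$, and on $\{Z_-<1\}$ the left side is positive, giving the claim directly. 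Your detour through $R_n:=\inf\{t:Z_{t-}\in[1-1/n,1)\}$ only recasts the statement as ``$R_n\uparrow+\infty$'', which is essentially equivalent to the proposition and not independently established by your sketch. Moreover, the contradiction you envision leans on $\{Z_-=1\}\cap\Lbrack 0,\tau\Lbrack$ being a thin set to be ``exhausted by $\mathbb F$-predictable stopping times'', but this set is typically fat (e.g.\ in Example~\ref{example1} of the paper $\{Z_-=1\}=\{Y_-\leq a\}$ is a union of nondegenerate intervals), so the proposed exhaustion is not available. In short, part (b) needs the predictable-projection trick, not a separate localization of $Z_-$ before $\tau$.
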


\begin{proof} The proof of this proposition is given in two parts where we prove assertions (a) and (b) respectively.\\
1) The proof of assertion (a)  boils down to the following fact:
 \begin{eqnarray}
\mbox{for any $\mathbb G$-stopping time,}&&\sigma^{\mathbb G},\ \mbox{there exists an $\mathbb F$-stopping time,}
\ \sigma^{\mathbb F}\ \mbox{ such that}\nonumber\\
&&\sigma^{\mathbb G}\vee \tau=\sigma^{\mathbb F}\vee \tau\ \ \ \ P-a.s.\label{mainfact}
\end{eqnarray}
Indeed, if this fact holds, then there exists $\mathbb F$-stopping times, $(\sigma_n)_{n\geq 1}$ such that for any $n\geq 1$,
  the pair $(\sigma_n^{\mathbb G},\sigma_n)$ satisfies (\ref{SigmaGSigmaF}). Since $\sigma_n^{\mathbb G}$ increases with $n$,
   by putting $\sigma_n^{\mathbb F}:=\sup_{1\leq k\leq n}\sigma_k$, we can easily prove that the pair
    $(\sigma_n^{\mathbb G},\sigma_n^{\mathbb F})$ satisfies (\ref{SigmaGSigmaF}) as well (this is due to
    $\max_{1\leq i\leq n}(x_i\vee y)=(\max_{1\leq i\leq n}x_i)\vee y$ for any nonnegative $x_i, y$). Then,
     assertion (a) follows immediately from taking the limit in (\ref{SigmaGSigmaF}) and making use of $\tau<+\infty$ P-a.s.
      which implies that $\sup_{n\geq 1}\sigma_n=\lim_{n\longrightarrow+\infty}\sigma_n^{\mathbb F}=+\infty$ P-a.s.
       This shows that the proof of assertion (a) is achieved as long as we prove the claim (\ref{mainfact}).
        This is the main focus of the remaining part of this proof. \\
By applying the proposition below (which is fully due to Barlow \cite{Barlow}) to the process
 $Y^{\mathbb G}=I_{\Rbrack \sigma^{\mathbb G}\vee \tau,+\infty\Rbrack}$, we obtain the existence of an
  $\mathbb F$-progressively measurable process $K^{\mathbb F}$ such that
\begin{equation}\label{processK}
Y^{\mathbb G}=K^{\mathbb F}I_{\Rbrack\tau,+\infty\Rbrack}.\end{equation}
Thus, it is easy that one can replace $K^{\mathbb F}$ with $I_{\{K^{\mathbb F}=1\}}$.
Since $\Lbrack\tau,+\infty\Rbrack\subset\{Z<1\}$, It is also easy to check that one can choose $K^{\mathbb F}$ such that,
 on $\{\tau<\sigma^{\mathbb G}$, $\{K^{\mathbb F}=1\}\subset\{Z<1\}$. Then, put
\begin{equation}\label{sigma}
\sigma:=\inf\{t\geq 0:\ \ K^{\mathbb F}_t=1\}.
\end{equation}
This is an $\mathbb F$-stopping time, and due to $\Rbrack \sigma^{\mathbb G}\vee \tau,+\infty\Rbrack\subset\{K^{\mathbb F}=1\}$, we get
\begin{equation}\label{sigma1}
\sigma\leq \tau\vee \sigma^{\mathbb G}\ \ \ P-a.s.
\end{equation}
By applying Proposition \ref{Barlow}, we deduce the existence of two double sequence of
$\mathbb F$-stopping times $(\alpha_{nm})_{n,m\geq 1}$ and $(\beta_{nm})_{n,m\geq 1}$ satisfying the four assertions of the proposition. As a
result, we get, on $\{\tau<\sigma^{\mathbb G}$, we have
$$\Rbrack\sigma^{\mathbb G},+\infty\Rbrack\subset\{K^{\mathbb F}=1\}\subset\bigcup_{n,m\geq 1}\Rbrack\alpha_{nm},\beta_{nm}\Rbrack.$$
By combining this with (\ref{sigma}), we deduce that
$$
\{\tau<\sigma^{\mathbb G}\}\subset \bigcup_{n,m\geq 1}\{\alpha_{nm}\leq \sigma\leq \sigma^{\mathbb G}<\beta_{nm}\}.
$$
Thanks to assertions (i) and (ii) of Proposition \ref{Barlow}, that claims that $\tau$ takes values in $[\beta_{n(m-1)},\alpha_{nm}[$ only,
 we get
$\{\alpha_{nm}\leq \sigma\leq \sigma^{\mathbb G}<\beta_{nm}\}=\{\tau<\alpha_{nm}\leq \sigma\leq \sigma^{\mathbb G}<\beta_{nm}\}$, and hence
\begin{equation}\label{inclusion1}
\{\tau<\sigma^{\mathbb G}\}\subset \bigcup_{n,m\geq 1}\{\tau<\alpha_{nm}\leq \sigma\leq \sigma^{\mathbb G}<\beta_{nm}\}.
\end{equation}
Now, due to (\ref{processK}) and the fact that on $[\sigma, \sigma+\epsilon[\cap \{K^{\mathbb F}=1\}\not=\emptyset$ $P-a.s$, we deduce that
$\left(\tau\leq \sigma<\sigma^{\mathbb G}\right)$ is an impossible event. Therefore, (\ref{inclusion1}) leads to 
$$
\{\tau<\sigma^{\mathbb G}\}\subset \{\sigma=\sigma^{\mathbb G}\}.
$$
Hence, by combing this with (\ref{sigma1}), we derive 
\begin{eqnarray*}
\tau\vee\sigma^{\mathbb G}
&=&(\tau\vee\sigma^{\mathbb G})I_{\{\sigma^{\mathbb G}\leq\tau\}}+(\tau\vee\sigma^{\mathbb G})I_{\{\tau<\sigma^{\mathbb G}\}}
= \tau I_{\{\sigma^{\mathbb G}\leq\tau\}}+(\tau\vee\sigma)I_{\{\tau<\sigma^{\mathbb G}\}}\\
&=&(\tau\vee\sigma)I_{\{\sigma^{\mathbb G}\leq\tau\}}+(\tau\vee\sigma)I_{\{\tau<\sigma^{\mathbb G}\}}=\tau\vee\sigma.
\end{eqnarray*}
This proves (\ref{mainfact}), and the prof of assertion (a) is completed.\\
2) Here, we prove assertion (b). Since $\tau\in {\cal H}$, then $(1-Z_{-})^{-1}I_{\Lbrack
\tau,+\infty\Rbrack}$ is $\mathbb G$-locally bounded due to Lemma \ref{localboundZ-}-(b).
 Thus, on the one hand, there exists a sequence of $\mathbb G$-stopping times, $(\sigma_n^{\mathbb G})_{n\geq 1}$
  that increases to infinity almost surely and
\begin{equation}\label{equa2000}
{\Lbrack\tau,+\infty\Rbrack}\cap {\Rbrack 0,\sigma_n^{\mathbb G}\Lbrack}\subset\Bigl\{1-Z_{-}\geq 1/n\Bigr\}.
\end{equation}
On the other hand, thanks to assertion (a), there exists a sequence of $\mathbb F$-stopping times, $(\sigma_n)_{n\geq 1}$
 that increases to infinity almost surely and satisfies (\ref{SigmaGSigmaF}). Then, by inserting this in (\ref{equa2000}), we get
$$
{\Lbrack\tau,+\infty\Rbrack}\cap {\Rbrack 0,\sigma_n\Lbrack}\subset\Bigl\{1-Z_{-}\geq 1/n\Bigr\}.$$
By taking the $\mathbb F$-predictable projection on both side, we get
$$
0\leq (1-Z_{-})I_{\Rbrack 0,\sigma_n\Lbrack}\leq I_{\Bigl\{1-Z_{-}\geq 1/n\Bigr\}}.
$$ This implies that $\Bigl\{1-Z_{-}<1/n\Bigr\}\subset\{Z_{-}=1\}\cup \Lbrack\sigma_n,+\infty\Rbrack$, which is equivalent to 
(\ref{bounded4Z-}). Hence, the proof of assertion (b) is achieved and that of the proposition as well.\qed\end{proof}

\begin{proposition}\label{Barlow}
Suppose that $\tau$ is a honest time. Then, the following hold.\\
(i) There exists two double sequences of $\mathbb F$-stopping times $(\alpha_{n,m})_{n,m\geq 1}$ and $(\beta_{n,m})_{n,m\geq 1}$ such that $\alpha_{n,m}\leq \beta_{n,m}$ P-a.s. for all $n,m\geq 1$, and
\begin{equation}\label{tauincludedinAlphanm}
\Lbrack\tau,+\infty\Rbrack\subset\{Z<1\}\subset\bigcup_{n,m\geq 1} \Rbrack\alpha_{n,m},\beta_{n,m}\Rbrack.
\end{equation}
(ii) For any $n,m\geq 1$, $\{\tau\geq\alpha_{nm}\}\subset\{\tau\geq\beta_{nm}\}$ $P-a.s.$\\
(iii) For any $\mathbb G$-optional process $Y^{\mathbb G}$, there exists an $\mathbb F$-progressively measurable process $K^{\mathbb F}$ such that
 \begin{equation}\label{YGanddKF}
 Y^{\mathbb G}I_{\Rbrack\tau,+\infty\Rbrack}=K^{\mathbb F}I_{\Rbrack\tau,+\infty\Rbrack}.\end{equation}
(iv) For any $\mathbb G$-optional c\`adl\`ag process $Y^{\mathbb G}$ such that $Y^{\mathbb G}=0$ on $\Rbrack0,\alpha_{n,m}\Rbrack$ and constant on $\Rbrack\beta_{n,m},+\infty\Rbrack$, there exists an $\mathbb F$-progressively measurable process $K^{\mathbb F}$ that is c\`adl\`ag and satisfies (\ref{YGanddKF}).
 \end{proposition}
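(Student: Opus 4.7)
The plan is to treat the four assertions in order, reducing each to the classical honest-time theory of Barlow \cite{Barlow} and Jeulin \cite{Jeu}. For (i), the approach is to exhaust the optional set $\{Z<1\}=\bigcup_{n\geq 1}\{Z\leq 1-1/n\}$ by stochastic intervals built from successive debuts of $Z$. For each $n\geq 1$, one sets $\alpha_{n,1}:=\inf\{t\geq 0:Z_t\leq 1-1/n\}$ and then inductively $\beta_{n,m}:=\inf\{t>\alpha_{n,m}:Z_t>1-1/n\}$ and $\alpha_{n,m+1}:=\inf\{t>\beta_{n,m}:Z_t\leq 1-1/n\}$. Right-continuity of $\mathbb F$ makes these $\mathbb F$-stopping times, $\alpha_{n,m}\leq\beta_{n,m}$ holds by construction, and picking for each $t$ with $Z_t\leq 1-1/n$ the largest $\alpha_{n,m}\leq t$ places $t\in\Rbrack\alpha_{n,m},\beta_{n,m}\Rbrack$; the inclusion $\Lbrack\tau,+\infty\Rbrack\subset\{Z<1\}$ has already been noted for honest $\tau$.

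For (ii), the aim is to exclude the event $\{\alpha_{n,m}\leq\tau<\beta_{n,m}\}$: by Az\'ema's representation the honest time $\tau$ is the end of an optional set whose graph lies essentially in $\{\widetilde Z=1\}$, whereas the interval $\Rbrack\alpha_{n,m},\beta_{n,m}\Rbrack$ is a complete excursion of $Z$ below $1-1/n$ on which $\widetilde Z$ cannot attain $1$. Hence a point strictly inside this excursion cannot be the end of such a set, forcing $\tau\geq\beta_{n,m}$ whenever $\tau\geq\alpha_{n,m}$. This Barlow-type argument is the main technical step, for which we will refer to \cite{Barlow}.

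For (iii), the strategy is a standard monotone class argument. The family of $\mathbb G$-optional processes satisfying (\ref{YGanddKF}) is stable under pointwise bounded and monotone limits, so it suffices to treat the generators $Y^\mathbb G=I_{\Rbrack S,+\infty\Rbrack}$ for $\mathbb G$-stopping times $S$. By Proposition \ref{lemma:predsetFG1}(a) one obtains an $\mathbb F$-stopping time $\sigma^\mathbb F$ with $S\vee\tau=\sigma^\mathbb F\vee\tau$, whence $I_{\Rbrack S,+\infty\Rbrack}I_{\Rbrack\tau,+\infty\Rbrack}=I_{\Rbrack\sigma^\mathbb F,+\infty\Rbrack}I_{\Rbrack\tau,+\infty\Rbrack}$, and $K^\mathbb F:=I_{\Rbrack\sigma^\mathbb F,+\infty\Rbrack}$ solves the problem on the generators.

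For (iv), the plan is to upgrade (iii) to a c\`adl\`ag representation by exploiting the excursion structure from (i)-(ii). Because $Y^\mathbb G$ vanishes on $\Rbrack 0,\alpha_{n,m}\Rbrack$ and is constant on $\Rbrack\beta_{n,m},+\infty\Rbrack$, all of its temporal variation occurs on $\Rbrack\alpha_{n,m},\beta_{n,m}\Rbrack$, and by (ii) the time $\tau$ lies either before $\alpha_{n,m}$ or after $\beta_{n,m}$, so $\Rbrack\alpha_{n,m},\beta_{n,m}\Rbrack\cap\Rbrack\tau,+\infty\Rbrack$ is either the whole interval or empty. One may then set $K^\mathbb F=0$ on $\Rbrack 0,\alpha_{n,m}\Rbrack$, $K^\mathbb F$ equal to a c\`adl\`ag $\mathbb F$-progressive version of $Y^\mathbb G$ on $\Rbrack\alpha_{n,m},\beta_{n,m}\Rbrack$ (which exists via (iii) combined with standard c\`adl\`ag modifications), and equal to the corresponding $\mathcal F_{\beta_{n,m}}$-measurable terminal constant on $\Rbrack\beta_{n,m},+\infty\Rbrack$; checking that the glued process is c\`adl\`ag and $\mathbb F$-progressively measurable is then routine.
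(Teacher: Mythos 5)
The paper's own ``proof'' of Proposition~\ref{Barlow} is essentially a pointer to Barlow's 1978 paper: assertion (i) is cited as his Lemma 4.1-(iv), assertion (ii) as his Proposition 4.3 combined with Lemma 4.4-(ii), and (iii)--(iv) are left to the same reference. Your attempt to reconstruct these proofs from scratch is a different approach, but it contains several genuine gaps.

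The most serious problem is in (iii): you invoke Proposition~\ref{lemma:predsetFG1}-(a) to produce the $\mathbb F$-stopping time $\sigma^{\mathbb F}$ with $S\vee\tau=\sigma^{\mathbb F}\vee\tau$, but the paper \emph{proves} Proposition~\ref{lemma:predsetFG1} by applying Proposition~\ref{Barlow}-(iii) (``By applying the proposition below...''). So your argument for (iii) is circular. To make it non-circular you would need an independent proof of the claim $S\vee\tau=\sigma^{\mathbb F}\vee\tau$, which is precisely the content that Barlow's Theorem~3.1 and its corollaries supply.

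There are also gaps in (i) and (ii). For (i), your excursion intervals are left-open, so the debut $\alpha_{n,m}$ itself (where $Z_{\alpha_{n,m}}\le 1-1/n<1$ by right continuity) is not contained in $\Rbrack\alpha_{n,m},\beta_{n,m}\Rbrack$; if $Z$ jumps down from $Z_{\alpha_{n,m}-}=1$, that point is not picked up by any other interval either, so the covering of $\{Z<1\}$ is incomplete as written. For (ii), the claim that ``$\widetilde Z$ cannot attain $1$'' on a complete excursion of $Z$ below $1-1/n$ is false in general: since $\widetilde Z = Z + \Delta D^{o,\mathbb F}$, one can have $Z_t<1$ and $\widetilde Z_t=1$ simultaneously (the simplest example being $\tau$ an $\mathbb F$-stopping time, where $Z_\tau=0$ but $\widetilde Z_\tau=1$). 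This is exactly the set $\{\widetilde Z=1>Z\}$, which the whole paper is built around, so the excursion intervals of $Z$ do not avoid $\{\widetilde Z=1\}$. Barlow's actual argument uses the finer structure of the optional set whose end is $\tau$, not merely $Z<1$ on the excursion. Assertion (iv) then inherits the issues from (i)--(iii).
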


 \begin{proof} For the proof we refer the reader to \cite{Barlow}. In fact, assertion (i) is exactly Lemma 4.1-(iv) in \cite{Barlow},
 while the assertion (ii) is a combination of Proposition 4.3 and Lemma 4.4-(ii) of the same paper.
 \end{proof}

 The next result addresses the $\mathbb G$-local integrability involving the random 
  measures that is vital for the proof of Theorem \ref{individualSaftertau0}.

\begin{proposition}\label{Gstoppingtimeaftertau}
  Suppose that  $\tau\in {\mathcal H}$ is  finite almost surely. Let $\Phi_{\alpha}(.)$ (for $\alpha>0$) be defined in (\ref{Phialpha}).
   Then, the following properties hold.\\
   {\rm{(a)}} Let $f$ be a real-valued and $\widetilde{\cal P}(\mathbb H)$-measurable functional.
    Then,\\  $\sqrt{(f-1)^2\star \mu}$ belongs to $ {\cal A}^+_{loc}(\mathbb H)$ if and only if
 $\Phi_{\alpha}(f)\star \mu \in {\cal
A}^+_{loc}(\mathbb H)$ does.\\
{\rm{(b)}} Let $f$  be a real-valued and $\widetilde{\cal P}(\mathbb H)$-measurable functional.
    Then,\\
 $\sqrt{(f-1)^2I_{\Lbrack \tau,+\infty \Rbrack}\star \mu} \in {\cal A}^+_{loc}(\mathbb G)$ iff
$\Phi_{\alpha}(f)(1-Z_{-}-f_m) I_{\{ Z_{-}<1\}}\star {\mu}\in {\cal A}^+_{loc}(\mathbb F)$.\\
{\rm{(c)}} Let $\phi$ be nonnegative and $\mathbb F$-predictable process. Then, $P\otimes A-a.e.$\\
$\Lbrack\tau,+\infty\Rbrack\subset\{\phi<+\infty\}$ if and only if $\{Z_{-}<1\}\subset\{\phi<+\infty\}$.\\
{\rm{(d)}} Let $\phi$ be an $\mathbb F$-predictable process. Then, $P\otimes A$-a.e. \\
$\Lbrack\tau,+\infty\Rbrack\subset \{\phi=0\}$
 if and only if $\{Z_{-}<1\}\subset \{\phi=0\}$.
\end{proposition}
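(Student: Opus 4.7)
The four assertions split into two groups. Parts (a)--(b) are about local integrability of random-measure integrals; parts (c)--(d) are pointwise statements that follow from the basic inclusion $\Lbrack\tau,+\infty\Rbrack\subset\{Z_{-}<1\}$, valid under $\tau\in\mathcal H$.

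For (a), I would split the jumps of $f-1$ into ``small" and ``large": write $(f-1)^2\star\mu = (f-1)^2 I_{\{|f-1|\leq\alpha\}}\star\mu + (f-1)^2 I_{\{|f-1|>\alpha\}}\star\mu$. On the large-jumps piece, there are only finitely many jumps with $|f-1|>\alpha$ on any compact interval once the process is localized, so $\sqrt{(f-1)^2 I_{\{|f-1|>\alpha\}}\star\mu}$ and $|f-1| I_{\{|f-1|>\alpha\}}\star\mu$ are simultaneously $\mathbb H$-locally integrable. On the small-jumps piece, the jumps of $V:=(f-1)^2 I_{\{|f-1|\leq\alpha\}}\star\mu$ are bounded by $\alpha^2$, so $V$ is $\mathbb H$-locally bounded on any finite-jump window, and $V$ and $\sqrt V$ are simultaneously locally integrable via a standard truncation/stopping argument. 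Combining yields (a).

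For (b), I apply (a) to the filtration $\mathbb G$ and the functional $fI_{\Lbrack\tau,+\infty\Rbrack}$ (noting $\Phi_\alpha(fI_{\Lbrack\tau,+\infty\Rbrack})$ coincides with $\Phi_\alpha(f)$ on $\Lbrack\tau,+\infty\Rbrack$), to reduce to showing
\[
\Phi_\alpha(f)I_{\Lbrack\tau,+\infty\Rbrack}\star\mu\in{\cal A}^+_{loc}(\mathbb G)
\quad\Longleftrightarrow\quad
\Phi_\alpha(f)(1-Z_{-}-f_m)I_{\{Z_{-}<1\}}\star\mu\in{\cal A}^+_{loc}(\mathbb F).
\]
For the nondecreasing $\mathbb F$-adapted process $V:=\Phi_\alpha(f)\star\mu$, Lemma \ref{maintechnique}-(c) gives $I_{\Lbrack\tau,+\infty\Rbrack}\is V\in{\cal A}_{loc}(\mathbb G)$ iff $(1-\widetilde Z)\is V\in{\cal A}_{loc}(\mathbb F)$. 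Then I pass from the optional expression $(1-\widetilde Z)\star\mu$ to the predictable one via the dual predictable projection: since $M^P_\mu(\widetilde Z\mid\widetilde{\cal P}(\mathbb F))=Z_{-}+f_m$, we have $M^P_\mu\bigl(\Phi_\alpha(f)(1-\widetilde Z)\bigr|\widetilde{\cal P}(\mathbb F))=\Phi_\alpha(f)(1-Z_{-}-f_m)$, and the local integrability of a nonnegative optional integrand against $\mu$ is equivalent to that of its $\widetilde{\cal P}(\mathbb F)$-projection against $\mu$ (equivalently $\nu$). Finally, the restriction to $\{Z_{-}<1\}$ is free: on $\{Z_{-}=1\}$, since $Z_{-}+f_m\leq 1$ by the inequality $0\leq 1-Z_{-}-f_m$ from Lemma \ref{lemFunctionalpsi}-(b), the factor $(1-Z_{-}-f_m)$ vanishes there anyway.

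For (c) and (d), both directions $\Longleftarrow$ are immediate because $\Lbrack\tau,+\infty\Rbrack\subset\{Z_{-}<1\}$ under $\tau\in\mathcal H$. For $\Longrightarrow$ in (c), assume $I_{\Lbrack\tau,+\infty\Rbrack}I_{\{\phi=+\infty\}}=0$ $P\otimes A$-a.e.; the set $\{\phi=+\infty\}$ is $\mathbb F$-predictable, so taking the $\mathbb F$-dual predictable projection of the nonnegative process $I_{\Lbrack\tau,+\infty\Rbrack}I_{\{\phi=+\infty\}}\is A$ yields $(1-Z_{-})I_{\{\phi=+\infty\}}\is A\equiv 0$, which forces $\{\phi=+\infty\}\cap\{Z_{-}<1\}$ to be $P\otimes A$-negligible. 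The argument for (d) is identical with $\{\phi\neq 0\}$ in place of $\{\phi=+\infty\}$.

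The only real technical point is the passage in (b) between the $\mathbb G$-side and the $\mathbb F$-side: it needs both the $\mathbb G/\mathbb F$ compensator comparison of Lemma \ref{maintechnique} (to move $I_{\Lbrack\tau,+\infty\Rbrack}$ across filtrations) and the optional-to-predictable projection via $f_m$ (to drop $\widetilde Z$ in favor of $Z_{-}+f_m$). Parts (a), (c), (d) are essentially bookkeeping.
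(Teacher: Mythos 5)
Your overall architecture matches the paper's: part (a) is the known technical equivalence (the paper simply cites Proposition C.3 of \cite{aksamit/choulli/deng/jeanblanc}), part (b) is obtained by applying (a) under $\mathbb G$ and then transferring local integrability across the two filtrations via Lemma \ref{maintechnique}, and parts (c)--(d) follow by projecting onto the $\mathbb F$-predictable $\sigma$-field (the paper uses the predictable projection of the inequality, you use dual projections of the associated increasing process; same computation). Your variant of (b) --- Lemma \ref{maintechnique}-(c) applied to $V=\Phi_{\alpha}(f)\star\mu$ followed by the identity $M^P_{\mu}\bigl(\widetilde Z\,\big|\,\widetilde{\cal P}(\mathbb F)\bigr)=Z_{-}+f_m$ --- is a legitimate repackaging of the paper's route, which instead uses the explicit density of $\nu^{\mathbb G}$ in (\ref{muGnuG}) together with Lemma \ref{maintechnique}-(d); the two differ only in where the optional-to-predictable step is performed.

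Two justifications need repair. First, your argument for why the indicator $I_{\{Z_{-}<1\}}$ can be added at the end of (b) is wrong as written: on $\{Z_{-}=1\}$ the inequality $0\leq 1-Z_{-}-f_m$ only gives $1-Z_{-}-f_m=-f_m\geq 0$, not $=0$; vanishing there amounts to the nontrivial fact that $\{Z_{-}=1\}\subset\{\widetilde Z=1\}$ up to an $M^P_{\mu}$-null set, which you neither prove nor cite. The clean fix is to insert the indicator at the start rather than at the end: since $\tau\in{\cal H}$ gives $\Lbrack\tau,+\infty\Rbrack\subset\{Z_{-}<1\}$, you may replace $\Phi_{\alpha}(f)$ by $\Phi_{\alpha}(f)I_{\{Z_{-}<1\}}$ before invoking Lemma \ref{maintechnique}, and the stated $\mathbb F$-side expression then comes out exactly (this is what the paper's appeal to Lemma \ref{maintechnique}-(d) delivers automatically). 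Second, in (a) the remark that the large-jump piece has ``only finitely many jumps'' does not by itself give local integrability of $\vert f-1\vert I_{\{\vert f-1\vert>\alpha\}}\star\mu$ from $\sqrt{(f-1)^2\star\mu}\in{\cal A}^+_{loc}(\mathbb H)$ (finitely many summands does not bound their sum); the delicate direction requires a stopping argument in which the pre-jump value is controlled by the stopping level and the jump at the stopping time is dominated by $\sqrt{(f-1)^2\star\mu}$, itself locally integrable. The paper avoids this by citation, so your sketch should either be completed along these lines or replaced by the same citation. Finally, a minor imprecision in (c): the dual predictable projection of $I_{\Lbrack\tau,+\infty\Rbrack}I_{\{\phi=+\infty\}}\is A$ is ${}^{p,\mathbb F}\bigl(I_{\Lbrack\tau,+\infty\Rbrack}\bigr)I_{\{\phi=+\infty\}}\is A$, which dominates but need not equal $(1-Z_{-})I_{\{\phi=+\infty\}}\is A$; the domination is all you use, so the conclusion stands.
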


\begin{proof}
(a) Assertion (a) is borrowed from \cite{aksamit/choulli/deng/jeanblanc} ( see Proposition C.3--(a)).\\
(b) Thanks to assertion (a), we deduce that $\sqrt{(f-1)^2I_{\Lbrack \tau,+\infty \Rbrack}\star \mu} \in {\cal A}^+_{loc}(\mathbb G)$ iff
$\Phi_{\alpha}(f)\star\mu^{\mathbb G}\in {\cal A}^+_{loc}(\mathbb G)$ iff
\begin{equation}\label{equa9999}
\Phi_{\alpha}(f)\left(1-{{f_m}\over{1-Z_{-}}}\right)I_{\Lbrack\tau,+\infty\Rbrack}\star\nu=
\Phi_{\alpha}(f)\star\nu^{\mathbb G}\in {\cal A}^+_{loc}(\mathbb G).\end{equation}
 Then, a direct application of Lemma \ref{maintechnique}--(d) to the pair
 $$(\varphi, V):=\Bigl([1-f_m(1-Z_{-})^{-1}]I_{\{Z_{-}<1\}},\ \Phi_{\alpha}(f)\star\nu\Bigr),$$ the proof of assertion (b) follows immediately. \\
  (c) Suppose that $P\otimes A$-a.e. that $\Lbrack\tau,+\infty\Rbrack\subset\{\phi<+\infty\}$.
   This is equivalent to $I_{\Lbrack\tau,+\infty\Rbrack}\leq I_{\{\phi<+\infty\}}$ $P\otimes A$-a.e..
    Then, by taking the $\mathbb F$-predictable projection on both sides, we get $1-Z_{-}\leq I_{\{\phi<+\infty\}}$ $P\otimes A$-a.e..
    This obviously  proves that $\Lbrack\tau,+\infty\Rbrack\subset\{\phi<+\infty\}$ implies $\{Z_{-}<1\}\subset\{\phi<+\infty\}$.
    The reverse sense follows from $\Lbrack\tau,+\infty\Rbrack\subset\{Z_{-}<1\}$. This ends the proof of assertion (c).\\
(d) The proof of assertion (d) mimics the proof of assertion (c), and will be omitted. This ends the proof of the proposition. \qed\end{proof}

\section{Proofs for Lemmas \ref{lemmecrucialapresdefault} and \ref{maintechnique} of Subsection\ref{Stochastic1}}

\begin{proof} {\it of Lemma \ref{lemmecrucialapresdefault}} The proof of the lemma will be achieved in three steps.

\noindent {\bf 1)} This step proves assertion  (a). From
Lemma \ref{localboundZ-}
$$
I_{\Lbrack\tau,+\infty\Rbrack}\is V  -
I_{\Lbrack\tau,+\infty\Rbrack}\is V ^{p,\mathbb F}  +
I_{\Lbrack\tau,+\infty\Rbrack}(1-Z_{-})^{-1}\centerdot\langle V,
\mm\rangle^{\mathbb F}
$$
is a $\mathbb G$-local martingale, hence
\begin{eqnarray*}
  \left(I_{\Lbrack\tau,+\infty\Rbrack}\is V\right)^{p,\mathbb G} &=&   I_{\Lbrack\tau,+\infty\Rbrack}\is V ^{p,\mathbb F}
    - I_{\Lbrack\tau,+\infty\Rbrack}(1-Z_{-})^{-1}\centerdot\langle V,  \mm\rangle^{\mathbb F} \\
  \\
  &=& I_{\Lbrack\tau,+\infty\Rbrack}\is V ^{p,\mathbb F}  - 
  I_{\Lbrack\tau,+\infty\Rbrack}(1-Z_{-})^{-1}\centerdot\left(\Delta m\is V\right)^{p,\mathbb F} \\
  \\
&=&
I_{\Lbrack\tau,+\infty\Rbrack}(1-Z_{-})^{-1}\centerdot\Bigl((1-Z_{-}-\Delta
m)\is V\Bigr)^{p,\mathbb F}\,,
\end{eqnarray*}
where the second equality follows from Yoeurp's lemma. This ends the proof of (\ref{Gcompensatoraftertau}).
 The equality (\ref{predictableprojectionsjumpsaftertau}) follows immediately from (\ref{Gcompensatoraftertau})
  by taking the jumps in both sides, and using $\Delta \left(K^{p,\mathbb H}\right)=\ ^{p,\mathbb H}(\Delta K)$ when both terms exist.\\

\noindent{\bf 2)} Now, we  prove assertion (b). By applying
(\ref{predictableprojectionsjumpsaftertau}) for $ V_{\epsilon,\delta}\in {\cal
A}_{loc}(\mathbb F)$ given by
 $$
 V_{\epsilon,\delta}:=\sum(\Delta M) (1-\widetilde Z)^{-1}I_{\{\vert\Delta M\vert\geq \epsilon,\  1-\widetilde Z\geq\delta\}},$$
 we get, on $\Lbrack\tau,+\infty\Rbrack$,
 $$
  ^{p,{\mathbb G}}\left((\Delta M) (1-\widetilde Z)^{-1}I_{\{\vert\Delta M\vert\geq \epsilon,\  { 1-\widetilde Z\geq \delta}\}}\right)=
 {(1-Z_{-})^{-1}}\ ^{p,{\mathbb F}}\left({\Delta M}\,I_{\{\vert\Delta M\vert\geq \epsilon,\  1-\widetilde Z\geq\delta\}}\right).$$ 
 Then, the first equality in (\ref{inclusionofsetsaftertau}) follows from letting $\epsilon$ and $\delta$ go to zero, and we get on
  $\Lbrack\tau,+\infty\Rbrack$
  $$ ^{p,{\mathbb G}}\left({{\Delta M}\over{ 1-\widetilde Z}}\right)= 
  {(1-Z_{-})^{-1}}\ ^{p,{\mathbb F}}\left({\Delta M}\,I_{\{ 1-\widetilde Z>0\}}\right)=
  {(1-Z_{-})^{-1}}\ ^{p,{\mathbb F}}\left({\Delta M}\,I_{\{\widetilde Z<1\}}\right).$$
   To prove the second equality in
  (\ref{inclusionofsetsaftertau}), we write that,
  on $\Lbrack\tau,+\infty\Rbrack$,
 \begin{eqnarray*}
    ^{p,{\mathbb G}}\left({1\over{ 1-\widetilde Z}}\right)&=&(1-Z_{-})^{-1}+(1-Z_{-})^{-1}\ ^{p,{\mathbb G}}\left({{\Delta m}\over{1-\widetilde Z}}\right)\\
    \\
  &=&(1-Z_{-})^{-1}+(1-Z_{-})^{-2}\ ^{p,{\mathbb F}}\left({{\Delta m}}I_{\{ 1-\widetilde Z>0\}}\right)\\
    \\
&=&(1-Z_{-})^{-1}-(1-Z_{-})^{-1}\ ^{p,{\mathbb F}}\left(I_{\{ \widetilde Z=1\}}\right )
     =(1-Z_{-})^{-1}\ ^{p,{\mathbb F}}\left(I_{\{\widetilde Z<1\}}\right).\end{eqnarray*}
The second equality is due to
(\ref{predictableprojectionsjumpsaftertau}), and the third equality follows from combining $\,^{p,{\mathbb F}}(\Delta m)=0$,
and $\Delta m=\widetilde Z-Z_{-}$. This ends the proof of assertion (b).\\

   \noindent{\bf 3)} The proof of (\ref{UsefulProjections}) follows immediately from assertion (b)
and the fact that the thin process $\,^{p,{\mathbb
F}}\left({\Delta M}\,I_{\{ \widetilde Z<1\}}\right)$ may take nonzero values on
countably many predictable stopping times only, on which $\Delta
M$ already vanishes. This completes the proof of the lemma.\qed
 \end{proof}

\begin{proof}{\it Lemma \ref{maintechnique}} The proof of the lemma is given in three parts.
 In the first part we prove both assertions (a) and (b), while in the second and the third
  parts we focus on assertions (c) and (d) respectively.\\
 {\bf 1)} Let $V$ be an $\mathbb F$-adapted process with finite variation. Then, we obtain
$$
{\rm{Var}}(U)=\left(1-Z\right)^{-1}I_{\Lbrack
\tau,+\infty\Rbrack}\is {\rm{Var}}(V).$$ Therefore, since
$1-{\widetilde Z}_t=P(\tau<t|{\cal F}_t)\leq 1-Z_t$, for any
bounded and $\mathbb F$-optional process $\phi$ such that $\phi\is
{\rm{Var}}(V)\in {\cal A}^+(\mathbb F)$, we obtain
\begin{eqnarray}\label{inequalityVU}
&&E\Bigl[(\phi\is{\rm{Var}}(U))_{\infty}\Bigr]=\displaystyle E\left(\int_0^{\infty} {{ \phi_t I_{\{t>\tau\}}}\over{1- Z_t}}d {\rm{Var}}(V)_t\right)\nonumber\\
\nonumber\\
 &= &E\left(\displaystyle \int_0^{\infty} {{\phi_t P(\tau<t|{\cal F}_t)}\over{1- Z_t}}I_{\{ Z_t<1\}}d {\rm{Var}}(V)_t\right) \leq
E\Bigl[(\phi\is {\rm{Var}}(V))_{\infty}\Bigr].\end{eqnarray}
As a result, by taking $\phi=I_{\Lbrack 0,\sigma\Rbrack}$ in (\ref{inequalityVU}), for an $\mathbb F$-stopping time $\sigma$ such that Var$(V)^{\sigma-}\in {\cal A}^+(\mathbb F)$, we get $E\Bigl[{\rm{Var}}(U)_{\sigma-}\Bigr]\leq E\Bigl[Var(V)_{\sigma-}\Bigr]$. This proves that the process $U$ has a finite variation and hence is well defined as well. Being $\mathbb G$-adapted for $U$ is obvious, while being c\`adl\`ag follows immediately from (\ref{inequalityVU}). This ends the proof of assertion (a).\\
To prove assertion (b), we assume that $V\in {\cal A}_{loc}(\mathbb F)$ and consider $(\vartheta_n)_{n\geq 1}$, a sequence of $\mathbb F$-stopping times that increases to $+\infty$ such that ${\rm{Var}}(V)^{\vartheta_n}\in {\cal A}^+(\mathbb F)$. Then, by choosing $\phi=I_{\Lbrack 0,\vartheta_n\Lbrack}$ in (\ref{inequalityVU}), we conclude that $U$ belongs to ${\cal A}_{loc}(\mathbb G)$ whenever $V$ does under $\mathbb F$. For the case when $V\in{\cal A}(\mathbb G)$, it is enough to take $\phi=1$ in (\ref{inequalityVU}), and conclude that $U\in {\cal A}(\mathbb G)$. To prove (\ref{GFcompensator1}), for any $n\geq 1$, we put
$$
U_n:=\left(1-Z\right)^{-1}I_{\Lbrack \tau,+\infty\Rbrack}I_{\{ \widetilde Z\leq 1-{1\over{n}}\}}\is V=\left(1-\widetilde Z\right)^{-1}I_{\Lbrack \tau,+\infty\Rbrack}I_{\{ \widetilde Z\leq 1-{1\over{n}}\}}\is V,\ \ \ n\geq 1.$$
Then, thanks to (\ref{Gcompensatoraftertau}), we derive
\begin{eqnarray*}
&& U^{p,\mathbb G}=\displaystyle\lim_{n\longrightarrow+\infty}\left(U_n\right)^{p,\mathbb G}=
\displaystyle\lim_{n\longrightarrow +
\infty}(1-Z_{-})^{-1}I_{\Lbrack \tau,+\infty\Rbrack}\is\Bigl(I_{\{ \widetilde Z\leq 1-{1\over{n}}\}}\is V\Bigr)^{p,\mathbb F}.\end{eqnarray*}

This clearly implies (\ref{GFcompensator1}).\\
{\bf 2)} It is easy to see that it is enough to prove the assertion for the case when $V$ is nondecreasing.
 Thus, suppose that $V$ is nondecreasing. It obvious that $(1-\widetilde Z)\is V\in {\cal A}^+_{loc}(\mathbb F)$
  implies $I_{\Lbrack \tau,+\infty\Rbrack}\is V\in {\cal A}^+_{loc}(\mathbb G)$. Hence, for the rest of this part,
   we focus on proving the reverse. Suppose $I_{\Lbrack \tau,+\infty\Rbrack}\is V\in {\cal A}^+_{loc}(\mathbb G)$.
    Then, there exists a sequence $\mathbb G$-stopping times that increases to infinity and
     $\left(I_{\Lbrack \tau,+\infty\Rbrack}\is V\right)^{\sigma_n^{\mathbb G}}\in {\cal A}^+(\mathbb G).$
      Thanks to Proposition \ref{lemma:predsetFG}-(c), we obtain a sequence of $\mathbb F$-stopping times,
       $(\sigma^{\mathbb F}_n)_{n\geq 1}$, that increases to infinity and $\sigma^{\mathbb G}_n\vee\tau=\tau\vee\sigma_n^{\mathbb F}$.
        Therefore, we get
  $\left(I_{\Lbrack \tau,+\infty\Rbrack}\is V\right)^{\sigma_n^{\mathbb G}}\equiv
  \left(I_{\Lbrack \tau,+\infty\Rbrack}\is V\right)^{\sigma_n^{\mathbb F}}$ and hence
\begin{equation}\label{mainequality00}
E\left((1-\widetilde Z)\is V_{\sigma_n^{\mathbb F}}\right)=E\left(I_{\Lbrack \tau,+\infty\Rbrack}\is V_{\sigma_n^{\mathbb F}}\right)
=E\left(I_{\Lbrack \tau,+\infty\Rbrack}\is V_{\sigma_n^{\mathbb G}}\right)<+\infty.\end{equation}
This proves that the process $(1-\widetilde Z)\is V$ belongs to ${\cal A}^+_{loc}(\mathbb F)$, and the proof of assertion (c) is achieved.\\
3) The proof of assertion (d) follows all the steps of the proof of assertion (c), except (\ref{mainequality00}) which takes the form of
$$
E\left((1-Z_{-})\varphi\is V_{\sigma_n^{\mathbb F}}\right)=E\left(I_{\Lbrack \tau,+\infty\Rbrack}\varphi\is V_{\sigma_n^{\mathbb F}}\right)
=E\left(I_{\Lbrack \tau,+\infty\Rbrack}\varphi\is V_{\sigma_n^{\mathbb G}}\right)<+\infty$$
instead due to the predictability of $V$. This proves that $I_{\Lbrack \tau,+\infty\Rbrack}\varphi\is V\in{\cal A}^+_{loc}(\mathbb G)$ 
if and only if $(1-Z_{-})\varphi\is V\in{\cal A}^+_{loc}(\mathbb F)$, while  the equivalence 
$(1-Z_{-})\varphi\is V\in{\cal A}^+_{loc}(\mathbb F)$ iff  $I_{\{Z_{-}<1\}}\varphi\is V\in{\cal A}^+_{loc}(\mathbb F)$
 follows from the fact that  $(1-Z_{-})^{-1}I_{\{Z_{-}<1\}}$ is $\mathbb F$-locally bounded (see Proposition \ref{lemma:predsetFG1}-(b)). This 
 ends the proof of assertion (d) and the proof of the lemma as well. \qed
\end{proof}

\section{Proof of Proposition \ref{NUPBR4T(S)/S1}} The proof relies essentially on an adequate application(s) of Theorem \ref{mgDensityCharac}.
To this end, we consider
 $$Z^{(\psi)}:={\cal E}(N^{(\psi)})\ \ \mbox{where}\ \ N^{(\psi)}:=(\psi-1)I_{\{\psi>0\}}\star(\mu-\nu),$$
and remark that $Z^{(\psi)}$ is a positive $\mathbb F$-local martingale. Hence, in virtue of Proposition \ref{NUPBRLocalization},
we can assume without loss of generality that $Z^{(\psi)}$ is a uniformly
 integrable martingale,
 and put $Q:=Z_{\infty}^{\psi}\cdot P$ (probability measure equivalent to $P$). Thus, the rest of the proof applies Theorem \ref{mgDensityCharac}
 to both models $\left(S^{(1)},Q,\mathbb F\right)$ and $\left(S^{(0)}:=I_{\{Z_{-}<1\}}\is {\cal T}_a(S), \mathbb F\right)$,
  and compare the conditions (\ref{Assump1})-(\ref{Assump2})-(\ref{Assump3}) associated to these models.
  To this end, we need to derive the predictable characteristics, $\left(b^{(1,Q)},c^{(1,Q)}, F^{(1,Q)}(dx),A^{(1,Q)}\right)$,
   of $\left(S^{(1)},Q,\mathbb F\right)$. Then, the $\mathbb F$-compensator of $\mu^{(1)}$ under $Q$ is
   $\nu^{(1,Q)}(dt,dx):=\psi(x)\nu(dt,dx)$ that coincides with $\nu^{(0)}$. Then,  using
  $(hI_{\{\widetilde Z=1>Z_{-}\}}\star\mu)^{p,\mathbb F}=h(1-\psi)I_{\{Z_{-}<1\}}\star\nu$ and that the compensator  under $Q$ of
   $H\star\mu$ --for any nonnegative and $\widetilde{\cal P}(\mathbb F)$-measurable $H$-- is $H(I_{\{\psi=0\}}+\psi)\star\nu$,
   we get
 \begin{eqnarray*}
 b^{(1,Q)}:=b-\int h(x)(\psi(x)-1)F(dx),\ \ \ \ c^{(1,Q)}:=c,\\
  F^{(1,Q)}(dx):=\psi(x)F(dx),\ \ \ \ \ A^{(1,Q)}:=I_{\{Z_{-}<1\}}\is A.
 \end{eqnarray*}
By comparing the above quadruplet to the quadruplet given in (\ref{predCharac4S0}), we conclude that
the two models, $\left(S^{(1)},Q,\mathbb F\right)$ and $\left(S^{(0)}:=I_{\{Z_{-}<1\}}\is {\cal T}_a(S), \mathbb F\right)$,
 have the same predictable characteristics.
  Hence, the proof of the proposition follows immediately from applying Theorem  \ref{mgDensityCharac} and
 using the same pair of $\mathbb F$-predictable functionals (i.e. $(\beta^{(0)},f^{(0)})=(\beta^{(1)},f^{(1)})$),
 as the conditions (\ref{Assump1})-(\ref{Assump2})-(\ref{Assump3}) are exactly the same for both models.



\begin{acknowledgements}
The research of Tahir Choulli  and Jun Deng is supported financially by the
Natural Sciences and Engineering Research Council of Canada,
through Grant G121210818. The research of Anna Aksamit and Monique Jeanblanc is supported
by Chaire Markets in transition, French Banking Federation.\\
The second author, TC, is very grateful to Monique Jeanblanc and LaMME (Evry Val d'Essonne University), where this work started and was completed, for their welcome and their hospitality.
\end{acknowledgements}





\begin{thebibliography}{1}

\bibitem{aksamit/choulli/deng/jeanblanc} Aksamit, A., Choulli, T., Deng, J., and Jeanblanc, M.:
\newblock{\emph{Non-Arbitrage under Random Horizon for Semimartingale Models}},
\newblock{http://arxiv.org/abs/1310.1142}, 2014.


\bibitem{aksamit/choulli/deng/jeanblancBis} Aksamit, A., Choulli, T., Deng, J.,
and Jeanblanc, M.:\newblock{\emph{Arbitrages in a progressive
enlargement setting}},\newblock{Arbitrage, Credit and
Informational Risks, Peking University Series in Mathematics –
Vol. 6,  55-88,  World Scientific (2014)}


\bibitem{aksamit} Aksamit, A., Random times, enlargement of filtration and arbitrages, PhD Thesis, Evry-Val d'Essonne  University, June 2014.

\bibitem{anselstricker1994} Ansel, J.-P., Stricker, C.: \newblock{Couverture des actifs contingents. Ann. Inst. Henri
Poincar\'{e}} 30, 303-315 (1994).

\bibitem{assmussen} Assmussen, S. (2000): Ruin probability. Scientific World.






\bibitem{Barlow} Barlow M. T.,
\newblock{\em Study of a filtration expanded to include an honest time,}
 \newblock{Z. Wahrscheinlichkeitstheorie verw. Gebiete, 44, 307-323}, 1978.


\bibitem{ChoulliStricker96}Choulli  T. and Stricker C.,
 \newblock{Deux applications de la d\'{e}composition de Galtchouk-Kunita-Watanabe,}
 \newblock{S\'eminaire  de Probabilit\'es XXX.} Editors: J. Az\'ema, M. Yor, M. Emery;
 Lecture Notes Vol. 1626, Springer, 1996.

\bibitem{choullistricker07} Choulli, T., Stricker, C., and Li J.:
\newblock{Minimal Hellinger martingale measures of order $q$.} Finance and Stochastics 11.3 (2007): 399-427.

\bibitem{Choulli/Deng/Ma}
Choulli  T., Deng, J. and Ma, J. (2015): \newblock{How
non-arbitrage, viability and num\'eraire portfolio are related}.
Finance and Stochastics, 19 (4): 719-741.





\bibitem{Dellacherie72} Dellacherie, C.
\newblock{Capacit{\'e}s et processus stochastiques}, Springer, 1972.


\bibitem{dm2} Dellacherie, C. and Meyer, P-A., \newblock{Probabilit{\'e}s et Potentiel, chapitres V-VIII},Hermann, Paris, 1980,
English translation : Probabilities and  Potentiel, chapters
V-VIII, North-Holland, (1982).

\bibitem{DMM}  Dellacherie, M., Maisonneuve, B. and  Meyer, P-A.  (1992),
{Probabilit\'es et Potentiel, chapitres XVII-XXIV: Processus de
Markov (fin), Compl\'ements de calcul stochastique}, Hermann, Paris.













\bibitem{Deng2014} Deng J., \newblock{Essays on Arbitrage Theory for a Class fo Informational Markets}, \newblock{PhD. Thesis, University of Alberta}, May 2014.

\bibitem{fjs}Fontana, C. and Jeanblanc, M. and   Song, S. (2014)
\newblock{\emph{On arbitrages arising with honest times}}. Finance and Stochastics
 18: 515-543.




 \bibitem{Yanbook}He, S. W., Wang, C. K.,  Yan, J. A.: Semimartingale theory and stochastic calculus. CRC Press (1992).

\bibitem{Imk}
 Imkeller, P.  (2002), \newblock{\emph{Random times at which insiders can have free lunches}},  Stochastics and Stochastic Reports , 74(1-2): 465--487.

\bibitem{Jacod} Jacod, J., \newblock{\emph{Calcul Stochastique et Probl\`emes de Martingales}}, Lecture Notes in Mathematics, Vol. 714,(1979).

\bibitem{JS}  Jacod, J. and   Shiryaev, A.N.,
\newblock{Limit theorems for stochastic Processes},
Springer Verlag,
2003


\bibitem{Jeu}  Jeulin, T.  (1980), {\emph{Semi-martingales et Grossissement d'une Filtration}},
 Lecture Notes in Mathematics, vol. 833, Springer, Berlin - Heidelberg - New York.




\bibitem{kardaras12} Kardaras, C.:  Market viability via absence of arbitrage of the first kind. Finance and stochastics, 16(4), 651-667  (2012).
























\bibitem{Takaoka}  Takaoka, K., Schweizer, M.  \newblock{ \emph{A Note on the Condition of No Unbounded Profit with Bounded Risk}},  Finance and Stochastics,18, 393-405, 2014.

\bibitem{Yor} Yor,  M., \newblock{\emph{Grossissement d'une filtration et semi-martingales: th\'eor\`emes g\'en\'eraux}} S\'eminaire de probabilit\'es, tome XII,
 Lecture Notes in Mathematics, Vol. 649, (1978), p.61-69.


\end{thebibliography}
\end{document}